\DeclareMathOperator*{\argmax}{arg\!\max}
\DeclareMathOperator*{\argmin}{arg\!\min}
\newcommand{\set}[1]{\left\{#1\right\}}
\newcommand{\alloc}{\mathbf{X}}
\newcommand{\bundle}{\mathrm{X}}
\newcommand{\allocSet}{\mathcal{X}}
\newcommand{\allocALG}{\alloc^{\texttt{ALG}}}
\newcommand{\allocALGBundle}{\bundle^{\texttt{ALG}}}
\newcommand{\OPT}{\alloc^{\texttt{OPT}}}
\newcommand{\OPTBundle}{\bundle^{\texttt{OPT}}}
\newcommand{\NSW}{\texttt{NSW}}
\newcommand{\twoalgoDB}{\texttt{EFx-2A}}
\newcommand{\threealgoDB}{\texttt{EFx-3A}}
\newcommand{\setaside}{s}
\newcommand{\threshold}{\alpha}
\newcommand{\vmax}{v^{\max}}
\newcommand{\vargmax}{S^{\max}}
\newcommand{\EFxGraph}{G_{EFx}}
\newcommand{\leximin}{\texttt{leximin++}}
\newtheorem{theorem}{Theorem}
\newtheorem{definition}{Definition}
\newtheorem{lemma}{Lemma}
\title{EFx Budget-Feasible Allocations with High Nash Welfare}
\date{}
\author{Marius Garbea\thanks{mgarbea@drexel.edu}}
\author{Vasilis Gkatzelis\thanks{gkatz@drexel.edu}}
\author{Xizhi Tan\thanks{xizhi@drexel.edu}}
\affil{Drexel University, Computer Science}
\begin{document}

\maketitle

\begin{abstract}
    We study the problem of allocating indivisible items to budget-constrained agents, aiming to provide fairness and efficiency guarantees. Specifically, our goal is to ensure that the resulting allocation is envy-free up to any item (EFx) while minimizing the amount of inefficiency that this needs to introduce. We first show that there exist two-agent problem instances for which no EFx allocation is Pareto-efficient. We, therefore, turn to approximation and use the (Pareto-efficient) maximum Nash welfare allocation as a benchmark. For two-agent instances, we provide a procedure that always returns an EFx allocation while achieving the best possible approximation of the optimal Nash social welfare that EFx allocations can achieve. For the more complicated case of three-agent instances, we provide a procedure that guarantees EFx, while achieving a constant approximation of the optimal Nash social welfare for any number of items.
\end{abstract}

\section{Introduction}
One of the central open challenges in the fair division literature is the (approximately) envy-free allocation of indivisible goods \cite{ALMW2022, ABRV2022}, i.e., goods that cannot be shared among multiple agents. Envy-freeness is a very natural and well-motivated goal: an allocation of goods among a group of agents is envy-free when no agent would prefer the set of goods allocated to some other agents over the ones allocated to her. However, the fundamental challenge that arises when the goods are indivisible is that envy-freeness may only be achievable by discarding all the goods, which provides no value to any of the agents. An illustrative example of this fact arises when two agents compete over a single indivisible good: if the good is allocated to any of the agents, the other agent is bound to envy her.

To overcome this obstacle, the currently most vibrant line of research in fair division focuses on relaxations of envy-freeness, aiming to provide approximations of this natural fairness property without sacrificing too much value. Two such relaxations that have dominated these efforts are \emph{envy-freeness up to some good} (EF1) (defined explicitly by Budish \cite{Budish2010} and implicitly in the earlier work of Lipton et al. \cite{LMMS2004}) and the more demanding \emph{envy-freeness up to any good} (EFx) (defined by Caragiannis et al. \cite{CKMPSW2019}). Both of these notions ensure that any envy from agent $i$ toward agent $j$ would disappear if $i$ were to remove just one of the goods allocated to agent $j$ (her favorite good in the case of EF1 and \emph{any} good in the case of EFx). Therefore, any envy that may exist in EF1 and EFx allocations is up to just one good, which nicely sidesteps the aforementioned single-good illustrative example. In fact, for the well-studied case where the agents' valuations are additive across the goods (i.e., each agent $i$ has a value $v_i(g)$ for each good $g$ and a value $v_i(S)=\sum_{g\in S} v_i(g)$ for any set $S$ of goods), a notable result by Caragiannis et al. \cite{CKMPSW2019} shows that returning the highly appealing allocation that maximizes the Nash social welfare ($\NSW$), i.e., the geometric mean of the agents' values, always satisfies EF1. Therefore, unlike envy-freeness, EF1 can be combined with Pareto efficiency.
EFx allocations, on the other hand, are much more elusive: if all the goods need to be allocated, then for instances involving more than three agents we do not even know if EFx allocations exist. For instances with up to three agents, Chaudhury et al. \cite{CGM2020} proved the existence of EFx allocations using a highly non-trivial procedure that does not guarantee Pareto efficiency. 

Rather than assuming that all the goods need to be allocated, a recent line of work has instead proposed procedures that may donate some of the goods, as long as the allocation of the remaining goods (i.e., the ones not donated) satisfies EFx as well as some approximate efficiency guarantees \cite{CGH2019,CKMS2021,BCFF2022}. Notably, using this approach for the case of additive valuations, Caragiannis et al. \cite{CGH2019} were able to produce EFx allocations whose Nash social welfare is at least half of the optimal Nash welfare. Therefore, these allocations simultaneously guarantee fairness, in the form of EFx, and efficiency, by approximating a highly desirable Pareto-efficient allocation (unlike other Pareto-efficient allocations, the one maximizing the Nash social welfare is known to provide a natural balance between fairness and efficiency). In this paper, we study the extent to which analogous results that combine fairness and efficiency can be achieved beyond the special case of additive valuations, and we study the more demanding setting where the agents face budget constraints. 

The fair allocation of indivisible goods among agents with additive valuations but hard budget constraints was first studied by Wu et al. \cite{WLG2021}. In this setting, each good $g$ is associated with a cost $c(g)$, and each agent $i$ has a budget $B_i$ which restricts her to allocations with a total cost within her budget (i.e., she can be allocated a set of goods $S$ only if $\sum_{g\in S} c(g) \leq B_i$). These ``budgets'' can correspond to actual monetary budgets, but they can also capture several natural space or time limitations beyond that (e.g., when each agent faces, possibly different, capacity constraints for storing the goods allocated to her). Wu et al. \cite{WLG2021} showed that maximizing the Nash welfare subject to these budget constraints does not guarantee EF1, in contrast to the setting without budgets~\cite{CKMPSW2019}. Restricting their attention to the Nash welfare maximizing allocation, they showed that it is approximately EF1 in the sense that agent $i$'s value for $j$'s bundle after removing a good from it can be no more than 4 times her value for her own bundle. Furthermore, they showed that this approximate EF1 bound is tight. Subsequent work on this setting remained focused on the same allocation, proving improved approximate EF1 guarantees for the special case where all agents have identical valuations \cite{GLW2021}, or approximate-EFx guarantees for the special case where the values are all binary (i.e., $v_i(g)\in \set{0,1}$ for all $g$) \cite{DGLLNXZ2022}. Very recent work by Barman et al. \cite{BKSS2022} has shown the existence of EF2 allocations (in which no agent envies another agent if he could remove 2 goods from everyone's bundle) even in the budgeted setting.

\vspace{8pt}
\textbf{Our Results.}
We study the fair allocation of indivisible goods among agents with budget constraints and rather than restricting our attention to the maximum Nash welfare allocation, which does not guarantee EF1 or EFx, we instead guarantee EFx exactly and follow the approach of Caragiannis et al. \cite{CGH2019}. That is, we do not assume that all goods need to be allocated (which, in fact, may be infeasible in a setting with budgets) and instead provide efficiency guarantees by proving that our allocations approximate the Nash welfare optimal outcome, which is Pareto-efficient. In the presence of budget constraints, even achieving EF1 without sacrificing too much efficiency becomes a non-trivial problem, which is in contrast to the trivial ways in which EF1 can be achieved in the unrestricted additive valuations case (e.g., using a round-robin procedure). In fact, our first result shows that even for instances involving just three goods and two agents with equal budgets, simultaneously guaranteeing EF1 and Pareto efficiency is infeasible. We then complement this negative result with two positive results for instances involving an arbitrary number of goods and two or three agents of arbitrary budgets.

In Section~\ref{sec:2agents-instance} we focus on instances involving two agents and provide a procedure that returns an EFx allocation with a $\sqrt{0.5}\approx 0.7$ approximation of the optimal Nash welfare. We then show this procedure is optimal in a strong sense: it achieves the best possible approximation that one can guarantee not just for EFx allocations, but even for the more permissive EF1 guarantee. This procedure starts from the budget-feasible allocation with the optimal Nash welfare and, if this is not EFx already, it partitions the bundle of the ``envied'' agent into two sub-bundles, so that an EFx allocation can be achieved by matching the two agents to two of the three bundles.

In Section~\ref{sec:3agents-alg} we study instances involving three agents. Even if all these agents have the same budget, this is significantly more complicated than unrestricted additive valuations because of the limited ways in which goods can be reallocated across agents' bundles without violating budget feasibility. When each agent's budget is different, this obstacle becomes even more pronounced, as one agent's feasible bundle may be infeasible for another. We provide a procedure that first considers the extreme solution of letting the agents arrive in increasing order of their budgets and choosing their favorite budget-feasible bundle among the remaining goods. We then observe that if in the resulting allocation an agent $i$ envies some agent $j$ who arrived earlier, this allows us to lower $i$'s budget to be equal to agent $j$'s, without sacrificing too much efficiency. Using this intuition, our procedure either reaches an EFx allocation with different budgets or reduces the problem to an instance with equal budgets. Our main result in this section shows that the EFx allocations returned by our procedure always guarantee a constant approximation of the optimal Nash welfare (i.e., an approximation that does not grow with the number of goods or the size of the budgets).\label{sec:introduction}

\textbf{Additional related work.}
Following the approach introduced by Caragiannis et al. \cite{CGH2019}, i.e., donating goods to limit the envy, recent work has also aimed to achieve EFx while minimizing the \emph{number} of donated goods \cite{CKMS2021, BCFF2022}. 
This is in contrast to the results of Caragiannis et al. \cite{CGH2019} and our results in this paper, which do not optimize for the number of goods donated, as long as the resulting allocation is approximately efficient. Note that minimizing the number of donated goods does not provide any efficiency guarantees in general (e.g., even donating a single good could lead to very low efficiency if that was a highly valued good). Some of this work also ensures that no agent envies the set of donated, which limits the amount of donated value. In fact, even an EFx allocation that donates none of the goods is not necessarily efficient: e.g., if each agent values a distinct subset of goods, an allocation that partitions each such subset ``equally'' among all the agents is envy-free, yet inefficient. For the special case where all the agents have just two values for the goods and additive values across goods, Amanatidis et al. \cite{ABFHV2021} showed that maximizing the Nash welfare does yield EFx allocations. In general, however, efficiency and fairness cannot be fully achieved simultaneously.

Chaudhury et al. \cite{CGM2021} consider the more general class of subadditive valuations and focus on allocations that are approximately, rather than exactly, EFx and that approximate the Nash social welfare optimal outcome. Feldman et al. \cite{FMP2023} then capture the tension between EFx and the Nash social welfare by providing tight bounds regarding the trade-off of the approximations achievable for these two notions, both for additive and subadditive valuations.\label{sec:related-work}

\section{Preliminaries}
Given a set of indivisible goods $M = \set{1,\ldots,m}$, we seek to allocate (a subset of) these goods to a group of agents $N = \set{1,\ldots,n}$, where each good $g \in M$ has a cost $c(g)\geq 0$ and each agent $i \in N$ has a budget $B_i\geq 0$. For each subset of goods $S \subseteq M$, we let $c(S) = \sum_{g \in S} c(g)$ represent the total cost of $S$ and say that $S$ is \emph{budget-feasible for agent $i$} if $c(S)\leq B_i$. Each agent $i$ has a value $v_i(g)\geq 0$ for each good $g$ and her value for being allocated a budget-feasible subset of goods $S\subseteq M$ is additive, i.e., $v_i(S) = \sum_{g \in S} v_i(g)$. Given a set of goods $G$ that may not be budget-feasible for agent $i$, we use 
$\vmax_i(G)= \max_{S \subseteq G:c(S) \leq B_i} v_i(S)$ to denote the maximum value that the agent can achieve through a subset $S$ of $G$ that is budget-feasible for her. We also use $\vargmax_i(G)$ to denote the subset of goods in $G$ that is budget-feasible for $i$ and achieves the maximum value, i.e. $\vmax_i(G) = v_i(\vargmax_i(G))$.

An allocation $\alloc=(\bundle_1, \ldots, \bundle_n)$ determines what subset of goods $\bundle_i\subseteq M$ each agent $i\in N$ gets. Since the goods are indivisible, these sets need to be disjoint, i.e., $\bundle_i \cap \bundle_j=\emptyset$ for all $i,j\in N$. We say that an \emph{allocation is budget-feasible} if $c(X_i)\leq B_i$ for all $i\in N$. Note that we do not require an allocation $\alloc$ to be complete (i.e., that every good is allocated to some agent); in fact, it may be infeasible to do that in a budget-feasible allocation. 

Given the agents' budget constraints, an allocation $\alloc$ is \emph{envy-free} if no agent $i$ would improve her value by replacing their bundle with a subset of another agent's bundle that is budget-feasible for $i$. Formally, for any two agents $i, j \in N$ 
we have $v_i(\bundle_i)\geq \vmax_i(\bundle_j)$,
so $i$ has no ``justifiable complaints.'' It is well-known that, even in the absence of any budget constraints, it may be impossible to allocate indivisible goods in an envy-free way, while ensuring that agents receive any positive value.\footnote{For example, consider a simple instance with $n=2$ agents and $m=2$ goods, such that both agents $i\in \set{1,2}$ value the first good $v_i(1)=100$ and the second good $v_i(2)=50$. If any agent is allocated the first good, the other agent will envy them, so the first good needs to remain unallocated. Given that the first good remains unallocated, though, this similarly implies that whoever receives the second good would be envied, so both goods would have to remain unallocated, leading to a value of 0 for both agents.} To address this issue, the fair division literature has introduced the following approximate envy-freeness notions, adjusted to our setting:
\begin{definition}
    (\textbf{EF1}) An allocation $\alloc$ 
    is \emph{envy-free up to one good} (EF1) with respect to budgets if for any two agents $i, j \in N$ and every $S \subseteq \bundle_{j}$ with $c(S) \leq B_i$, there \emph{exists} a good $g \in S$ such that 
    $$v_i(\bundle_i) \geq v_i(S \setminus \set{g}).$$
\end{definition}

\begin{definition}
    (\textbf{EFx}) An allocation $\alloc$ 
    is \emph{envy-free up to any good} (EFx) with respect to budgets if for any two agents $i, j \in N$, every $S \subseteq \bundle_{j}$ with $c(S) \leq B_i$ and \emph{all} goods $g \in S$, we have 
    $$v_i(\bundle_i) \geq v_i(S \setminus \set{g}).$$
\end{definition}

We also use the term \emph{EFx-envy} to refer to a violation of the EFx property for some pair of agents, as follows:

\begin{definition}
    In allocation $\alloc$, an agent $i$ \emph{EFx-envies} another agent $j$ if there exists $S \subseteq \bundle_{j}$ with $c(S) \leq B_i$ and $g \in S$ such that 
    $$v_i(\bundle_i) < v_i(S \setminus \set{g}).$$
\end{definition}

Given these envy notions, we define the EFx-feasibility graph $\EFxGraph$ which indicates the bundles we could assign to each agent such that they would not EFx-envy any other agents.
\begin{definition}
    Given a subset of agents $N' \subseteq N$ and a set $\mathcal{S}$ of disjoint bundles of goods (i.e., $S_i\cap S_j=\emptyset$ for all $S_i,S_j\in \mathcal{S}$), the EFx-feasibility graph $\EFxGraph(N', \mathcal{S})$ is an undirected bipartite graph whose edges are defined as follows:
    $$E(\EFxGraph) = \set{(i, S_j)~:~\vmax_i(S_j) \geq \vmax_i(S_k \setminus \set{g})~\forall k,~\forall g \in S_k}$$
\end{definition}
In other words, if there is an edge between some agent $i$ and bundle $S_j$ in $\EFxGraph$, then allocating $S_j$ to agent $i$ will satisfy the EFx constraint for her, irrespective of how the other bundles are allocated. Therefore, a perfect matching in $\EFxGraph$ corresponds to an EFx allocation (and hence it is ``EFx-feasible'').

Note that it is always possible to remove any type of envy by keeping all the items unallocated. However, this comes at a great cost in terms of efficiency. To avoid this, we combine envy-freeness guarantees (which capture fairness) with (approximate) guarantees in terms of Pareto efficiency:
\begin{definition}
    An allocation $\alloc$ is Pareto-efficient if there does not exist any other allocation $\alloc'$ such that $v_i(\bundle_i') > v_i(\bundle_i)$ for some agent $i$, and $v_j(\bundle_j') \geq v_j(\bundle_j)$ for all agents $j\in N$.
\end{definition}

Our first result in Section~\ref{sec:2agents-instance} shows that even for an instance involving just three items and two agents with the same budget, it is impossible to simultaneously guarantee EFx (in fact, not even EF1) and Pareto efficiency. We, therefore, turn to approximation and, among the multiple Pareto-efficient allocations, we use as a benchmark the particularly appealing Pareto-efficient allocation that maximizes the Nash social welfare ($\NSW$) objective:
\begin{definition}
    The Nash social welfare of an allocation $\alloc$ is
    \[
        \NSW(\alloc) = \left(\prod_{i \in N} v_i(\bundle_i) \right)^{1/n}.
    \]
\end{definition}

Specifically, we use the term $\rho$-efficiency to determine how closely our outcome approximates the optimal Nash social welfare.

\begin{definition} If $\allocSet$ is the set of all budget-feasible allocations and $\OPT = \argmax_{\alloc \in \allocSet} \NSW(\alloc)$ is a maximum Nash welfare allocation, then $\alloc\in \allocSet$ is \emph{$\rho$-efficient}, i.e., a $\rho$ approximation, if 
$$\NSW(\alloc) \geq \rho \cdot \NSW(\OPT).$$ 
\end{definition}

Note that, since our benchmark, $\OPT$, is Pareto-efficient, any \emph{$\rho$-efficient} allocation is also approximately Pareto-efficient, with approximation factor $\rho$. Without loss of generality, we normalize the agents' valuations so that for each agent $i$, their value in the Nash welfare maximizing solution is equal to $1$, i.e., $v_i(\OPT_i) = 1$.\footnote{Note that since the $\NSW$ objective is scale-independent, scaling an agent's values for each item by the same constant does not affect the Nash welfare optimal outcome, just its value, so this is without loss of generality.}

Finally, in section \ref{sec:3agents-alg}, we use the term \emph{monopoly value} of some given agent $i$ with respect to a given budget $B$ to refer to the maximum value that this agent could achieve if she were allowed to choose \emph{any} bundle $S\subseteq M$ with cost at most $B$.
\begin{definition}
    For each agent $i$, let $m_i(B)$ denote the \emph{monopoly value of agent $i$ with respect to budget $B$}, i.e. 
    \[
        m_i(B) = \max_{S \subseteq M~:~c(S) \leq B} v_i(S).
    \]
\end{definition}\label{sec:preliminaries}

\section{Instances Involving Two Agents}\label{sec:2agents-instance}
To exhibit the difficulties that arise in the presence of budgets, we first show that simultaneously guaranteeing EFx (in fact, even EF1) and Pareto-efficiency is infeasible. We exhibit this using an instance with just three items and two agents with equal budgets. Furthermore, using the same instance we prove an impossibility result regarding the best approximation of the maximum Nash welfare that is achievable by EF1 allocations.

\begin{theorem}\label{thm:lowerBound}
    There exists a problem instance with two agents of equal budget such that no budget-feasible allocation is both EF1 and Pareto-efficient. Furthermore, for the same instance, no budget-feasible allocation is both EF1 and $(\sqrt{1/2} + \varepsilon)$-efficient, for $\varepsilon > 0$.
\end{theorem}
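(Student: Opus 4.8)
The plan is to construct a small instance explicitly and verify both claims by case analysis. Since we want an instance with two agents of equal budget $B$ and three items, the natural guess is to make two items ``expensive'' and one item ``cheap'', so that budget feasibility forces a nontrivial tradeoff. Concretely, I would try something like $B_1 = B_2 = B$, with two items $g_1, g_2$ each of cost $B$ (so each can be held alone but no agent can hold both), and a third item $g_3$ of cost $0$ (so it is always feasible to add). The values should be chosen so that the unique Pareto-efficient structure conflicts with EF1. A first attempt: both agents value $g_1$ and $g_2$ highly and roughly symmetrically, and $g_3$ is a small ``sweetener''; then any Pareto-efficient allocation must give $g_1$ to one agent and $g_2$ to the other (leaving an item unallocated is Pareto-dominated since $g_3$ is free, and giving both big items away while one agent gets nothing is dominated), plus allocate $g_3$ to somebody — but whichever agent gets $g_3$ in addition to a big item will be envied by the other up to more than one good, or the envy will not vanish after removing a single good. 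I would tune the numbers (e.g.\ make the two big items have slightly different values for the two agents, or make $g_3$'s value large enough relative to the gap) so that EF1 is genuinely violated in every Pareto-efficient allocation.

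The key steps, in order: (1) pin down the instance — costs, budgets, and the value matrix $v_i(g)$ — and normalize if convenient; (2) enumerate all budget-feasible allocations (there are only finitely many, and with three items and the cost structure above, very few ``maximal'' ones), and identify which are Pareto-efficient by direct domination checks; (3) for each Pareto-efficient allocation, exhibit an agent $i$, an agent $j$, a feasible subset $S \subseteq \bundle_j$, and show that for \emph{every} $g \in S$ we still have $v_i(\bundle_i) < v_i(S \setminus \{g\})$, which contradicts EF1 — this establishes the first sentence; (4) for the second sentence, compute $\NSW(\OPT)$ on this instance (identify the Nash-welfare-maximizing budget-feasible allocation, which need not be EF1) and then, over all \emph{EF1} budget-feasible allocations, compute the maximum Nash welfare they attain, and show the ratio is exactly $\sqrt{1/2}$; conclude that no EF1 allocation can be $(\sqrt{1/2}+\varepsilon)$-efficient.

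For step (4) the arithmetic should fall out naturally from the symmetry of the instance: I expect the best EF1 allocation to be forced to ``split'' value unevenly — one agent ends up with value $1$ (say the full $\OPT$-style bundle) while the EF1 constraint caps the other agent's value at $1/2$ of what they'd get in $\OPT$, so $\NSW = \sqrt{1 \cdot (1/2)} = \sqrt{1/2}$ relative to $\NSW(\OPT) = 1$ after normalization. This also explains why $\sqrt{1/2}$ is the ``magic constant'' the positive result in this section matches, so the instance should be the tight extremal one. I would double-check that the Nash-welfare-optimal allocation really has $\NSW = 1$ under the normalization $v_i(\OPT_i)=1$ and that no EF1 allocation beats $\sqrt{1/2}$ by trying, e.g., the symmetric allocation that gives each agent one big item (and checking whether adding $g_3$ to one of them breaks EF1 or whether leaving $g_3$ unallocated is what EF1 forces).

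The main obstacle I anticipate is choosing the values so that \emph{both} claims hold simultaneously with the \emph{same} instance and so that the second bound is exactly tight (not merely some constant below $1$): the first claim only needs EF1 to fail on the Pareto frontier, but the second needs the precise extremal tradeoff, so the free parameter (roughly, the value of $g_3$ relative to the big items, or the asymmetry between the two big items across the two agents) has to be set at exactly the threshold. Getting that threshold right — and verifying that a slightly-too-generous EF1 allocation is genuinely infeasible rather than just suboptimal — is the delicate part; everything else is a finite, mechanical check over the handful of budget-feasible allocations.
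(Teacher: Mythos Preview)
Your overall plan --- construct a three-item, two-agent instance explicitly, enumerate the budget-feasible allocations, locate the Pareto frontier, and compute the best EF1 Nash welfare against $\NSW(\OPT)$ --- is exactly what the paper does. The gap is in the specific cost structure you commit to. With two items of cost $B$ and one free item, no choice of values will make the first claim go through: every maximal allocation has the form (one big item plus $g_3$, the other big item), and for the singleton agent to fail EF1 she must value \emph{both} remaining items (the other big one \emph{and} $g_3$) strictly more than her own big item. Writing out these conditions for all four maximal allocations, the EF1-failure conditions for the two allocations in which agent~2 holds the singleton are mutually exclusive (they require $v_2(g_1)<v_2(g_2)$ and $v_2(g_2)<v_2(g_1)$ respectively), and likewise for agent~1; so at most two of the four maximal allocations can fail EF1, and a short domination check shows that at least one of the remaining two is always Pareto-efficient. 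Tuning values cannot fix this --- the obstruction is structural.

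The paper inverts your cost pattern: two items of cost $1/2$ and one item of cost $1$, with budgets equal to $1$. The point is that now the two cheap items can sit \emph{together} in one agent's bundle, and \emph{that} two-item bundle is the envied one. Concretely, agent~1 values each cheap item at $1/2$ and the expensive item at $0$; agent~2 values each cheap item at $1+\varepsilon$ and the expensive item at $1$. Then $\OPT$ gives $\{1,2\}$ to agent~1 and $\{3\}$ to agent~2 with $\NSW=1$, but agent~2 EF1-envies agent~1 since removing either cheap item still leaves value $1+\varepsilon>1$. Any EF1 allocation must deny agent~1 one of the cheap items, and since agent~2 cannot afford a cheap item together with item~3, the best EF1 outcome is $(\{1\},\{2\})$ with $\NSW=\sqrt{(1/2)(1+\varepsilon)}$, giving the $\sqrt{1/2}$ bound and simultaneously exhibiting Pareto-inefficiency (item~3 is dropped). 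So the ``delicate part'' you anticipated is not the value tuning but getting the costs right: you need two items that \emph{jointly} fill the budget, not two that each fill it alone.
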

\begin{proof}
Consider the following instance with two agents, where $B_1=B_2=1$, and the set of three items $\{1, 2, 3\}$, whose costs are:
    \[
        c(g) = \begin{cases}
            1/2, &\text{for } g \in \set{1,2}\\
            1, &\text{for } g = 3
        \end{cases}
    \]
    The values of the items from each agent's perspective are as follows:
    \[
        v_1(g) = \begin{cases}
            1/2, &g \in \set{1,2}\\
            0, &g = 3
        \end{cases}
        \hspace{1cm}
        v_2(g) = \begin{cases}
            1+\varepsilon, &g \in \set{1,2}\\
            1, &g = 3.
        \end{cases}
    \]
    For any $\varepsilon<1$, the budget-feasible allocation that maximizes $\NSW$ allocates items $\set{1,2}$ to agent 1 and item $\set{3}$ to agent 2, leading to a $\NSW$ of 1. To verify this, note that if agent 1 were to receive none of these two items her value, and thus also the $\NSW$, would be 0. Also, if agent 1 received just one of these two items, for a value of $1/2$, the value of agent 2 would be at most $1+\varepsilon$ (since she can afford just one of the remaining two items), leading to $\NSW$ less than 1. 
    
    However, the $\NSW$ maximizing allocation is not EF1 for agent 2, and any EF1 allocation for agent 2 has to leave one item unallocated and is not Pareto-efficient. The budget-feasible EF1 allocation with the largest $\NSW$ is $\alloc=(\set{1}, \set{2})$, with $\NSW(\alloc)=\sqrt{1/2 \cdot (1+\varepsilon)}< \sqrt{1/2} + \varepsilon$. Therefore, no budget-feasible EF1 allocation can achieve a $\NSW$ approximation of $\sqrt{1/2} + \varepsilon$.
\end{proof}

\subsection{EFx Allocations for Two-Agent Instances}\label{sec:2agents-alg}
We now propose Procedure \ref{alg:2agents}, which takes as input a set of two agents (labeled 1 and 2) and an arbitrary budget-feasible allocation for these two agents, and returns a budget-feasible allocation that is EFx (therefore also EF1) with a $\NSW$ at least a $\sqrt{1/2}$ fraction of the original allocation's $\NSW$. 
Therefore, if we choose the $\NSW$ optimal allocation as the original allocation, this procedure returns a $\sqrt{1/2}$-efficient EFx allocation. Note that this is optimal in quite a strong sense, as it achieves the best approximation of the optimal $\NSW$ that is possible not just by EFx allocations, but even for the more permissive family of EF1 allocations, as shown in Theorem~\ref{thm:lowerBound}. 

Our procedure first checks whether the input allocation is already EFx, in which case it simply returns this allocation, or whether the agents both EFx-envy each other, in which case it ``swaps'' their bundles (while respecting their budget constraints) and terminates. Otherwise, if just one of them envies the other, for simplicity we reindex the agents so that it is agent 1 who EFx envies agent 2. The procedure then continues with two different approaches based on the ``amount" of envy agent 1 has towards agent 2's bundle, $\bundle_2$.

If agent 1 prefers $\bundle_2$ at most 2 times more than she likes her own, we proceed as follows: agent 1 repeatedly removes her least valued good $g$ from $\bundle_2$ and sets it aside in a separate bundle, $R$. The procedure terminates when it can find a ``matching'' between the two agents and two of the three bundles ($\bundle_1, \bundle_2$, or $R$) that yields an EFx allocation. The crucial observation is that agent 1 removes items from $\bundle_2$ only while it EFx-envies that bundle (otherwise an EFx allocation is reached), and since she always removes her least valued item from it, even if after some removal she becomes EFx-feasible with another bundle, she will remain EFx-feasible with $\bundle_2$. This ensures that at some point if no matching has been found already, at least two of the three bundles will be EFx-feasible for agent 1 (specifically $\bundle_1$ and $\bundle_2$ as shown in Lemma~\ref{lemma:matching-two-edges}), allowing us to find a matching by giving agent 2 the one she prefers. Note that the existence of this matching relies on the assumption that $v_1(\bundle_1) \geq \frac{1}{2} \vmax_1(\bundle_2)$.

If this assumption is not true, we use a different approach - the $\leximin$ procedure (Algorithm 2) of Plaut and Roughgarden \cite{PR2020}. Intuitively, this procedure splits a bundle of items between two agents such that an EFx allocation can be achieved when agents have general valuations (which includes our budget-feasible $\vmax$ function). We use this to split $\bundle_2$ into two parts such that agent 1 would be EFx-feasible with both if these bundles were the only ones participating in the matching. This similarly allows us to prove the existence of at least two edges for agent 1 in the EFx-feasibility graph, which implies the existence of a perfect matching.

In order to prove the desired efficiency guarantees, we show a stronger statement in terms of individual value guarantees. More specifically, we show that Procedure~\ref{alg:2agents} always finds a perfect matching in which agent 1 (the envying agent) gets a weakly higher value than in the input allocation, whereas agent 2 (the envied agent) gets at least half of her original value.

\begin{algorithm}[H]
\DontPrintSemicolon
\caption{(\twoalgoDB) EFx allocation for 2 agents} \label{alg:2agents}
\textbf{Input}: Set of two agents $\set{1,2}$; budget-feasible allocation $\alloc$ for these agents\\
    \lIf{$\alloc$ is EFx} {
        \Return $\alloc$
    }
    \lIf{both of the agents EFx-envy each other in $\alloc$} {
        \Return $(\vargmax_1(\bundle_2), \vargmax_2(\bundle_1))$
    }
    Reindex the agents so that agent 1 EFx-envies agent 2 \label{line:reindex}\\
    \If{$v_1(\bundle_1) \geq \frac{1}{2}\vmax_1(\bundle_2)$} {
        $R \gets \emptyset$\\
        \While{$\EFxGraph(\set{1,2}, (\bundle_1,\bundle_2,R))$ does not have a perfect matching} {
            $g \gets \argmin_{h \in \bundle_2} v_1(h)$\\
            $\bundle_2 \gets \bundle_2 \setminus \set{g}$\\
            $R \gets R \cup \set{g}$\\
            Update the edges of $\EFxGraph$
        }
    } \Else {
        $(\bundle_2', \bundle_2'') \gets \leximin(\bundle_2, \vmax_1, \vmax_1)$ \Comment*[r]{agent 1 splits $\bundle_2$ into two bundles}
        Construct $\EFxGraph(\set{1,2}, (\bundle_1, \bundle_2', \bundle_2''))$
    }
    Let $\set{(1, \bundle_1^*), (2, \bundle_2^*)}$ be a perfect matching in $\EFxGraph$ which\\
        \Indp (a) maximizes the value of agent 2 \label{cond:a}\\
        (b) under (a) maximizes the value of agent 1 \label{cond:b}\\ \Indm
    \Return $(\vargmax_1(\bundle_1^*), \vargmax_2(\bundle_2^*))$
\end{algorithm}

For the following lemmas, let $g_t$ be the item removed in iteration $t$ of the while loop for the case when $v_1(\bundle_1) \geq \frac{1}{2}\vmax_1(\bundle_2)$. Also, let $\bundle_2^t$ and $R^t$ be the state of the bundles at the end of this iteration (after moving $g_t$ from $\bundle_2^t$ to $R^t$). Note that if the input allocation $\alloc$ is already EFx, or if both agents EFx-envy each other in $\alloc$, we can immediately return a budget-feasible EFx allocation in which both agents (weakly) improve their value. If none of these two statements are true, it must be that one agent (agent 1 by the reindexing in line~\ref{line:reindex}) EFx-envies the other, while the second agent does not envy the first. For this reason, for the rest of the analysis, we assume that agent 1 EFx-envies agent 2.

Note that every agent has at least one edge in $\EFxGraph$ since they must be EFx-feasible with at least their favorite bundle.

\begin{lemma}\label{lemma:matching-two-edges}
    If any agent has at least two edges in $\EFxGraph$, a perfect matching can be found.
\end{lemma}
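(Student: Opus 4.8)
The plan is to use the structure of $\EFxGraph$ as a bipartite graph between the two agents $\{1,2\}$ and a small set of bundles (either $\{\bundle_1,\bundle_2,R\}$ in the first branch, or $\{\bundle_1,\bundle_2',\bundle_2''\}$ in the second branch), and to argue directly by case analysis on which agent has the two edges. Recall the observation stated just before the lemma: every agent has at least one edge in $\EFxGraph$, since each agent is always EFx-feasible with her own favorite (value-maximizing) bundle among those present. So the only thing that can fail is having a \emph{perfect} matching, and by Hall's theorem (or just by inspection, since there are only two agents) the unique obstruction is that the neighborhoods of agents $1$ and $2$ coincide and consist of a single common bundle.

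First I would handle the case where agent $2$ has two edges. Then agent $2$'s neighborhood has size at least two, so it cannot be contained in agent $1$'s neighborhood unless agent $1$ also has two edges — but in either subcase we can pick an edge $(1,\bundle^*_1)$ for agent $1$ (using her guaranteed edge to her favorite bundle), and since agent $2$ has a second edge to some bundle other than $\bundle^*_1$, we complete the matching. Symmetrically, if agent $1$ has two edges, say to bundles $A$ and $B$, then agent $2$ has an edge to some bundle $C$; if $C \in \{A,B\}$ we assign $C$ to agent $2$ and the other of $\{A,B\}$ to agent $1$, and if $C \notin \{A,B\}$ we assign $C$ to agent $2$ and either of $\{A,B\}$ to agent $1$. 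In all cases the two chosen edges are disjoint, giving a perfect matching. Since there are exactly two agents and at least three bundles available in every branch where this lemma is invoked, there is always enough room to avoid a collision once one side has degree at least two.

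The main (and essentially only) obstacle is making sure the bookkeeping of ``at least one guaranteed edge per agent'' is airtight in each branch — i.e., that the favorite-bundle edge really is present for both agents regardless of which branch of Procedure \ref{alg:2agents} we are in. For agent $2$ this is immediate from the definition of $\EFxGraph$: $\vmax_2(\vargmax_2(\cdot))$ dominates $\vmax_2$ of any subset of any other present bundle minus a good, because $\vargmax_2$ already maximizes over all feasible subsets. For agent $1$ the same definitional argument applies. I would state this explicitly as the base fact, then the case analysis above closes the proof in a few lines. No nontrivial computation is needed; the content is purely combinatorial.
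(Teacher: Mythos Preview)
Your proposal is correct and follows essentially the same approach as the paper: give the other agent her favorite bundle (her guaranteed edge) and use one of the two edges of the ``rich'' agent to avoid the collision. The paper simply collapses your two cases with a ``without loss of generality'' and omits the Hall-style framing, but the underlying argument is identical.
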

\begin{proof}
    Without loss, say agent 1 has at least two edges in $\EFxGraph$. Then, we can match agent 2 with her favorite bundle (which must be EFx-feasible to her) and agent 1 with one of her EFx-feasible bundles, besides the one matched to agent 2, which is guaranteed to exist since she is feasible with at least two bundles.
\end{proof}

\begin{lemma} \label{lemma:matching-exists}
    If $v_1(\bundle_1) \geq \frac{1}{2} \vmax_1(\bundle_2)$, Procedure~\ref{alg:2agents} always finds a perfect matching in $\EFxGraph$.
\end{lemma}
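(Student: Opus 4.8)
The plan is to show that the \texttt{while} loop of Procedure~\ref{alg:2agents} (the branch whose guard $v_1(\bundle_1)\ge\tfrac12\vmax_1(\bundle_2)$ is exactly the lemma's hypothesis) halts, and that on halting $\EFxGraph(\set{1,2},(\bundle_1,\bundle_2,R))$ contains a perfect matching. Each iteration moves one good out of $\bundle_2$, so the loop runs at most $|\bundle_2|$ times; hence it is enough to exhibit, at some state reached while $\bundle_2$ is still non-empty, an agent — I will aim for agent $1$ — that is EFx-feasible with two of the three bundles, since then Lemma~\ref{lemma:matching-two-edges} produces a perfect matching and the loop terminates. Throughout I use $\bundle_2^t$, $R^t$, $g_t$ as defined above, with $\bundle_2^0$ the initial $\bundle_2$.

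First I would record the structure the procedure forces. Since in iteration $t$ agent $1$ removes her least-valued remaining good $g_t$, the removed values are non-decreasing, so at every state every good of $R^t$ is worth to agent $1$ at most every good of $\bundle_2^t$; consequently $\vmax_1(\bundle_2^t)$ is non-increasing in $t$, and $\vmax_1(R^t)\le\vmax_1(\bundle_2^0)\le 2\,v_1(\bundle_1)$ by the hypothesis. Moreover, each agent always has the edge to a $\vmax$-maximizing bundle, and agent $1$ has the edge to her own bundle $\bundle_1$ precisely when she EFx-envies neither $\bundle_2^t$ nor $R^t$. Finally, in a bipartite graph with the two agents on one side and three bundles on the other a perfect matching is missing only when both agents have a single incident edge pointing to the same bundle; so while the loop runs, agent $1$ has exactly one edge, her favorite bundle, and therefore EFx-envies $\bundle_2^t$ or $R^t$.

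The decisive step is to consider the first state $t^\star$ at which agent $1$ stops EFx-envying $\bundle_2^{t^\star}$. By the monotonicity of $\vmax_1(\bundle_2^t)$, and because agent $1$ can never EFx-envy a singleton bundle, such a $t^\star$ exists with $\bundle_2^{t^\star}\neq\emptyset$ and $|\bundle_2^t|\ge 2$ for every earlier $t$ (EFx-envy of $\bundle_2^t$ forces $|\bundle_2^t|\ge 2$); so the loop reaches state $t^\star$ without ever calling $\argmin$ on an empty set, and in fact it has already exited with a perfect matching by some state $\le t^\star$. It then remains to produce a matching at state $t^\star$. If agent $1$ also does not EFx-envy $R^{t^\star}$ there, then $\bundle_1$ is EFx-feasible for her, and together with her always-feasible favorite bundle this gives a second edge (with a short extra check in the degenerate sub-case where her favorite is $\bundle_1$, again exploiting that $R^{t^\star}$ contains only her worst goods). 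Otherwise she EFx-envies $R^{t^\star}$ while no longer EFx-envying $\bundle_2^{t^\star}$; then $R^{t^\star}$ is EFx-feasible for her, and I claim $\bundle_2^{t^\star}$ is too, which — since $\max_{g\in R^{t^\star}}\vmax_1(R^{t^\star}\setminus\set{g})>v_1(\bundle_1)\ge\max_{h\in\bundle_1}\vmax_1(\bundle_1\setminus\set{h})$ — reduces to showing $\vmax_1(\bundle_2^{t^\star})\ge\vmax_1(R^{t^\star}\setminus\set{g})$ for every $g\in R^{t^\star}$. This last inequality is the main obstacle, and where the hypothesis is spent: one has $\vmax_1(R^{t^\star}\setminus\set{g})\le 2v_1(\bundle_1)$ from the hypothesis, while agent $1$ no longer EFx-envying $\bundle_2^{t^\star}$ gives $\vmax_1(\bundle_2^{t^\star}\setminus\set{g'})\le v_1(\bundle_1)$ for all $g'\in\bundle_2^{t^\star}$; combining these with the fact that the goods moved into $R^{t^\star}$ are agent $1$'s worst ones — so $\bundle_2^{t^\star}$ still retains her most valuable goods — one argues that no one-good-short budget-feasible sub-bundle of $R^{t^\star}$ can out-value what is still extractable from $\bundle_2^{t^\star}$. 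Reconciling the values of these two complementary parts of $\bundle_2^0$ under the agents' budgets is the delicate point, and the monotone ``worst-goods-first'' removal order is exactly what makes the comparison go through; once a second edge for agent $1$ is in hand, Lemma~\ref{lemma:matching-two-edges} yields the matching.
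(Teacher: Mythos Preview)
Your overall plan---find a state at which agent~$1$ has two edges in $\EFxGraph$ and invoke Lemma~\ref{lemma:matching-two-edges}---matches the paper's. The difference is the stopping time: you take $t^\star$ to be the first state at which agent~$1$ no longer EFx-envies $\bundle_2^{t^\star}$, whereas the paper takes the first $t$ with $\vmax_1(\bundle_2^t)<\tfrac12\vmax_1(\bundle_2^0)$. Your choice is what creates the gap.

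The problem is your Case~2 assertion ``then $R^{t^\star}$ is EFx-feasible for her.'' This does not follow: that agent~$1$ EFx-envies $R^{t^\star}$ \emph{from $\bundle_1$} says nothing about whether she would EFx-envy $\bundle_2^{t^\star}$ \emph{from $R^{t^\star}$}. Concretely, let $B_1=1$, $v_1(\bundle_1)=50$, and let $\bundle_2^0$ consist of $100$ goods $a_i$ with $v_1(a_i)=1$, $c(a_i)=0.01$, together with two goods $b_1,b_2$ with $v_1(b_j)=50$, $c(b_j)=0.5$. Then $\vmax_1(\bundle_2^0)=100$ and the hypothesis holds with equality. The procedure removes the $a_i$'s first, and one checks that $t^\star=99$: at that point $\bundle_2^{99}=\{a_{100},b_1,b_2\}$ and $R^{99}=\{a_1,\dots,a_{99}\}$. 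Agent~$1$ EFx-envies $R^{99}$ from $\bundle_1$ (since $\vmax_1(R^{99}\setminus\{a_1\})=98>50$), so you are in Case~2. But $\vmax_1(R^{99})=99<100=\vmax_1(\bundle_2^{99}\setminus\{a_{100}\})$, so $R^{99}$ is \emph{not} EFx-feasible for agent~$1$; indeed at $t^\star=99$ she has only the single edge to $\bundle_2^{99}$. (The loop does eventually find two edges for agent~$1$, but only at $k=100$, past your $t^\star$.)

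The closing paragraph also does not recover the situation: the two bounds you record, $\vmax_1(R^{t^\star}\setminus\{g\})\le 2v_1(\bundle_1)$ and $\vmax_1(\bundle_2^{t^\star}\setminus\{g'\})\le v_1(\bundle_1)$, both point \emph{against} the inequality $\vmax_1(\bundle_2^{t^\star})\ge \vmax_1(R^{t^\star}\setminus\{g\})$ you need, and ``worst-goods-first'' alone is not enough to reverse that under budget constraints. The paper's choice of crossing time avoids this entirely: at the iteration just before $\vmax_1(\bundle_2^t)$ drops below $\tfrac12\vmax_1(\bundle_2^0)$, the hypothesis gives $v_1(\bundle_1)$ direct control over $\bundle_2^{t-1}\setminus\{g_t\}$, and the paper argues (under the standing assumption that no matching was found earlier) that agent~$1$ then has edges to both $\bundle_1$ and $\bundle_2^{t-1}$.
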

\begin{proof}
    If the input allocation forms a perfect matching, we are done. Otherwise, both agents are initially only feasible with a single bundle, by Lemma~\ref{lemma:matching-two-edges}, and since there is no perfect matching, both agents must only be feasible with the same bundle. By the reindexing in line~\ref{line:reindex}, this bundle is $\bundle_2$.
    
    Clearly, removing items from $\bundle_2$ will eventually create a new edge in $\EFxGraph$ for agent 1, either towards $\bundle_1$ or $R$. Since $R$ is initially the empty set and agent 1 moves one item from $\bundle_2$ to $R$ at each iteration, there exists an iteration $t$ such that $\vmax_1(\bundle_2^t) < \frac{1}{2}\vmax_1(\bundle_2)$, where $\bundle_2$ is the input bundle. Let $t$ be the first such iteration. If a perfect matching is found before iteration $t-1$, we are done.  Otherwise, in iteration $t-1$, we have $\vmax_1(\bundle_2^{t-1}) \geq \frac{1}{2}\vmax_1(\bundle_2)$. Since $\vmax_1(\bundle_2^t) = \vmax_1(\bundle_2^{t-1} \setminus \set{g_t}) < \frac{1}{2}\vmax_1(\bundle_2)$, it must be that in iteration $t-1$, agent 1 becomes EFx-feasible with $\bundle_1$ due to the assumption that $v_1(\bundle_1) \geq \frac{1}{2} \vmax_1(\bundle_2)$.

    As shown earlier in this proof, agent is initially EFx-feasible with only $\bundle_2$. By Lemma~\ref{lemma:matching-two-edges}, since a perfect matching was not found before iteration $t-1$, it must be that agent 1 had a single feasibility edge throughout these iterations, and so at each iteration $t' < t-1$, it must be that $\vmax_1(\bundle_2^{t'} \setminus \set{g_{t'}}) = \vmax_1(\bundle_2^{t'+1}) > v_1(\bundle_1)$ (note that $g$ is picked in such a way that this inequality holds). Thus, in iteration $t-1$, agent 1 is also EFx-feasible with $\bundle_2^{t-1}$.

    Since agent 1 is EFx-feasible with both $\bundle_1$ and $\bundle_2^{t-1}$ in iteration $t-1$, a perfect matching exists by Lemma~\ref{lemma:matching-two-edges}.
\end{proof}

\begin{lemma} \label{lemma:2agents1}
    If $v_1(\bundle_1) \geq \frac{1}{2}\vmax_1(\bundle_2)$, Procedure~\ref{alg:2agents} returns an EFx allocation such that agent 1 gets a weakly higher value than her initial value, while agent 2 gets at least half of her initial value, Moreover, no agent envies the unallocated bundle.
\end{lemma}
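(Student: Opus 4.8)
My plan is to obtain all three parts of the statement from the structure already established in Lemma~\ref{lemma:matching-exists}: the while loop terminates at some iteration $T\ge 0$ leaving a perfect matching $\set{(1,\bundle_1^*),(2,\bundle_2^*)}$ in $\EFxGraph(\set{1,2},(\bundle_1,\bundle_2^T,R^T))$, and the returned allocation gives agent~$i$ the budget-feasible sub-bundle $\vargmax_i(\bundle_i^*)$, which is automatically budget-feasible and disjoint from agent~$j$'s. For the EFx claim and the ``no agent envies the unallocated bundle'' claim I would invoke the edge definition of $\EFxGraph$ directly: since $(i,\bundle_i^*)$ is an edge, $\vmax_i(\bundle_i^*)\ge\vmax_i(S\setminus\set{g})$ for every good $g$ and every one of the three bundles $S\in\set{\bundle_1,\bundle_2^T,R^T}$ — in particular for the bundle the other agent receives and for the one left unallocated — and since agent~$i$'s realized value is exactly $\vmax_i(\bundle_i^*)$, no matched agent EFx-envies the other or the unallocated bundle.

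For the two individual-value bounds I would use the following facts about the terminating iteration $T$, which essentially come out of the proof of Lemma~\ref{lemma:matching-exists}: agent~$1$ keeps EFx-envying the current copy of $\bundle_2$ throughout the loop (this is the point of removing her \emph{least}-valued item each step), and at termination agent~$1$ is EFx-feasible with at least two of the three bundles, one of which is $\bundle_1$. From these I get $\vmax_1(\bundle_2^T)>v_1(\bundle_1)=\vmax_1(\bundle_1)$, and, if agent~$1$ is also EFx-feasible with $R^T$, then $\vmax_1(R^T)\ge\vmax_1(\bundle_2^T\setminus\set{g})>v_1(\bundle_1)$ for the item $g$ witnessing the envy. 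Hence any bundle agent~$1$ is feasible with other than $\bundle_1$ is worth strictly more than $v_1(\bundle_1)$ to her, while $\bundle_1$ is worth exactly $v_1(\bundle_1)$. Combined with tie-breaking rule (b) (among matchings maximizing agent~$2$'s value, take one maximizing agent~$1$'s), this gives agent~$1$ value at least $v_1(\bundle_1)$: if agent~$2$'s bundle is not $\bundle_1$, agent~$1$ can keep $\bundle_1$; if it is $\bundle_1$, agent~$1$ is matched to $\bundle_2^T$ or $R^T$, each worth more than $v_1(\bundle_1)$ to her.

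For agent~$2$ the key inequality is a splitting bound: because $\bundle_2^T$ and $R^T$ partition the original bundle $\bundle_2$, any budget-feasible $S\subseteq\bundle_2$ decomposes as $(S\cap\bundle_2^T)\sqcup(S\cap R^T)$ into two budget-feasible subsets, so $\vmax_2(\bundle_2^T)+\vmax_2(R^T)\ge\vmax_2(\bundle_2)=v_2(\bundle_2)$, where the last equality holds since $\bundle_2$ is budget-feasible for agent~$2$. Thus one of $\bundle_2^T,R^T$ is worth at least $\tfrac12 v_2(\bundle_2)$ to her. Now, since agent~$1$ has at least two EFx-feasibility edges, for \emph{every} bundle $b$ that is EFx-feasible for agent~$2$ there is a perfect matching that assigns $b$ to agent~$2$ (agent~$1$ takes one of her other feasible bundles); by rule (a), agent~$2$ therefore receives her favourite EFx-feasible bundle. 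Finally, the bundle of largest $\vmax_2$-value among the three is always EFx-feasible for agent~$2$ (it dominates the others even before any item is removed), so agent~$2$'s value equals $\max_b\vmax_2(b)\ge\max\set{\vmax_2(\bundle_2^T),\vmax_2(R^T)}\ge\tfrac12 v_2(\bundle_2)$. This also subsumes the ``no envy of the unallocated bundle'' claim for agent~$2$ in its full (not just EFx) form.

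The step I expect to be the main obstacle is exactly the edge bookkeeping at the terminating iteration: (i) verifying the ``agent~$1$ has two edges, one of them $\bundle_1$'' picture also when the loop stops \emph{before} the critical iteration used in Lemma~\ref{lemma:matching-exists} (termination occurs the moment a new edge appears, and one must rule out that it is a new edge for agent~$2$ that leaves agent~$1$ with a single edge); and (ii) controlling the set-aside bundle $R^T$, whose value for agent~$2$ can rise quickly even though it only ever receives agent~$1$'s least-valued items, so that the priority rule (a) cannot strand agent~$2$ on a low-value bundle. The tool for (ii) is the monotonicity that agent~$2$'s EFx-feasibility with the shrinking $\bundle_2^t$ only gets harder as items leave it, together with the fact that $\bundle_2$ is EFx-feasible for agent~$2$ at the start (by the reindexing that makes agent~$1$ the envious one).
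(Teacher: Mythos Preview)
Your outline is close to the paper's approach in its broad structure (the partition bound $\vmax_2(\bundle_2^T)+\vmax_2(R^T)\ge v_2(\bundle_2)$ for agent~2, and the use of the tie-breaking rules (a),(b)), but it has two gaps, one of which you yourself flag and do not close.

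\textbf{The main gap.} Your argument for both agents' value bounds rests on the claim that, at the terminating iteration $T$, agent~1 is EFx-feasible with at least two bundles, one of them $\bundle_1$. You rightly identify this as ``the main obstacle'' but you do not resolve it; in particular you do not rule out that the loop halts because \emph{agent~2} acquires a second edge while agent~1 still has only her edge to $\bundle_2^T$. The paper avoids this difficulty by a different device: it argues (via the analysis in Lemma~\ref{lemma:matching-exists}) that the terminating iteration coincides with the critical iteration $t'-1$ at which agent~1 first becomes feasible with $\bundle_1$, and then shows that at that moment $\vmax_1(\bundle_2^{t'-1}) > \vmax_1(R^{t'-1})$, so agent~1 is \emph{not} EFx-feasible with $R^{t'-1}$. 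The conclusion is then not ``every bundle agent~1 is feasible with has value $\ge v_1(\bundle_1)$'' (your route) but rather ``agent~1 can only be matched to $\bundle_1$ or $\bundle_2^{t-1}$, and both are worth $\ge v_1(\bundle_1)$ to her''. This sidesteps the need to reason about agent~2's edges or about $R^T$'s value for agent~1.

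\textbf{A smaller gap.} Your first paragraph derives from the edge definition only that each agent does not \emph{EFx-envy} the unallocated bundle; the lemma asserts the stronger claim that no agent \emph{envies} it. For agent~2 you later recover the full claim (she receives her favourite bundle). For agent~1 you never upgrade to full non-envy. The fix is the one the paper uses: if agent~1 envied the unallocated bundle $U$, then $\vmax_1(U)>\vmax_1(\bundle_1^*)\ge\vmax_1(S\setminus\set{g})$ for all $S,g$ (since $\bundle_1^*$ is EFx-feasible), so $U$ is also EFx-feasible for agent~1, and swapping agent~1 to $U$ would preserve a perfect matching while strictly increasing her value, contradicting condition~(b).
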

\begin{proof}
    Note that a perfect matching in $\EFxGraph$ implies an EFx allocation. If the input allocation forms a perfect matching, we are done. Otherwise, by Lemma~\ref{lemma:matching-exists}, a perfect matching always exists. Let $t-1$ be the final iteration of the while loop (before a matching is returned), and $\bundle_1^*$ and $\bundle_2^*$ be the bundles in the returned matching assigned to agents 1 and 2. Note that $\max\set{v_2(\bundle_2^{t-1}), v_2(R^{t-1})} \geq \frac{1}{2} v_2(\bundle_2)$, where $\bundle_2$ is the input bundle, since $\bundle_2^{t-1} \cup R^{t-1} = \bundle_2$. Since agent 2 is EFx-feasible with her favorite bundle and the returned matching maximizes agent 2's value, it must be that
    \[
        \vmax_2(\alloc_2^*) \geq \max\set{v_2(\bundle_1), v_2(\bundle_2^{t-1}), v_2(R^{t-1})} \geq \frac{1}{2} v_2(\bundle_2).
    \]
    Also, agent 2 clearly does not envy the unallocated bundle, since $\alloc_2^*$ is her favorite bundle.

    By the argument of Lemma~\ref{lemma:matching-exists}, agent 1 becomes EFx-feasible with $\bundle_1$ in iteration $t'-1$, where $t'$ is the earliest iteration for which $\vmax_1(\bundle_2^{t'}) < \frac{1}{2}\vmax_1(\bundle_2)$. Notice that $\vmax_1(\bundle_2^{t'-1}) > \vmax_1(R^{t'-1})$, since $\bundle_2^{t'-1} \cup R^{t'-1} = \bundle_2$. Therefore, agent 1 could not have been EFx-feasible with $R^{t'-2}$ in iteration $t'-2$. This implies that in the returned matching, agent 1 can be matched to either $\bundle_1$ or $\bundle_2^{t'-1} = \bundle_2^{t-1}$. If she is matched to $\bundle_1$, then clearly $\vmax_1(\bundle_1^*) = v_1(\bundle_1)$, and if she is matched to $\bundle_2$, then $\vmax_1(\bundle_1^*) = \vmax_1(\bundle_2^{t-1}) > v_1(\bundle_1)$ by the argument of Lemma~\ref{lemma:matching-exists}. Furthermore, agent 1 will similarly not envy the unallocated bundle since otherwise matching her to this bundle would increase her value and violate condition (b) in line~\ref{cond:b}.
\end{proof}

\begin{lemma} \label{lemma:agent1-feasibility}
    If $v_1(\bundle_1) < \frac{1}{2}\vmax_1(\bundle_2)$, Procedure~\ref{alg:2agents} always finds a perfect matching in $\EFxGraph$.
\end{lemma}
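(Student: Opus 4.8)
The plan is to reduce to Lemma~\ref{lemma:matching-two-edges}: it suffices to prove that in the graph $\EFxGraph(\set{1,2},(\bundle_1,\bundle_2',\bundle_2''))$ built by the procedure, agent~$1$ is incident to at least two edges, i.e.\ is EFx-feasible with at least two of the three bundles $\bundle_1,\bundle_2',\bundle_2''$; then Lemma~\ref{lemma:matching-two-edges} supplies a perfect matching.

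Before the case analysis I would record two facts. First, since $(\bundle_2',\bundle_2'')=\leximin(\bundle_2,\vmax_1,\vmax_1)$ and $\vmax_1$ is a monotone set function with $\vmax_1(\emptyset)=0$, the guarantee of \citet{PR2020} quoted above gives that $(\bundle_2',\bundle_2'')$ is an EFx allocation with respect to the valuation $\vmax_1$; concretely, $\vmax_1(\bundle_2')\ge\vmax_1(\bundle_2''\setminus\set{g})$ for all $g\in\bundle_2''$ and $\vmax_1(\bundle_2'')\ge\vmax_1(\bundle_2'\setminus\set{g})$ for all $g\in\bundle_2'$. Second, $\vmax_1$ is subadditive over disjoint sets: writing $T=\vargmax_1(\bundle_2)$ and splitting it as $T=(T\cap\bundle_2')\sqcup(T\cap\bundle_2'')$, each part is budget-feasible for agent~$1$, so $\vmax_1(\bundle_2)=v_1(T)=v_1(T\cap\bundle_2')+v_1(T\cap\bundle_2'')\le\vmax_1(\bundle_2')+\vmax_1(\bundle_2'')$. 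Since $\bundle_1$ is budget-feasible for agent~$1$ we have $\vmax_1(\bundle_1)=v_1(\bundle_1)$, so the hypothesis $v_1(\bundle_1)<\tfrac12\vmax_1(\bundle_2)$ yields $\vmax_1(\bundle_2')+\vmax_1(\bundle_2'')>2\vmax_1(\bundle_1)$; hence at least one half --- relabel so that it is $\bundle_2'$ --- satisfies $\vmax_1(\bundle_2')>\vmax_1(\bundle_1)$.

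Now I would produce the two edges. The edge $(1,\bundle_2')$ exists: feasibility asks $\vmax_1(\bundle_2')\ge\vmax_1(\bundle_2''\setminus\set{g})$ for all $g\in\bundle_2''$ (first fact) and $\vmax_1(\bundle_2')\ge\vmax_1(\bundle_1\setminus\set{g})$ for all $g\in\bundle_1$, and the latter holds because $\vmax_1(\bundle_1\setminus\set{g})\le\vmax_1(\bundle_1)<\vmax_1(\bundle_2')$. For the second edge I split on the size of the smaller half: if $\vmax_1(\bundle_2'')>\vmax_1(\bundle_1)$ as well, the symmetric argument shows $(1,\bundle_2'')$ is also an edge; otherwise $\vmax_1(\bundle_2'')\le\vmax_1(\bundle_1)$, and I claim $(1,\bundle_1)$ is an edge --- for $g\in\bundle_2''$ monotonicity gives $\vmax_1(\bundle_2''\setminus\set{g})\le\vmax_1(\bundle_2'')\le\vmax_1(\bundle_1)$, and for $g\in\bundle_2'$ the leximin++ inequality gives $\vmax_1(\bundle_2'\setminus\set{g})\le\vmax_1(\bundle_2'')\le\vmax_1(\bundle_1)$, so agent~$1$ holding her own bundle EFx-envies neither half. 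In every case agent~$1$ has two edges in $\EFxGraph$, and Lemma~\ref{lemma:matching-two-edges} concludes.

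The main obstacle is exactly the second edge. The leximin++ guarantee only compares the two halves against each other and says nothing a priori about how $\bundle_1$ sits relative to them, so when $\bundle_2''$ is tiny (value near $0$) agent~$1$ can genuinely fail to be EFx-feasible with $\bundle_2''$ once $\bundle_1$ enters the graph. The resolution is to notice that in precisely that regime $\bundle_1$ dominates $\bundle_2''$ directly and also dominates every $\bundle_2'\setminus\set{g}$ through the leximin++ inequality $\vmax_1(\bundle_2'\setminus\set{g})\le\vmax_1(\bundle_2'')$, so $\bundle_1$ itself plays the role of agent~$1$'s second EFx-feasible bundle; the hypothesis $v_1(\bundle_1)<\tfrac12\vmax_1(\bundle_2)$ is used only to ensure the first edge, i.e.\ that the larger half outranks agent~$1$'s own bundle.
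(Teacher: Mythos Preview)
Your proof is correct and follows essentially the same route as the paper: both use subadditivity of $\vmax_1$ to show one half of the $\leximin$ split exceeds $\vmax_1(\bundle_1)$ (yielding the first edge), then case-split on whether $\vmax_1(\bundle_2'')$ exceeds $\vmax_1(\bundle_1)$ to obtain either $(1,\bundle_2'')$ or $(1,\bundle_1)$ as the second edge via the $\leximin$ EFx inequality. The only cosmetic difference is that you spell out the two EFx-feasibility conditions explicitly, whereas the paper compresses the first edge into the observation that an agent's favorite bundle is always EFx-feasible.
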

\begin{proof}
    If the input allocation forms a perfect matching, we are done. Otherwise, since $\bundle_2' \cup \bundle_2'' = \bundle_2$, it must be that $\max\set{\vmax_1(\bundle_2'), \vmax_1(\bundle_2'')} \geq \frac{1}{2} \vmax_1(\bundle_2) > v_1(\bundle_1)$. Since one of $\bundle_2'$ and $\bundle_2''$ is agent 1's favorite bundle (wlog $\bundle_2'$), and so it must be EFx-feasible. Note that $\vmax_1(\bundle_2'') \geq \vmax_2(\bundle_2' \setminus \set{g})$ for any $g \in \bundle_2'$ by the outcome of $\leximin$. If $v_1(\bundle_1) < \vmax_1(\bundle_2'')$, then agent 1 is also EFx-feasible with $\vmax_1(\bundle_2'')$. Otherwise, if $v_1(\bundle_1) \geq \vmax_1(\bundle_2'')$, then $v_1(\bundle_1) \geq \vmax_2(\bundle_2' \setminus \set{g})$ and so agent 1 is also EFx-feasible with $\bundle_1$. Since agent 1 is EFx-feasible with at least two bundles, a perfect matching exists.
\end{proof}

\begin{lemma} \label{lemma:2agents2}
    If $v_1(\bundle_1) < \frac{1}{2}\vmax_1(\bundle_2)$, Procedure~\ref{alg:2agents} returns an EFx allocation such that one agent gets a weakly higher value than her initial value, while the other gets at least half of her initial value, Moreover, no agent envies the unallocated bundle.
\end{lemma}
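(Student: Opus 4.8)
The plan is to parallel the proof of Lemma~\ref{lemma:2agents1}, with the triple of bundles $\bundle_1,\bundle_2',\bundle_2''$ --- where $(\bundle_2',\bundle_2'')=\leximin(\bundle_2,\vmax_1,\vmax_1)$ and $\bundle_2$ now denotes agent~2's \emph{input} bundle --- playing the role that $\bundle_1,\bundle_2^{t-1},R^{t-1}$ played there. Lemma~\ref{lemma:agent1-feasibility} already provides a perfect matching of $\EFxGraph(\set{1,2},(\bundle_1,\bundle_2',\bundle_2''))$, hence an EFx allocation (a perfect matching in $\EFxGraph$ yields an EFx allocation, and each returned bundle $\vargmax_i(\bundle_i^*)$ is budget-feasible by construction), so the remaining work is to verify the two value bounds and that neither agent envies the unallocated bundle. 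Throughout I use the relabeling from Lemma~\ref{lemma:agent1-feasibility}: $\bundle_2'$ is agent~1's favorite among the three bundles, $\vmax_1(\bundle_2')\geq\frac12\vmax_1(\bundle_2)>v_1(\bundle_1)=\vmax_1(\bundle_1)$ (the last equality because $\bundle_1$ is budget-feasible for agent~1), and agent~1 is also EFx-feasible with $\bundle_2''$ when $v_1(\bundle_1)<\vmax_1(\bundle_2'')$, and with $\bundle_1$ when $v_1(\bundle_1)\geq\vmax_1(\bundle_2'')$.

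First I would bound agent~2's value. Since $\bundle_2',\bundle_2''\subseteq\bundle_2$ and $\bundle_2$ is budget-feasible for agent~2, both parts are budget-feasible for her, so $\vmax_2(\bundle_2')=v_2(\bundle_2')$ and $\vmax_2(\bundle_2'')=v_2(\bundle_2'')$, whence $\max\set{\vmax_2(\bundle_2'),\vmax_2(\bundle_2'')}\geq\frac12 v_2(\bundle_2)$ by additivity. Because agent~1 is EFx-feasible with at least two of the three bundles, deleting agent~2's favorite bundle still leaves agent~1 EFx-feasible with one of the remaining two, so there is a perfect matching placing agent~2 on her favorite bundle; hence condition~(a) gives agent~2 a value of $\max\set{\vmax_2(\bundle_1),\vmax_2(\bundle_2'),\vmax_2(\bundle_2'')}\geq\frac12 v_2(\bundle_2)$, at least half her input value. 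Since this value also dominates $\vmax_2$ of the single unallocated bundle, agent~2 does not envy it.

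Next I would bound agent~1's value by a case analysis on the bundle assigned to agent~2. If agent~2 gets $\bundle_1$ or $\bundle_2''$, then $\bundle_2'$ is free; placing agent~1 on $\bundle_2'$ and dropping the third bundle is again a perfect matching with the same value for agent~2, so condition~(b) gives agent~1 her favorite $\bundle_2'$, for value $\vmax_1(\bundle_2')>v_1(\bundle_1)$, and the unallocated bundle (either $\bundle_1$ or $\bundle_2''$) has $\vmax_1$-value at most $\vmax_1(\bundle_2')$, so agent~1 does not envy it. If agent~2 gets $\bundle_2'$, then: when $v_1(\bundle_1)<\vmax_1(\bundle_2'')$, agent~1 is EFx-feasible with $\bundle_2''$ and condition~(b) assigns it to her (it beats $\bundle_1$, whose $\vmax_1$-value is $v_1(\bundle_1)$), for value $\vmax_1(\bundle_2'')>v_1(\bundle_1)$, leaving $\bundle_1$ unallocated with $\vmax_1(\bundle_1)=v_1(\bundle_1)<\vmax_1(\bundle_2'')$, so no envy; when $v_1(\bundle_1)\geq\vmax_1(\bundle_2'')$, agent~1 is EFx-feasible with $\bundle_1$ and is assigned it for value $v_1(\bundle_1)$, leaving $\bundle_2''$ unallocated with $\vmax_1(\bundle_2'')\leq v_1(\bundle_1)$, so again no envy. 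In every branch agent~1's value is at least $v_1(\bundle_1)$, her input value, which yields the lemma (with agent~1 being the agent whose value weakly increases).

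The point I expect to need the most care is the no-envy-toward-the-unallocated-bundle claim: it must be read off directly from the inequalities $\vmax_1(\text{unallocated})\leq\vmax_1(\bundle_1^*)$ that condition~(b) and agent~1's partial feasibility force in each branch, \emph{not} from a ``rematch agent~1 onto the unallocated bundle'' argument, since that bundle need not be EFx-feasible for agent~1. A rematching argument is used only in the harmless directions --- to show condition~(a) reaches agent~2's favorite and condition~(b) reaches $\bundle_2'$ whenever it is free --- and there it works precisely because there are only three bundles, so the unused one can simply be left unallocated.
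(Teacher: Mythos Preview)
Your proof is correct and follows the paper's line: invoke Lemma~\ref{lemma:agent1-feasibility} for the matching and agent~1's two EFx-feasible bundles, bound agent~2 via $\max\{v_2(\bundle_2'),v_2(\bundle_2'')\}\geq\tfrac12 v_2(\bundle_2)$ together with condition~(a), and bound agent~1 by showing she is always matched to a bundle of value at least $v_1(\bundle_1)$. The only place you diverge is the no-envy-of-the-unallocated-bundle claim for agent~1: you run an explicit case analysis on which bundle is left over, whereas the paper uses exactly the rematching argument you warn against. Your caution there is unnecessary: if agent~1 \emph{envies} the unallocated bundle, then $\vmax_1(\text{unallocated})>\vmax_1(\bundle_1^*)$, and since $\bundle_1^*$ is EFx-feasible for her (so $\vmax_1(\bundle_1^*)\geq\vmax_1(S\setminus\{g\})$ for every bundle $S$ in the graph and every $g\in S$), the unallocated bundle is \emph{a fortiori} EFx-feasible for her too. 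Hence swapping her onto it yields a valid perfect matching in $\EFxGraph$ with the same agent-2 value and strictly higher agent-1 value, contradicting condition~(b). Your case analysis reaches the same conclusion and is perfectly fine, just longer than needed.
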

\begin{proof}
    Lemma~\ref{lemma:agent1-feasibility} implies the existence of a perfect matching and also that agent 1 is EFx-feasible with at least two bundles. Note that $\max\set{v_2(\bundle_2'), v_2(\bundle_2'')} \geq \frac{1}{2} v_2(\bundle_2)$, since $\bundle_2' \cup \bundle_2'' = \bundle_2$. Since agent 2 is EFx-feasible with her favorite bundle and the returned matching maximizes agent 2's value, it must be that
    \[
        \vmax_2(\bundle_2^*) \geq \max\set{v_2(\bundle_2'), v_2(\bundle_2'')} \geq \frac{1}{2} v_2(\bundle_2).
    \]
    Also, for the same reason, she will not envy the unallocated bundle.

    Lemma~\ref{lemma:agent1-feasibility} also implies that at least two of agent 1's feasible bundles will have value at least $v_1(\bundle_1)$ for her. Since the returned matching will allocate at least one of these two bundles to agent 1 (by the maximality of condition (b) in line~\ref{cond:b}) and so her value $\vmax_1(\bundle_1^*)$ for her allocated bundle is weakly better than her initial value. Moreover, agent 1 will not envy the unallocated bundle or otherwise, we could increase agent 1's value and contradict the maximality of condition (b) in line~\ref{cond:b}.
\end{proof}

We now show the main result of this section.

\begin{theorem} \label{thm:2agents}
    $\twoalgoDB$ returns a budget-feasible EFx allocation such that one agent gets a weakly higher value than her initial value, while the other gets at least half of her initial value, Moreover, no agent envies the unallocated bundle.
\end{theorem}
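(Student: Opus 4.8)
The plan is to prove Theorem~\ref{thm:2agents} by a case analysis mirroring the four branches of Procedure~\ref{alg:2agents}. Two of these branches are already handled verbatim by the lemmas above, so the only genuinely new work is to dispatch the two ``trivial'' branches and to check that the four cases are exhaustive. For exhaustiveness: if $\alloc$ is not EFx then at least one agent EFx-envies the other; if it is not the case that both do, then exactly one does, so the reindexing in line~\ref{line:reindex} is well-defined and the run continues into one of the two cases $v_1(\bundle_1)\ge\tfrac12\vmax_1(\bundle_2)$ or $v_1(\bundle_1)<\tfrac12\vmax_1(\bundle_2)$.

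For the branch where $\alloc$ is already EFx, the procedure returns it unchanged: each agent retains exactly her initial value (so one agent's value is weakly higher and the other's is at least half of its initial value), the EFx guarantee is inherited, and there is no partitioned leftover bundle. For the branch where both agents EFx-envy each other, the procedure returns $(\vargmax_1(\bundle_2),\vargmax_2(\bundle_1))$, and I would verify the claim using two observations. First, for any budget-feasible allocation we have $\vmax_i(\bundle_i)=v_i(\bundle_i)$. Second, ``agent $1$ EFx-envies agent $2$ in $\alloc$'' yields an affordable $S\subseteq\bundle_2$ with $v_1(S)>v_1(\bundle_1)$, hence $\vmax_1(\bundle_2)>v_1(\bundle_1)=\vmax_1(\bundle_1)$, and symmetrically $\vmax_2(\bundle_1)>v_2(\bundle_2)=\vmax_2(\bundle_2)$. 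Thus both agents strictly improve; and since after the swap agent $1$ holds a sub-bundle of $\bundle_2$ and agent $2$ holds a sub-bundle of $\bundle_1$, agent $1$'s value for any affordable sub-bundle of agent $2$'s new bundle is at most $\vmax_1(\bundle_1)<\vmax_1(\bundle_2)$ (and symmetrically), so neither agent EFx-envies the other.

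The two remaining branches are precisely the hypotheses of Lemma~\ref{lemma:2agents1} ($v_1(\bundle_1)\ge\tfrac12\vmax_1(\bundle_2)$) and Lemma~\ref{lemma:2agents2} ($v_1(\bundle_1)<\tfrac12\vmax_1(\bundle_2)$). Each of these lemmas already asserts exactly the three conclusions of the theorem for its case — the returned allocation is EFx, one agent's value is weakly higher than initially, the other's is at least half of its initial value, and no agent envies the leftover (unmatched) bundle — building on Lemmas~\ref{lemma:matching-two-edges}, \ref{lemma:matching-exists} and \ref{lemma:agent1-feasibility}. Concatenating the four cases yields the theorem.

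The main obstacle is the mutual-EFx-envy branch: unlike EF, EFx is not automatically preserved by swapping bundles, because each agent ends up with a \emph{strict} affordable sub-bundle of the other's original bundle and could a priori EFx-envy it. The resolution is exactly the inequality chain above, which forces agent $1$'s post-swap value $\vmax_1(\bundle_2)$ to strictly dominate her value for every affordable sub-bundle of $\bundle_1$ — the inequality that EFx of the swapped allocation demands. A secondary bookkeeping point is to state consistently what ``the unallocated bundle'' means in each branch (the unmatched one of the three constructed bundles in the two main branches, and nothing nontrivial in the two easy branches).
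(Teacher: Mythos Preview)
Your proposal is correct and follows essentially the same approach as the paper: a case split over the four branches of the procedure, with Lemmas~\ref{lemma:2agents1} and~\ref{lemma:2agents2} doing the heavy lifting for the two nontrivial branches, and the first two branches dispatched directly. The paper's own proof is terser---it simply invokes budget-feasibility (``by definition of $\vargmax$'') and then cites the two lemmas, having already noted in the prose before Lemma~\ref{lemma:matching-two-edges} that the two easy branches immediately yield a budget-feasible EFx allocation in which both agents weakly improve. Your write-up is more explicit about the mutual-EFx-envy swap (the inequality chain $\vmax_1(\bundle_2)>v_1(\bundle_1)=\vmax_1(\bundle_1)$, and symmetrically), which the paper leaves implicit; the one small omission in your plan is that you never state budget-feasibility of the output, which the paper handles in one line via the definition of $\vargmax$.
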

\begin{proof}
    Note that the returned allocation is budget-feasible by definition of $\vargmax$. Combining Lemmas~\ref{lemma:2agents1} and \ref{lemma:2agents2} gives the desired efficiency and envy guarantees.
\end{proof}

Note that since $\twoalgoDB$ returns a budget-feasible allocation, some items from the matched bundles, $\bundle_1^*$ and $\bundle_2^*$, might be left out (if the agent they are allocated to does not have sufficient budget for the whole bundle). In Lemma~\ref{lemma:left-out}, we show that this is the case with at most one of $\bundle_1^*$ and $\bundle_2^*$, and the left-out part is not EFx-envied (respectively envied) by the agents. We use this lemma for the three agents procedure later on.

\begin{lemma} \label{lemma:left-out}
    In the allocation returned by $\twoalgoDB$, one of $\bundle_1^*$ and $\bundle_2^*$ will be completely allocated (to the agent with the higher budget), while the other bundle may be partially allocated. If $R'$ is the left-out part, then one agent will not envy $R'$, while the other will not EFx-envy it.
\end{lemma}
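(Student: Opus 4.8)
The plan is to first determine which of $\bundle_1^*$ and $\bundle_2^*$ can fail to be allocated in full, and then to bound the value of the leftover part $R'$ for each agent by reusing the structure that produced the final matching (either the edges of $\EFxGraph$, or the hypothesis of the ``both EFx-envy'' early exit). The key observation I would start from is an invariant of $\twoalgoDB$: the bundle $\bundle_1$ is never modified, so $c(\bundle_1)\le B_1$ always holds, while every other bundle the procedure ever builds --- the shrinking $\bundle_2$ and the growing $R$ in the first branch, $\bundle_2'$ and $\bundle_2''$ in the second branch, and the two bundles of the ``both EFx-envy'' exit --- is a subset of one of the two \emph{input} bundles, hence has cost at most $B_1$ or at most $B_2$. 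It follows that the bundle handed to the agent of larger budget is budget-feasible for her (a subset of $\bundle_1$ costs at most $B_1$, hence at most her budget; a subset of $\bundle_2$ costs at most $B_2$, hence at most her budget), so $\vargmax$ allocates that bundle in full; therefore at most the smaller-budget agent's bundle is trimmed (and if the budgets are equal, neither is). Denote that smaller-budget agent by $i$, her source bundle by $U$ --- so $R' = U\setminus\vargmax_i(U)\subseteq U$ --- and the other agent by $j$, whose bundle therefore has value exactly $\vmax_j(V)$ for her source bundle $V$, since she receives $\vargmax_j(V)$.

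For agent $i$ I expect the argument to be immediate: since $R'\subseteq U$ and agent $i$ receives $\vargmax_i(U)$, monotonicity of $\vmax_i$ gives $\vmax_i(R')\le\vmax_i(U)=v_i(\vargmax_i(U))$, so $i$ does not even envy $R'$. For agent $j$ I would fix an arbitrary EFx-witness, i.e. some $S\subseteq R'$ with $c(S)\le B_j$ and $g\in S$, and show $v_j(S\setminus\set{g})\le\vmax_j(V)$. If $\twoalgoDB$ ended by returning $(\vargmax_1(\bundle_1^*),\vargmax_2(\bundle_2^*))$ for a perfect matching $\set{(1,\bundle_1^*),(2,\bundle_2^*)}$ of $\EFxGraph$, then $U$ is one of the bundles of the \emph{final} graph, and the matching edge $(j,V)$ certifies $\vmax_j(V)\ge\vmax_j(U\setminus\set{g'})$ for every $g'\in U$; plugging in $g'=g$ (legitimate since $g\in S\subseteq R'\subseteq U$) and using $S\setminus\set{g}\subseteq U\setminus\set{g}$ with $c(S\setminus\set{g})\le B_j$ finishes this case. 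If instead $\twoalgoDB$ returned the swapped allocation, then $U$ is exactly agent $j$'s input bundle and $V$ is agent $i$'s input bundle, so the branch hypothesis that agent $j$ EFx-envies agent $i$ in the input gives, directly from the definition of EFx-envy, $v_j(U)<\vmax_j(V)$; since $S\setminus\set{g}\subseteq U$ this again yields $v_j(S\setminus\set{g})\le v_j(U)<\vmax_j(V)$. The two remaining returns (the input is already EFx, or the input bundles already form a perfect matching) trim nothing, so $R'=\emptyset$ trivially.

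The part I expect to be the most error-prone is the case bookkeeping --- carefully matching up which of the returns fired, which agent ends up trimmed, and which input bundle the trimmed bundle $U$ lives inside --- together with the boundary case $R'=U$ (when $\vargmax_i(U)=\emptyset$). That boundary case is, I believe, precisely the reason the statement promises only that agent $j$ does not \emph{EFx}-envy $R'$ and not that she does not envy it: the edge $(j,V)$ bounds $\vmax_j(U\setminus\set{g'})$ but not $\vmax_j(U)$ itself, so on a degenerate instance agent $j$ could still prefer the whole of $R'=U$ to her own bundle, even though she can never prefer it ``up to any good.'' Everything else should reduce to a single application of monotonicity of $\vmax$ or of the definition of EFx-envy.
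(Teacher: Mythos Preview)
Your proposal is correct and follows essentially the same approach as the paper: both argue that every candidate bundle is a subset of one of the two input bundles (hence affordable to the larger-budget agent), then show the trimmed agent cannot envy $R'$ by optimality of $\vargmax$, and that the other agent cannot EFx-envy $R'$ because $R'$ sits inside a bundle she already does not EFx-envy. The paper's proof collapses all exits into the single observation ``agent 2 does not EFx-envy $\bundle_1^*$,'' whereas you unfold this into separate cases (the $\EFxGraph$ matching, the mutual-EFx-envy swap, and the trivial exits); your treatment is more explicit but not structurally different. One minor slip: your claim that the ``input bundles already form a perfect matching'' exit ``trims nothing'' is not quite right (if the matching swaps bundles, the smaller-budget agent may still be unable to afford her new bundle), but this case is already fully covered by your general $\EFxGraph$-matching argument, so nothing is lost.
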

\begin{proof}
    Note that the allocated bundles $\bundle_1^*$ and $\bundle_2^*$ are subsets of the input bundles, which are budget-feasible for the respective agents. Without loss of generality, let agent 2 be the one with the (weakly) higher budget. Thus, the input bundles $\bundle_1$ and $\bundle_2$ are budget-feasible for agent 2. Since $\bundle_2^*$ is a subset of one of these two bundles, it is budget-feasible for agent 2. Therefore, $\bundle_2^*$ is completely allocated and so $R'$ (possibly empty) must be a subset of $\bundle_1^*$.

    Since agent 2 does not EFx-envy $\bundle_1^*$ and $R'$ is a subset of this bundle, agent 2 will not EFx-envy $R'$. For agent 1, by definition, her most valuable budget-feasible subset of $\bundle_1^*$ is $\vargmax_1(\bundle_1^*)$. This implies that she cannot envy $R'$ (w.r.t. her budget) or otherwise $R'$ should be her most valuable budget-feasible subset of $\bundle_1^*$.
\end{proof}

\section{EFx Allocations for 3-Agent Instances}\label{sec:3agents-alg}
In the case of two-agent instances, we were able to design a somewhat simple procedure in which the envying agent splits the envied bundle into two parts in such a way that an EFx allocation with optimal efficiency is returned. In the case of three-agent instances, however, things become significantly more complicated, since two agents may envy the third agent, yet disagree on how their bundle should be split. Furthermore, the fact that each of the three agents can have a vastly different budget, adds to the complexity of the problem. To address this issue, we first try to reduce the problem to the case of equal budgets. In order to do so, we consider the ``monopoly value'' of the higher-budget agents, if we were to limit their budget to the lowest one (see Section~\ref{sec:preliminaries} for a definition). If all agents' monopoly value with the smallest budget is ``high enough'' (this threshold is determined by a parameter $\alpha$ whose exact value we determine later on), then we reduce the problem to the case of equal budgets. If not, then we essentially let the agents arrive in increasing order of their budgets and choose their preferred budget-feasible bundle among the remaining goods. Note that we assume that $|M| > 3$, since otherwise, the problem is trivial - the maximum $\NSW$ allocation is EFx.


Our main procedure, Procedure \ref{alg:3agents}, first reindexes the agents in increasing order of their budgets, such that agent 1 has the smallest budget, followed by agent 2, and then 3. Then, it runs a pre-processing phase (described in Section~\ref{sec:preproc}) which allows each agent to set aside a single ``high value'' good. Next, our procedure checks whether we can reduce the input instance to one where all agents have equal budgets $B_1$, while still guaranteeing constant efficiency through the monopoly value $m(B_1)$, and if so, we call Procedure \ref{alg:3agents-equalbudget} which returns an EFx allocation for instances with equal budgets (see Section~\ref{sec:equal_budgets}). Otherwise, we run Procedure \ref{alg:3agents-else} (see Section~\ref{sec:3agents-else}). Once one of these subroutines returns an allocation, we provide each agent with the option of dropping the bundle they were assigned in exchange for the ``high-value'' good set aside for them during the pre-processing phase. This ensures that agents with a lot of value concentrated on a single item have the opportunity to achieve at least that much value if it is set aside for them in the pre-processing phase.

\begin{algorithm}[H]
\caption{(\threealgoDB) EFx allocation for 3 agents} \label{alg:3agents}
\textbf{Input}: Parameter $\threshold$\\
    Reindex the agents in increasing order of their budgets, such that $B_1 \leq B_2 \leq B_3$\\
    $M, (s_1, s_2, s_3) \gets$ Procedure \ref{alg:3agents-preprocess}\\
    \If{$m_2(B_1) \geq \threshold$ and $m_3(B_1) \geq \threshold$} {
        $B_2, B_3\gets B_1$ \label{line:reduce-budgets}\\
        $\alloc \gets$ Procedure \ref{alg:3agents-equalbudget}\label{line:run-3agents-equalbudget}
    } \Else {
        $\alloc \gets$ Procedure \ref{alg:3agents-else}
    }
    \For{$i \in N$} {
        \If{$v_i(s_i) > v_i(\bundle_i)$} {
            $\bundle_i \gets s_i$
        }
    }
    \Return $\alloc$
\end{algorithm}


\subsection{The Pre-Processing Step: Procedure \ref{alg:3agents-preprocess}}\label{sec:preproc}

We now present the pre-processing subroutine that we rely on to set aside one item for each agent. This subroutine tries to match them with one of their favorite three items and set it aside to ensure that they will never end up with a lower value. We carefully design this procedure so that this matching satisfies some useful properties that we use later on to prove the desired EFx and efficiency guarantees. In more detail, we use the maximum $\NSW$ allocation to ``prioritize'' the agents who have a favorite item in their respective maximum $\NSW$ bundle. Specifically, we guarantee that all these agents will be matched to a good that is at least as valuable as the most valuable good in their bundle (note that this can be achieved by just matching each such agent with her favorite good in her bundle). Then, given this hard constraint, we choose the maximum weight matching using the agents' values for the corresponding goods as the weights. This procedure returns the goods matched to each agent (observe that each agent gets one), sets them aside, and removes them from $M$. Note that Procedure \ref{alg:3agents}, at the very end, gives each agent the option to pick their set-aside item if that would increase their value, which acts as a safety net for their value guarantee.

\begin{algorithm}[H]
\caption{Pre-processing for 3 agents} \label{alg:3agents-preprocess}
\textbf{Input}: Set of items $M$, maximum $\NSW$ allocation $\OPT(M)$\\
    For each agent $i$, let $S_i$ be the set of their most favorite three budget-feasible goods in $M$\\
    Let $G$ be a bipartite graph with vertices $\set{1,2,3}$, corresponding to agents, on one side, and vertices $S_1\cup S_2\cup S_3$, corresponding to goods, on the other. An edge with weight $v_i(g)$ connects each agent $i$ to each good $g \in S_i$\\
    Let $\mathcal{P}$ be the set of all matchings of $G$ such that every agent $i$ with $S_i^* = S_i \cap \OPTBundle_i \neq \emptyset$ is matched to an item of value at least $\max_{g \in S_i^*} v_i(g)$\\
    Let $P$ be a maximum weight matching in $\mathcal{P}$ and for each agent $i$, let $\setaside_i$ be the good they are matched to in $P$\\
    $M \gets M \setminus \set{\setaside_1, \setaside_2, \setaside_3}$\\
    \Return $M, (\setaside_1, \setaside_2, \setaside_3)$
\end{algorithm}

\begin{lemma} \label{lemma:preprocess-maxval}
    At the end of Procedure \ref{alg:3agents-preprocess}, $v_i(\setaside_i) \geq v_i(g)$ for all agents $i$ and all $g \in M \setminus \set{\setaside_1, \setaside_2, \setaside_3}$.
\end{lemma}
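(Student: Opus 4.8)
The plan is to argue by contradiction: suppose some agent $i$ has $v_i(\setaside_i) < v_i(g)$ for some good $g$ that survives in $M \setminus \set{\setaside_1,\setaside_2,\setaside_3}$, i.e.\ $g$ was not set aside for any agent. Since $g$ has value for $i$ strictly larger than $v_i(\setaside_i)$, and $\setaside_i \in S_i$, the good $g$ must also lie in $S_i$ (the three most-valued budget-feasible goods of $i$), because $S_i$ consists of the top three goods by $i$'s value and $g$ beats one of them, namely $\setaside_i$. So the edge $(i,g)$ exists in the bipartite graph $G$ with weight $v_i(g) > v_i(\setaside_i)$.

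The key step is to show we can produce a matching $P'$ in the feasible family $\mathcal{P}$ with strictly larger weight than $P$, contradicting the maximality of $P$. Consider swapping $i$'s matched good from $\setaside_i$ to $g$. Since $g$ is not used by any agent in $P$ (it survived), $P' = (P \setminus \set{(i,\setaside_i)}) \cup \set{(i,g)}$ is still a matching in $G$, and its weight exceeds that of $P$ by $v_i(g) - v_i(\setaside_i) > 0$. It remains only to check that $P' \in \mathcal{P}$, i.e.\ that $P'$ still respects the hard constraint that every agent $j$ with $S_j^* = S_j \cap \OPTBundle_j \neq \emptyset$ is matched to a good of value at least $\max_{h \in S_j^*} v_j(h)$. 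For $j \neq i$ nothing changed, so the constraint still holds. For agent $i$ itself: if $S_i^* = \emptyset$ there is no constraint to satisfy; if $S_i^* \neq \emptyset$, then $i$ was matched in $P$ to a good of value at least $\max_{h \in S_i^*} v_i(h)$, and since $v_i(g) > v_i(\setaside_i)$, the new good $g$ satisfies the same bound. Hence $P' \in \mathcal{P}$ with strictly larger weight, contradicting the choice of $P$ as a maximum weight matching in $\mathcal{P}$.

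The main obstacle — though it is mild here — is making sure the feasibility constraint defining $\mathcal{P}$ is preserved under the swap; this is why I isolate the case analysis on whether $S_i^* = \emptyset$. A secondary subtlety is the claim that $g \in S_i$: this uses that $\setaside_i \in S_i$ together with $v_i(g) > v_i(\setaside_i)$, so if $g$ were outside the top three then it would displace some element of $S_i$ of value at least $v_i(g) > v_i(\setaside_i) \ge$ that element's value, a contradiction with how $S_i$ is defined; one should phrase this carefully in case of ties (picking $S_i$ to be any fixed top-three set, the inequality $v_i(g) > v_i(\setaside_i)$ with $\setaside_i$ among the top three forces $g$ into a top-three set as well, or one simply notes $g$ can be added to the graph without loss since including more candidate goods for $i$ only enlarges $\mathcal{P}$). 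With these points addressed, the contradiction is immediate and the lemma follows.
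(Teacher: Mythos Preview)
Your proof is correct and follows essentially the same approach as the paper's: argue by contradiction, observe that $g$ must lie in $S_i$, and swap $\setaside_i$ for $g$ to obtain a heavier matching. You are in fact more careful than the paper, which omits the explicit verification that the swapped matching still lies in $\mathcal{P}$.
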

\begin{proof}
    First, note that all agents are matched to an item since there are 3 agents and at least 3 items in $S_1 \cup S_2 \cup S_3$. Assume by contradiction that some agent $i$ is matched to an item $\setaside_i$ and $v_i(\setaside_i) < v_i(g)$ for some item $g \in M \setminus \set{s_1, s_2, s_3}$. Then, $g$ should either be in $S_i$ or agent $i$ should be matched to $g$ instead of $\setaside_i$ (which would increase the weight of the matching).
\end{proof}

From this point on, let $\OPT$ be the maximum $\NSW$ allocation on the remaining items $M \setminus \set{\setaside_1, \setaside_2, \setaside_3}$. We now lower bound each agent's value in the maximum $\NSW$ allocation on the remaining items using the following lemma.

\begin{lemma} \label{lemma:preprocess-remainingval}
    At the end of Procedure \ref{alg:3agents-preprocess}, at least one of the following three cases must hold for agents $i, j, k$:
    \begin{itemize}
        \item $v_i(\OPTBundle_i) \geq 1-3v_i(\setaside_i)$ and $v_j(\OPTBundle_j) = v_k(\OPTBundle_k) = 1$
        \item $v_i(\OPTBundle_i) \geq 1-2v_i(\setaside_i)$ and $v_j(\OPTBundle_j) \geq 1-v_j(\setaside_j)$ and $v_k(\OPTBundle_k) = 1$
        \item $v_i(\OPTBundle_i) \geq 1-v_i(\setaside_i)$ and $v_j(\OPTBundle_j) \geq 1-v_j(\setaside_j)$ and $v_k(\OPTBundle_k) \geq 1-v_k(\setaside_k)$
    \end{itemize}
\end{lemma}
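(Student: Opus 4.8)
The plan is to prove the bound by exhibiting a concrete budget-feasible allocation on the reduced item set $M' = M \setminus \set{\setaside_1,\setaside_2,\setaside_3}$ — namely the restriction of the original maximum-$\NSW$ allocation — and then controlling how much each agent loses when the three set-aside goods are removed. Write $O_i$ for the bundle agent $i$ receives in the maximum-$\NSW$ allocation on the original $M$, so that $v_i(O_i)=1$ by the normalization, and let $Y_i = O_i \setminus \set{\setaside_1,\setaside_2,\setaside_3}$.

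The crucial step, and the one I expect to be the main obstacle, is the comparison: \emph{if a set-aside good $\setaside_\ell$ lies in $O_i$ for some agent $i$, then $v_i(\setaside_\ell)\le v_i(\setaside_i)$} (in the same spirit as Lemma~\ref{lemma:preprocess-maxval}, but comparing $\setaside_i$ against a good that has been removed rather than a remaining one). I would prove it by two subcases. Since $\setaside_\ell\in O_i$ we have $c(\setaside_\ell)\le c(O_i)\le B_i$, so $\setaside_\ell$ is budget-feasible for $i$; hence every good in $S_i$ (agent $i$'s three most valuable budget-feasible goods) has value at least $v_i(\setaside_\ell)$ for $i$. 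Because in the graph $G$ agent $i$ is adjacent only to goods of $S_i$, the good $\setaside_i$ matched to $i$ belongs to $S_i$, so if $\setaside_\ell\notin S_i$ we get $v_i(\setaside_i)\ge v_i(\setaside_\ell)$ at once. If instead $\setaside_\ell\in S_i$, then $\setaside_\ell\in S_i\cap O_i = S_i^*\ne\emptyset$, and by the constraint defining the matching family $\mathcal{P}$ the good matched to $i$ has value at least $\max_{g\in S_i^*}v_i(g)\ge v_i(\setaside_\ell)$. Either way $v_i(\setaside_\ell)\le v_i(\setaside_i)$. This is precisely where the two design choices in Procedure~\ref{alg:3agents-preprocess} — restricting each agent's edges to its top three budget-feasible goods, and the ``priority'' constraint built into $\mathcal{P}$ — are used.

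With this in hand, $(Y_1,Y_2,Y_3)$ is a budget-feasible allocation using only goods of $M'$ (disjointness and budget-feasibility are inherited from $\OPT(M)$), and writing $n_i$ for the number of set-aside goods contained in $O_i$ we obtain $v_i(Y_i)\ge 1 - n_i\, v_i(\setaside_i)$, where $n_1+n_2+n_3\le 3$ since $\setaside_1,\setaside_2,\setaside_3$ are three distinct goods each lying in at most one bundle of $\OPT(M)$. A finite case analysis on the partition $(n_1,n_2,n_3)$ then finishes the argument: relabeling the agents as $i,j,k$ with $n_i\ge n_j\ge n_k$, the cases $n_i=3$, $n_i=2$, and $n_i\le 1$ give exactly the three bullets (an agent with $n=0$ loses nothing and keeps value $1$, which also subsumes the sub-partitions where fewer than three goods are removed). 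The only point that needs a word of care is that the lemma is phrased for the redefined $\OPT$, the $\NSW$-maximizer on $M'$; since $(Y_1,Y_2,Y_3)$ is a feasible competitor, the bounds established for $Y$ at least lower-bound its Nash welfare, which is all that is needed downstream. Everything beyond the comparison inequality is bookkeeping.
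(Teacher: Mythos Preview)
Your proof is correct and follows essentially the same approach as the paper's: both restrict the original optimal allocation to the reduced item set and do a case analysis on how many set-aside goods land in each original bundle. You are in fact more careful than the paper in two places---you give an explicit two-subcase argument for the key inequality $v_i(\setaside_\ell)\le v_i(\setaside_i)$ when $\setaside_\ell\in O_i$ (which the paper asserts without spelling out), and you correctly flag the subtlety that the lemma as stated refers to the redefined $\OPT$ on $M'$ while the argument naturally bounds the restricted bundles $Y_i$; the paper's own proof glosses over this same point.
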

\begin{proof}
    By the definition of the procedure, each agent $i$ is matched with an item of value at least $\max_{g \in S^*_i} v_i(g)$. If all three agents are matched to items in $\OPTBundle_i$, $i$'s value for these items is no more than $v_i(\setaside_i)$, so she can lose no more than $3v_i(\setaside_i)$ by removing the set-aside items, and her remaining value is at least $1-3v_i(\setaside_i)$. Since all the set-aside items were part of $\OPTBundle_i$ and no other items are removed, the other agents will have their initial maximum $\NSW$ value left among the remaining items.

    If two agents are matched to items in $\OPTBundle_i$ and the last agent is matched to an item in $v_j(\OPTBundle_j)$, $i$'s remaining value after removing these is at least $1-2v_i(\setaside_i)$, while $j$'s remaining value is at least $1-v_j(\setaside_j)$, similarly as above. The last agent $k$ loses no value.

    If all agents are each matched to an item in a different maximum $\NSW$ bundle, then the remaining value for each is at least 1 minus the value of their set-aside item.
\end{proof}


\subsection{The Equal Budget Case: Procedure~\ref{alg:3agents-equalbudget}}\label{sec:equal_budgets}

We now introduce Procedure \ref{alg:3agents-equalbudget} which outputs an EFx allocation with high $\NSW$ when the agents have equal budgets. The main idea behind this procedure is to reduce the budget-feasible problem to the case of unrestricted additive valuations. To do this, it removes a fraction of each agent's bundle in the $\NSW$ maximizing allocation, so that the set of all remaining items $Z$ is affordable for every agent, i.e., the total cost across these items is within the common budget $B_1$. It does so, however, while keeping enough value for every agent. The procedure then computes a complete EFx allocation (i.e., all items in $Z$ are allocated), which was proven to exist by \citet{CGM2020} - note that their existence proof is algorithmic, so we can use this approach to find a complete EFx allocation. Lastly, as an optimization, we allow agents to swap bundles with each other if they envy each other since the computed EFx allocation only guarantees the absence of EFx-envy. Note that before the procedure starts, we assume that the budgets are normalized to 1, which is without loss of generality.

We crucially use the fact that the value of every agent for any item is upper bounded by the value of their set-aside item to ensure that each agent can maintain roughly $1/n$ of their original value using at most a $1/n$ of their total budget. Note that our procedure can handle any number of agents, however, we use $n=3$ for most of the analysis.

\begin{algorithm}[H]
\caption{EFx allocation for 3 agents with equal budgets} \label{alg:3agents-equalbudget}
\textbf{Input}: Maximum $\NSW$ allocation $\alloc = \OPT(M)$\\
    \For{$i \in N$} {
        \While{$c(\bundle_i) > \frac{1}{n}$} { \label{line:whileloop}
            $\bundle_i \gets \bundle_i \setminus \set{\argmin\limits_{g \in \bundle_i} \frac{v_i(g)}{c(g)}}$
        }
    } \label{line:remove}
    $Z \gets \cup_{i \in N} \bundle_i$\\
    $\allocALG \gets$ a complete EFx allocation of $Z$ \label{line:EFx-findalloc}\\
    \If{$i$ envies $j$ and $j$ envies $k$ and $k$ envies $i$ for any agents $i,j,k$} { \label{line:if-envy}
        $\allocALGBundle_i, \allocALGBundle_j, \allocALGBundle_k \gets \allocALGBundle_j, \allocALGBundle_k, \allocALGBundle_i$
    }
    \While{$i$ envies $j$ for any agents $i,j$} { \label{line:while-envy}
        $\allocALGBundle_i, \allocALGBundle_j \gets \allocALGBundle_j, \allocALGBundle_i$
    }
    \Return $\allocALG$
\end{algorithm}

\begin{lemma} \label{lemma:remaining-value}
    After line \ref{line:remove} of Procedure \ref{alg:3agents-equalbudget}, $v_i(\bundle_i) \geq \frac{v_i(\OPTBundle_i)}{n} - v_i(\setaside_i)$ for all agents $i \in \set{1,\ldots,n}$.
\end{lemma}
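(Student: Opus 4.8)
The plan is to track what happens to agent $i$'s bundle during the while loop of line~\ref{line:whileloop}. Initially $\bundle_i = \OPTBundle_i$, with value $v_i(\OPTBundle_i)$ and some cost $c(\OPTBundle_i)$. The loop repeatedly removes from $\bundle_i$ the good minimizing the ``bang-per-buck'' ratio $v_i(g)/c(g)$, and stops as soon as $c(\bundle_i)\le \frac1n$. I would first handle the trivial case: if $c(\OPTBundle_i)\le\frac1n$ already, nothing is removed and $v_i(\bundle_i)=v_i(\OPTBundle_i)\ge \frac{v_i(\OPTBundle_i)}{n}-v_i(\setaside_i)$ holds immediately. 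So assume the loop executes at least once.

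Next I would consider the state of the bundle \emph{one step before} the loop terminates: let $\bundle_i'$ be the bundle just before the last removal, so $c(\bundle_i') > \frac1n$, and let $g^*$ be the last good removed, so $\bundle_i = \bundle_i'\setminus\{g^*\}$ and $c(\bundle_i)\le\frac1n$. The key fact is that at \emph{every} step of the loop, the good removed had the minimum ratio $v_i(g)/c(g)$ among the goods then present; since $\bundle_i'$ is a subset of $\OPTBundle_i$ and all removed goods (including $g^*$) had ratio no larger than every good remaining in $\bundle_i'$, an averaging argument gives
\[
\frac{v_i(\bundle_i \cup \{g^*\} \setminus \bundle_i)}{c(\dots)} \le \frac{v_i(\bundle_i')}{c(\bundle_i')}.
\]
More cleanly: among all goods ever contained in $\bundle_i'$ (i.e.\ the goods removed at-or-after this point together with those surviving), the surviving set $\bundle_i$ has the \emph{highest} total ratio, so $v_i(\bundle_i)/c(\bundle_i) \ge v_i(\bundle_i')/c(\bundle_i')$, whence $v_i(\bundle_i) \ge \frac{v_i(\bundle_i')}{c(\bundle_i')}\cdot c(\bundle_i)$. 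Since $v_i(\bundle_i') \ge$ (some fraction of $v_i(\OPTBundle_i)$) and $c(\bundle_i') > \frac1n$ while $c(\bundle_i)$ could in principle be small, I would instead argue directly: $v_i(\bundle_i') = v_i(\bundle_i) + v_i(g^*)$, and $c(\bundle_i') \le c(\bundle_i) + c(g^*) \le \frac1n + c(g^*)$.

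The cleanest route, and the one I would commit to, is: since the goods not in $\bundle_i'$ were removed earlier and each had ratio at most $v_i(g^*)/c(g^*)$ (the current minimum when they were removed was $\le$ later minimums, as ratios of surviving goods only increase), the discarded-so-far portion $D = \OPTBundle_i \setminus \bundle_i'$ satisfies $v_i(D)/c(D) \le v_i(\bundle_i')/c(\bundle_i')$; rearranging with $v_i(D) = v_i(\OPTBundle_i) - v_i(\bundle_i')$ and $c(D) = c(\OPTBundle_i) - c(\bundle_i') \le 1 - \frac1n$ gives a lower bound $v_i(\bundle_i') \ge \frac{v_i(\OPTBundle_i)}{n \cdot c(\bundle_i')}\cdot$ something — this is getting delicate. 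Here is where I expect the \textbf{main obstacle}: carefully choosing which intermediate bundle to pivot on so the ratio inequality cleanly yields the $\frac1n$ factor. The resolution is to observe that \emph{all} of $\OPTBundle_i \setminus \bundle_i$ (everything ever removed, including $g^*$) has ratio at most that of $\bundle_i$ — because $\bundle_i$'s goods all survived every comparison — so $v_i(\OPTBundle_i \setminus \bundle_i) \le \frac{v_i(\bundle_i)}{c(\bundle_i)} \cdot c(\OPTBundle_i \setminus \bundle_i) \le \frac{v_i(\bundle_i)}{c(\bundle_i)} \cdot 1$.

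Finally I would combine: $v_i(\OPTBundle_i) = v_i(\bundle_i) + v_i(\OPTBundle_i \setminus \bundle_i) \le v_i(\bundle_i) + \frac{v_i(\bundle_i)}{c(\bundle_i)}$. If $c(\bundle_i) \ge \frac1n$ this would give $v_i(\OPTBundle_i) \le v_i(\bundle_i)(1+n)$, which is too weak. The fix uses $g^*$: before removing $g^*$ we had $c(\bundle_i') > \frac1n$, so $c(\bundle_i) + c(g^*) > \frac1n$; also $v_i(g^*) \le v_i(\setaside_i)$ by Lemma~\ref{lemma:preprocess-maxval} (the set-aside good has value at least any surviving good, and $g^*\in M\setminus\{\setaside_1,\setaside_2,\setaside_3\}$). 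Now apply the ratio bound to $\OPTBundle_i\setminus\bundle_i'$: its ratio is $\le$ that of $\bundle_i'$, i.e.\ $v_i(\OPTBundle_i) - v_i(\bundle_i') \le \frac{v_i(\bundle_i')}{c(\bundle_i')}\big(c(\OPTBundle_i)-c(\bundle_i')\big) \le \frac{v_i(\bundle_i')}{c(\bundle_i')}(1 - \tfrac1n)$, hence $v_i(\bundle_i') \ge v_i(\OPTBundle_i)\cdot\frac{c(\bundle_i')}{c(\bundle_i') + (1-1/n)} \ge v_i(\OPTBundle_i)\cdot\frac{1/n}{1} = \frac{v_i(\OPTBundle_i)}{n}$, using $c(\bundle_i') > \frac1n$ and that $x \mapsto \frac{x}{x+(1-1/n)}$ is increasing. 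Then $v_i(\bundle_i) = v_i(\bundle_i') - v_i(g^*) \ge \frac{v_i(\OPTBundle_i)}{n} - v_i(\setaside_i)$, as required. I would present it in this last order: trivial case, then define $\bundle_i'$ and $g^*$, then the ratio/averaging inequality on the discarded prefix, then the monotonicity step to extract $\frac1n$, then subtract $v_i(g^*)\le v_i(\setaside_i)$.
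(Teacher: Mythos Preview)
Your final committed approach is correct, and it rests on the same ingredients as the paper: the density ordering of removals, the fact that the penultimate bundle $\bundle_i'$ still has cost exceeding $1/n$, and Lemma~\ref{lemma:preprocess-maxval} to bound $v_i(g^*)\le v_i(\setaside_i)$. The organizational difference is that the paper upper-bounds the \emph{total removed value} $v_i(R_i)$ as the minimum of two expressions (one from ``removed items have low density'', one from ``surviving items have high density'') and then does a case split on whether $v_i(\setaside_i)\gtrless v_i(\OPTBundle_i)\cdot c_{\ell_i}$; you instead lower-bound $v_i(\bundle_i')$ directly with a single averaging inequality on the discarded prefix $\OPTBundle_i\setminus\bundle_i'$ and then subtract $v_i(g^*)$. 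Your route is a bit more streamlined and sidesteps the auxiliary quantity $\rho_i^*=v_i(\setaside_i)/c_{\ell_i}$ entirely. Incidentally, you can shorten it further: using $c(\OPTBundle_i)\le 1$ directly (rather than $c(D)\le 1-\tfrac1n$) in your ratio inequality yields $v_i(\OPTBundle_i)\le v_i(\bundle_i')/c(\bundle_i')$, hence $v_i(\bundle_i')\ge v_i(\OPTBundle_i)\cdot c(\bundle_i')>v_i(\OPTBundle_i)/n$ without the monotonicity detour.
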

\begin{proof}    
    We can find an upper bound for the total value that the procedure removes from each bundle, in two ways. For each bundle $\bundle_i$ in the max $\NSW$ allocation, let $R_i$ be the set of items that we remove in the while loop on line \ref{line:whileloop}. Define $\rho_g$ as the density of each item $g$. Moreover, let $\ell_i$ be the last item removed from $\bundle_i$ in the while loop. Note that $\rho_{\ell_i}$ is upper bounded by $\rho_i^* = \dfrac{v_i(\setaside_i)}{c_{\ell_i}}$ for all agents $i$, by Lemma \ref{lemma:preprocess-maxval}. Thus, we get the following inequality:
    
    \begin{align}
        \sum_{g \in R_i} v_i(g) &= \sum_{g \in R_i} c(g) \cdot \rho_g \nonumber\\
         &\leq \sum_{g \in R_i} c(g) \cdot \rho_i^* \nonumber\\ 
         &\leq \left(1 - \frac{1}{n} + c_{\ell_i} \right) \cdot \rho_i^* \nonumber\\
         &= \left(1 - \frac{1}{n} + c_{\ell_i} \right) \cdot \frac{v_i(\setaside_i)}{c_{\ell_i}} \nonumber\\
         &= \left(1 - \frac{1}{n}\right) \cdot \dfrac{v_i(\setaside_i)}{c_{\ell_i}} + v_i(\setaside_i). \label{eq:inequality1}
    \end{align}

    Notice that when $c_{\ell_i}$ is infinitesimally small, this upper bound makes the sum of the values of the items removed to be unbounded. However, by value normalization, we also know that
    \[
        \sum_{g \in R_i} v_i(g) + \sum_{g \in \bundle_i \setminus R_i} v_i(g) = v_i(\OPTBundle_i),
    \]
    so we get the second upper bound:
    \begin{align}
        \sum_{g \in R_i} v_i(g) &= v_i(\OPTBundle_i)-\sum_{g \in \bundle_i \setminus R_i} v_i(g) \nonumber\\
         &\leq v_i(\OPTBundle_i)-\sum_{g \in \bundle_i \setminus R_i} c(g) \cdot \rho_i^* \nonumber\\
         &\leq v_i(\OPTBundle_i)-\rho_i^* \left( \dfrac{1}{n} - c_{\ell_i} \right) \nonumber\\
         &= v_i(\OPTBundle_i)-\dfrac{v_i(\setaside_i)}{c_{\ell_i}} \left(\frac{1}{n} - c_{\ell_i} \right) \nonumber\\
         &= v_i(\OPTBundle_i)-\frac{1}{n} \cdot \dfrac{v_i(\setaside_i)}{c_{\ell_i}} + v_i(\setaside_i). \label{eq:inequality2}
    \end{align}
    Thus, we get that
    \begin{align*}
        \sum_{g \in R_i} v_i(g) &\leq \min \set{(\ref{eq:inequality1}), (\ref{eq:inequality2})}\\
        &=\begin{cases}
             v_i(\OPTBundle_i)-\frac{1}{n} \cdot \dfrac{v_i(\setaside_i)}{c_{\ell_i}} + v_i(\setaside_i),\\
             \text{if } v_i(\setaside_i) \geq v_i(\OPTBundle_i) \cdot c_{\ell_i} \vspace{1em}\\
             \left(1 - \frac{1}{n}\right) \cdot \dfrac{v_i(\setaside_i)}{c_{\ell_i}} + v_i(\setaside_i),\\
             \text{if } v_i(\setaside_i) < v_i(\OPTBundle_i) \cdot c_{\ell_i}
        \end{cases}\\
        &\leq v_i(\OPTBundle_i) - \frac{v_i(\OPTBundle_i)}{n} + v_i(\setaside_i).
    \end{align*}
    Therefore, for all agents $i$, at the end of the while loop on line \ref{line:whileloop},
    \begin{align*}
        v_i(\bundle_i) &\geq v_i(\OPTBundle_i) - \left(v_i(\OPTBundle_i) - \frac{v_i(\OPTBundle_i)}{n} + v_i(\setaside_i)\right)\\
        &= \frac{v_i(\OPTBundle_i)}{n}-v_i(\setaside_i). \qedhere
    \end{align*}
\end{proof}

We now show that our budget-feasible valuation function $\vmax$ is subadditive, which we use in the later proofs.

\begin{lemma} \label{lemma:subadditive}
    For any agent $i$ with budget $B_i$, any bundle $\bundle$, and any item $g \in \bundle$, the following inequality holds
    \[
        \vmax_i(\bundle \setminus \set{g}) \geq \vmax_i(\bundle) - \vmax_i(g).
    \]
\end{lemma}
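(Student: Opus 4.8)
The plan is to argue directly from the definition of $\vmax_i$ by a short case analysis on whether the removed item $g$ belongs to an optimal budget-feasible sub-bundle. Let $S^* = \vargmax_i(\bundle)$, so that $S^* \subseteq \bundle$, $c(S^*) \leq B_i$, and $v_i(S^*) = \vmax_i(\bundle)$. I would distinguish the two cases $g \notin S^*$ and $g \in S^*$.

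In the first case, $g \notin S^*$, the set $S^*$ is already a subset of $\bundle \setminus \set{g}$ and it remains budget-feasible, so $\vmax_i(\bundle \setminus \set{g}) \geq v_i(S^*) = \vmax_i(\bundle)$, which is at least $\vmax_i(\bundle) - \vmax_i(g)$ since $\vmax_i(g) \geq 0$. In the second case, $g \in S^*$, the set $S^* \setminus \set{g}$ is a subset of $\bundle \setminus \set{g}$ with $c(S^* \setminus \set{g}) \leq c(S^*) \leq B_i$, hence budget-feasible, so $\vmax_i(\bundle \setminus \set{g}) \geq v_i(S^* \setminus \set{g}) = v_i(S^*) - v_i(g) = \vmax_i(\bundle) - v_i(g)$. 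The final observation I would make is that since $g \in S^*$ and $c(S^*) \leq B_i$, we have $c(g) \leq B_i$, so $\set{g}$ is itself budget-feasible and $\vmax_i(g) = v_i(g)$; substituting gives $\vmax_i(\bundle \setminus \set{g}) \geq \vmax_i(\bundle) - \vmax_i(g)$, as desired.

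There is essentially no hard step here: the only thing to be careful about is the degenerate possibility $c(g) > B_i$, in which case $\vmax_i(g) = 0$, but then $g$ cannot lie in the feasible set $S^*$, so this situation is handled entirely by the first case. I would note explicitly that $\vmax_i$ need not be additive (the maximizing sub-bundle can change when $g$ is removed), which is exactly why the inequality is stated as subadditivity rather than equality, and that the case analysis above is what makes the one-directional bound go through regardless.

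\begin{proof}
Let $S^* = \vargmax_i(\bundle)$, so $S^* \subseteq \bundle$, $c(S^*) \leq B_i$, and $v_i(S^*) = \vmax_i(\bundle)$.

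If $g \notin S^*$, then $S^* \subseteq \bundle \setminus \set{g}$ and $S^*$ is still budget-feasible for $i$, so $\vmax_i(\bundle \setminus \set{g}) \geq v_i(S^*) = \vmax_i(\bundle) \geq \vmax_i(\bundle) - \vmax_i(g)$, where the last step uses $\vmax_i(g) \geq 0$.

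If $g \in S^*$, then $S^* \setminus \set{g} \subseteq \bundle \setminus \set{g}$ and $c(S^* \setminus \set{g}) \leq c(S^*) \leq B_i$, so $S^* \setminus \set{g}$ is budget-feasible for $i$ and
\[
    \vmax_i(\bundle \setminus \set{g}) \geq v_i(S^* \setminus \set{g}) = v_i(S^*) - v_i(g) = \vmax_i(\bundle) - v_i(g).
\]
Moreover, since $g \in S^*$ and $c(S^*) \leq B_i$, we have $c(g) \leq B_i$, hence $\set{g}$ is budget-feasible and $\vmax_i(g) = v_i(g)$. Therefore $\vmax_i(\bundle \setminus \set{g}) \geq \vmax_i(\bundle) - \vmax_i(g)$ in this case as well.
\end{proof}
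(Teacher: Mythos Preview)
Your proof is correct and follows essentially the same approach as the paper's own proof: both argue by the same two-case split on whether $g$ lies in $\vargmax_i(\bundle)$. If anything, you are slightly more explicit in justifying $\vmax_i(g) = v_i(g)$ via $c(g) \leq c(S^*) \leq B_i$, which the paper asserts without that intermediate step.
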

\begin{proof}
    Let $S = \vargmax_i(\bundle)$ and note that $\vmax_i(\bundle) = v_i(S)$. If $g \in \vargmax_i(\bundle)$, then $\vmax_i(g) = v_i(g)$ and so
    \[
        \vmax_i(\bundle) - \vmax_i(g) = v_i(S) - v_i(g) = v_i(S \setminus \set{g}) \leq \vmax_i(\bundle \setminus \set{g}),
    \]
    since $S \subseteq \bundle$.

    If $g \notin \vargmax_i(\bundle)$, then $\vmax_i(\bundle) = \vmax_i(\bundle \setminus \set{g})$, so $\vmax_i(\bundle \setminus \set{g}) \geq \vmax_i(\bundle) - \vmax_i(g)$, since $\vmax_i(g) \geq 0$.
\end{proof}

\begin{restatable}{rLem}{lemmaThreeAgentsEqual} \label{lemma:3agents-equal}
    For three-agent instances with equal budgets, Procedure \ref{alg:3agents-equalbudget} returns a budget-feasible EFx allocation with the following individual guarantees:
    \begin{itemize}
        \item $v_i(\allocALGBundle_i) \geq \dfrac{v_i(\OPTBundle_i)}{9} - \dfrac{v_i(\setaside_i)}{3}$ for an agent $i$
        \item $v_j(\allocALGBundle_j) \geq \dfrac{v_j(\OPTBundle_j)}{9} - \dfrac{2v_j(\setaside_j)}{3}$ for an agent $j \neq i$
        \item $v_k(\allocALGBundle_k) \geq \dfrac{v_k(\OPTBundle_k)}{9} - v_k(\setaside_k)$ for an agent $k \neq i,j$
    \end{itemize}
\end{restatable}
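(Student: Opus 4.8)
The plan is to establish the three parts of the lemma --- budget-feasibility, EFx, and the three individual value guarantees --- in roughly that order, with the value guarantees being where all the pieces come together.

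\textbf{Budget-feasibility and reduction to classical EFx.} After the loop on line~\ref{line:whileloop} every surviving bundle satisfies $c(\bundle_i)\le \tfrac1n=\tfrac13$, so $c(Z)=\sum_i c(\bundle_i)\le 1$; since the (equal) budgets are normalized to $1$, this means \emph{every} subset of $Z$ is budget-feasible for \emph{every} agent. Hence the returned allocation is budget-feasible for free, and --- more usefully --- on bundles contained in $Z$ the budget-feasible value $\vmax_i$ coincides with the additive value $v_i$, so ``EFx with respect to budgets'' restricted to $Z$ is exactly classical EFx. Line~\ref{line:EFx-findalloc} therefore starts from a complete classical-EFx allocation of $Z$, which exists for three agents by the constructive argument of~\citet{CGM2020}. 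It then remains to check that the rotation and swaps in lines~\ref{line:if-envy}--\ref{line:while-envy} preserve EFx, which I address in the last paragraph.

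\textbf{The value guarantees.} The structural fact I would establish is that, once lines~\ref{line:if-envy}--\ref{line:while-envy} finish, the directed envy graph on the three agents is acyclic, so the agents can be labelled $i,j,k$ according to a reverse topological order: $i$ envies no one, $j$ envies at most $i$, and $k$ envies at most $i$ and $j$. Given this, I combine two ingredients. (1)~By Lemma~\ref{lemma:preprocess-maxval}, $v_\ell(g)\le v_\ell(\setaside_\ell)$ for every agent $\ell$ and every good $g$ surviving the pre-processing (in particular every $g\in Z$); so EFx gives $v_\ell(\allocALGBundle_\ell)\ge v_\ell(\allocALGBundle_{\ell'}\setminus\set{g})\ge v_\ell(\allocALGBundle_{\ell'})-v_\ell(\setaside_\ell)$ by choosing $g$ of least value in $\allocALGBundle_{\ell'}$, and if $\ell$ does not envy $\ell'$ the stronger $v_\ell(\allocALGBundle_\ell)\ge v_\ell(\allocALGBundle_{\ell'})$ holds. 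Summing the appropriate two of these over $\ell'\neq\ell$ and using that the final bundles partition $Z$ yields $3v_i(\allocALGBundle_i)\ge v_i(Z)$, $3v_j(\allocALGBundle_j)\ge v_j(Z)-v_j(\setaside_j)$, and $3v_k(\allocALGBundle_k)\ge v_k(Z)-2v_k(\setaside_k)$. (2)~Each $\bundle_\ell$ entering line~\ref{line:EFx-findalloc} is a subset of $Z$, so Lemma~\ref{lemma:remaining-value} with $n=3$ gives $v_\ell(Z)\ge v_\ell(\bundle_\ell)\ge \tfrac13 v_\ell(\OPTBundle_\ell)-v_\ell(\setaside_\ell)$. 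Substituting~(2) into the three inequalities from~(1) and dividing by $3$ produces exactly $\tfrac19 v_i(\OPTBundle_i)-\tfrac13 v_i(\setaside_i)$, $\tfrac19 v_j(\OPTBundle_j)-\tfrac23 v_j(\setaside_j)$, and $\tfrac19 v_k(\OPTBundle_k)-v_k(\setaside_k)$, the three stated bounds.

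\textbf{Main obstacle.} The hard part is the acyclicity-plus-EFx claim for lines~\ref{line:if-envy}--\ref{line:while-envy}. I would argue: the $3$-cycle rotation hands every one of the three agents a bundle she strictly preferred, which precludes EFx-envy toward that bundle, and toward the one remaining bundle the old EFx inequality survives (the chain $v_\ell(\cdot\setminus\set{g})\le v_\ell(\text{old bundle})<v_\ell(\text{new bundle})$), so the rotation preserves EFx; similarly each $2$-cycle swap preserves EFx and, since both swapped agents strictly improve, strictly increases $\sum_\ell v_\ell(\allocALGBundle_\ell)$, so only finitely many swaps occur (there are finitely many assignments of the three fixed bundles). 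Finally, immediately after any swap a swapped agent has no outgoing envy edge, so no directed $3$-cycle can re-form after the initial rotation; hence at termination the envy graph contains neither a $2$-cycle nor a $3$-cycle and is therefore acyclic --- one-directional envy edges may remain, but that is exactly a DAG, which is all the value argument needs. Pinning down the behaviour of the line~\ref{line:while-envy} loop on such one-directional envies (so that it makes progress and halts rather than oscillating) is the delicate point; everything else is the bookkeeping sketched above.
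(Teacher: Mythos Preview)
Your proposal is correct and follows essentially the same route as the paper: reduce to unconstrained additive EFx on $Z$ via $c(Z)\le 1$, invoke \citet{CGM2020} for a complete EFx allocation, argue that the rotation and swaps leave the envy graph acyclic while preserving EFx, read off a reverse-topological labelling $i,j,k$, and then combine the ``no envy''/EFx inequalities with Lemma~\ref{lemma:remaining-value} to get the three stated bounds. The only cosmetic difference is that you lower-bound $v_\ell(\allocALGBundle_\ell)$ via $v_\ell(Z)$ whereas the paper uses $v_\ell(\bundle_\ell)$; since $\bundle_\ell\subseteq Z$ these coincide after applying Lemma~\ref{lemma:remaining-value}.

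On your flagged ``delicate point'': you are right to worry, and in fact the paper does not spell this out either --- it simply asserts that the loop ``only removes envy'' and that afterwards ``there are no envy cycles''. The intended reading of line~\ref{line:while-envy} is evidently that it resolves envy \emph{cycles} (mutual envy, since the $3$-cycle was already handled), not arbitrary one-directional envy; under that reading your potential-function argument via $\sum_\ell v_\ell(\allocALGBundle_\ell)$ is exactly what is needed and gives both termination and EFx preservation. So your analysis is actually more careful here than the paper's.
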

\begin{proof}
    By construction, $c(Z) \leq 1$ since $n$ agents contribute a cost of at most $1/n$ each. Notice that in the returned allocation, each agent gets a subset of $Z$, and so the final allocation is budget-feasible. The returned allocation is EFx since $\allocALG$ is an EFx allocation in line \ref{line:EFx-findalloc} and the while loop in line \ref{line:while-envy} only removes envy.

    We now aim to bound how much value each agent has left in $\allocALG$. Since $c(Z) \leq 1$, we can apply the result of \citet{CGM2020} and get $\allocALG$, a complete EFx allocation of the items in $Z$. Thus, the individual value for each agent in $\allocALG$ is guaranteed by the EFx-ness of the allocation. Notice that although this allocation is EFx, it might be possible that all three agents form an envy cycle (i.e., $i$ envies $j$, $j$ envies $k$, $k$ envies $i$ for any agents $i,j,k$) or two of them envy each other. The if statement in line \ref{line:if-envy} and the while loop in line \ref{line:while-envy} address this and swap the bundles of such agents until there are no envy cycles. This implies that at least one agent does not envy anyone, while another agent may only envy a single other agent. Specifically, for agents $i,j,k$, if $i$ does not envy anyone, then $j$ can only envy $i$, while $k$ may envy both $i$ and $j$. Thus, for such agent $i$ who does not envy anyone else, we can guarantee the following three inequalities:
    \begin{itemize}
        \item $v_i(\allocALGBundle_i) \geq \vmax_i(\allocALGBundle_j)$
        \item $v_i(\allocALGBundle_i) \geq \vmax_i(\allocALGBundle_k)$
        \item $v_i(\bundle_i) \leq v_i(\allocALGBundle_i) + \vmax_i(\allocALGBundle_j) + \vmax_i(\allocALGBundle_k)$,
    \end{itemize}
    which implies $v_i(\allocALGBundle_i) \geq \dfrac{v_i(\bundle_i)}{3}$.

    For agent $j$ who may only envy $i$, we have the following three inequalities:
    \begin{itemize}
        \item $v_j(\allocALGBundle_j) \geq \vmax_j(\allocALGBundle_i \setminus \set{g}) \quad \forall g \in \allocALGBundle_i$
        \item $v_j(\allocALGBundle_j) \geq \vmax_j(\allocALGBundle_k)$
        \item $v_j(\bundle_j) \leq \vmax_j(\allocALGBundle_i) + v_j(\allocALGBundle_j) + \vmax_j(\allocALGBundle_k)$.
    \end{itemize}
    Since $\vmax_j(g) \leq v_j(\setaside_j)$ by Lemma \ref{lemma:preprocess-maxval} and $\vmax_j(\allocALGBundle_i \setminus \set{g}) \geq \vmax_j(\allocALGBundle_i) - \vmax_j(g)$  by Lemma \ref{lemma:subadditive} for all $g \in \allocALGBundle_i$, we get that $v_j(\allocALGBundle_j) \geq \dfrac{v_j(\bundle_j) - v_j(\setaside_j)}{3}$.

    Lastly, for agent $k$, we similarly have the following three guarantees:
    \begin{itemize}
        \item $v_k(\allocALGBundle_k) \geq \vmax_k(\allocALGBundle_i \setminus \set{g_1}) \quad \forall g_1 \in \allocALGBundle_i$
        \item $v_k(\allocALGBundle_k) \geq \vmax_k(\allocALGBundle_j \setminus \set{g_2}) \quad \forall g_2 \in \allocALGBundle_j$
        \item $v_k(\bundle_k) \leq \vmax_k(\allocALGBundle_i) + \vmax_k(\allocALGBundle_j) + v_k(\allocALGBundle_k)$,
    \end{itemize}
    Since $\vmax_k(g_1) \leq v_k(\setaside_k)$ and similarly for $g_2$ by Lemma \ref{lemma:preprocess-maxval} and $\vmax_k(\allocALGBundle_i \setminus \set{g_1}) \geq \vmax_k(\allocALGBundle_i) - \vmax_k(g_1)$ by Lemma \ref{lemma:subadditive} and similarly for the following inequality, we get that $v_k(\allocALGBundle_k) \geq \dfrac{v_k(\bundle_k) - 2v_k(\setaside_k)}{3}$.
    
    By Lemma \ref{lemma:remaining-value}, $v_i(\bundle_i) \geq \frac{v_i(\OPTBundle_i)}{3} - v_i(\setaside_i)$ for all agents $i$, and so the statement holds.
\end{proof}

\begin{restatable}{rLem}{lemmaEqualBudget} \label{lemma:equalbudget}
     Procedure \ref{alg:3agents-equalbudget} returns an allocation $\allocALG$ that is budget-feasible, EFx, and $\sqrt[3]{\frac{\threshold^2}{15 \cdot 15 \cdot 18}}$-efficient.
\end{restatable}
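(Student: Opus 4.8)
The plan is to prove the three claims --- budget-feasibility, EFx, and efficiency --- essentially separately, obtaining the efficiency bound as the geometric mean of three per-agent value guarantees.

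First, for budget-feasibility and EFx I would invoke Lemma~\ref{lemma:3agents-equal}, which already gives that Procedure~\ref{alg:3agents-equalbudget} returns a budget-feasible EFx allocation: every agent gets a subset of $Z$ with $c(Z)\le 1$, so the bundles are affordable even under the (weakly larger) original budgets, and the complete EFx allocation of $Z$ obtained from \citet{CGM2020} is EFx against any budget that is at least $c(Z)$, since then every sub-bundle of $\allocALGBundle_j$ is affordable and $v_i(S\setminus\set{g})\le v_i(\allocALGBundle_j\setminus\set{g})$ for $g\in S\subseteq \allocALGBundle_j$ reduces the condition to the unbudgeted guarantee. The only thing to add is that the final safety-net loop of Procedure~\ref{alg:3agents}, which may overwrite $\allocALGBundle_i$ with the singleton $\set{\setaside_i}$ and only when $v_i(\setaside_i)>v_i(\allocALGBundle_i)$, preserves both properties: a singleton is budget-feasible for $i$, no agent can EFx-envy a singleton (removing its unique good empties it), and the swapping agent strictly increases its value so it still does not EFx-envy anyone.

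For efficiency, I would normalize so that the global maximum-$\NSW$ allocation has $\NSW=1$, and aim to show $\prod_\ell v_\ell(\allocALGBundle_\ell)\ge \threshold^2/(15\cdot 15\cdot 18)$, where $v_\ell(\allocALGBundle_\ell)$ now denotes the \emph{final} value of agent $\ell$, which by the safety net is at least the larger of the lower bound supplied by Lemma~\ref{lemma:3agents-equal} and $v_\ell(\setaside_\ell)$. Lemma~\ref{lemma:3agents-equal} gives, for some assignment of the three roles to the agents, $v_\ell(\allocALGBundle_\ell)\ge \tfrac19 v_\ell(\OPTBundle_\ell)-c_\ell v_\ell(\setaside_\ell)$ with $(c_\ell)$ a permutation of $(\tfrac13,\tfrac23,1)$, where $\OPTBundle$ is the max-$\NSW$ allocation of the equal-budget post-pre-processing instance. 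Next I would lower bound $v_\ell(\OPTBundle_\ell)$: for the one agent whose budget was not lowered (agent~$1$, with the smallest budget), restricting the global optimum to the remaining items remains feasible for that agent, so Lemma~\ref{lemma:preprocess-remainingval} yields $v_1(\OPTBundle_1)\ge 1-O(v_1(\setaside_1))$, a bound independent of $\threshold$; for each agent whose budget was lowered to $B_1$, the branch hypothesis $m_\ell(B_1)\ge\threshold$ together with the ``small-item'' property of Lemma~\ref{lemma:preprocess-maxval} yields $v_\ell(\OPTBundle_\ell)\ge c\,\threshold-O(v_\ell(\setaside_\ell))$ for an absolute constant $c>0$. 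Substituting these into Lemma~\ref{lemma:3agents-equal}, and then, for each agent, taking the maximum of the resulting expression with $v_\ell(\setaside_\ell)$, a one-variable optimization over $v_\ell(\setaside_\ell)\ge 0$ cancels the $v_\ell(\setaside_\ell)$ terms and leaves $v_1(\allocALGBundle_1)\ge 1/18$ and $v_j(\allocALGBundle_j),v_k(\allocALGBundle_k)\ge \threshold/15$, once the worst-case pairing of the roles $(c_\ell)$ with the cases of Lemma~\ref{lemma:preprocess-remainingval} and with the identity of agent~$1$ has been accounted for. Multiplying the three bounds and taking cube roots then gives exactly $\sqrt[3]{\threshold^2/(15\cdot 15\cdot 18)}$-efficiency.

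The hardest part will be the step that introduces $\threshold$: showing that once the two larger budgets are reduced to $B_1$, each such agent still retains a constant fraction of its monopoly value $m_\ell(B_1)\ge\threshold$ in the equal-budget max-$\NSW$ allocation (up to the $O(v_\ell(\setaside_\ell))$ slack). The subtlety is that the agents' monopoly-achieving bundles may overlap heavily, so one cannot simply hand each agent its own bundle; the argument must use the ``small-item'' property of Lemma~\ref{lemma:preprocess-maxval} to carve out disjoint, individually affordable bundles at only a bounded loss, together with the Pareto-optimality of the max-$\NSW$ allocation. A secondary, purely computational, difficulty is the case analysis matching the three roles of Lemma~\ref{lemma:3agents-equal} against the three cases of Lemma~\ref{lemma:preprocess-remainingval}: this is what pins down the precise constants $15$, $15$, and $18$, so each of the (individually short) cases must be verified against the adversary's best alignment.
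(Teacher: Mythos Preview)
Your plan is essentially the paper's: invoke Lemma~\ref{lemma:3agents-equal} for budget-feasibility, EFx, and the per-agent bounds $v_\ell(\allocALGBundle_\ell)\ge \tfrac19 v_\ell(\OPTBundle_\ell)-c_\ell v_\ell(\setaside_\ell)$; use the safety net to take $\max(\cdot,v_\ell(\setaside_\ell))$; substitute lower bounds for $v_\ell(\OPTBundle_\ell)$ (Lemma~\ref{lemma:preprocess-remainingval} for agent~$1$, the monopoly hypothesis for agents~$2,3$); do the one-variable $\max(A-Bx,x)\ge A/(B{+}1)$ minimization; and enumerate the pairings of the roles $(c_\ell)$ with the cases of Lemma~\ref{lemma:preprocess-remainingval}.

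The one place you diverge is what you call ``the hardest part'': you anticipate a nontrivial carving argument to show the reduced-budget agents retain $c\,\threshold-O(v_\ell(\setaside_\ell))$ in the equal-budget $\OPT$. The paper does \emph{not} do this; it simply plugs in $v_\ell(\OPTBundle_\ell)\ge\threshold$ for $\ell\in\{2,3\}$ straight from $m_\ell(B_1)\ge\threshold$ and proceeds with the case analysis. In fact the carving is unnecessary: the proofs of Lemmas~\ref{lemma:remaining-value} and~\ref{lemma:3agents-equal} never use that the input bundles to Procedure~\ref{alg:3agents-equalbudget} are disjoint or $\NSW$-optimal---only that each has cost at most the common budget and that Lemma~\ref{lemma:preprocess-maxval} bounds item values---so one may as well feed in agent~$1$'s post-preprocessing global-$\OPT$ bundle together with the two monopoly bundles (overlaps are harmless since $Z$ is their union and $c(Z)\le\sum_\ell c(\bundle_\ell)\le 1$), which gives exactly the substitutions the paper uses. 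Your carving step is thus extra work relative to the paper's route.

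Minor: your stated per-agent worst case ($1/18$ for agent~$1$ and $\threshold/15$ twice) is misassigned; the paper's binding case is $\tfrac{1}{15}\cdot\tfrac{\threshold}{15}\cdot\tfrac{\threshold}{18}$, coming from agent~$1$ losing $3v_1(\setaside_1)$ in Lemma~\ref{lemma:preprocess-remainingval} paired with $c_1=\tfrac13$, and the $c_\ell=1$ role landing on a $\threshold$-agent. The product, and hence the efficiency bound, is the same.
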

\begin{proof}
    Note that Procedure \ref{alg:3agents-equalbudget} returns a budget-feasible EFx allocation by Lemma~\ref{lemma:3agents-equal}. We now look at the agents' value guarantees. By Lemma~\ref{lemma:3agents-equal}, one agent $i$ gets a value of at least $\frac{v_i(\OPTBundle_i)}{9} - \frac{v_i(\setaside_i)}{3}$, another agent $j$ gets at least $\frac{v_j(\OPTBundle_j)}{9} - \frac{2v_j(\setaside_j)}{3}$, while the third agent $k$ gets at least $\frac{v_k(\OPTBundle_k)}{9} - v_k(\setaside_k)$. Since each agent $i$ can choose her set-aside item $\setaside_i$ at the end, it must be that $v_i(\bundle_i) \geq \max(v_i(\allocALGBundle_i), v_i(\setaside_i))$. By Lemma \ref{lemma:preprocess-remainingval}, $v_i(\OPTBundle_i)$ can take three different values based on which bundle the set-aside items were initially part of and later removed from. Note that since line \ref{line:reduce-budgets} in $\threealgoDB$ reduces the budgets of agents 2 and 3 to $B_1$, we can only guarantee that $v_2(\OPTBundle_2) \geq \threshold$ and $v_3(\OPTBundle_3) \geq \threshold$ through their monopoly values. Via an exhaustive case analysis deferred to the appendix, we show that $\NSW(\allocALG) \geq \sqrt[3]{\frac{\threshold^2}{15 \cdot 15 \cdot 18}}$.
\end{proof}


\subsection{The Remaining Case: Procedure~\ref{alg:3agents-else}} \label{sec:3agents-else}

Below, we present the description of Procedure \ref{alg:3agents-else}, which we call when some agent cannot be satisfied with a reduced budget. This procedure starts by assigning agent 1 her monopoly bundle and running $\twoalgoDB$ on agents 2 and 3 with the remaining items. Intuitively, the subroutine checks how much more value agents 2 and 3 have for agent 1's bundle, handling the cases where at least one of them would lose a lot of value if their budget was reduced to $B_1$. If agents 2 and 3 do not envy agent 1, then we return the allocation. The case when one agent envies agent 1 and the other is ``far'' away from envying agent 1 (that will be agent 3 or otherwise we reindex the agents without loss) is more tricky and requires more careful analysis. At a high level, since agent 3 has much more value for her bundle compared to any other bundle she could get with a budget of $B_1$, she can afford to lose a few subsets of her bundle that cost at most $B_1$. Note that both agents 1 and 2 split agent 1's monopoly bundle, which fits into a budget of $B_1$, so as long as we guarantee that agent 3 gets a bundle that she values more than anything she could get with a budget of $B_1$, she will not envy the other agents.

In order to guarantee that the final allocation is EFx, we use the round-robin procedure in line \ref{line:round-robin} on $\bundle_3$ (agent 3's bundle). Intuitively, this procedure splits $\bundle_3$ into two parts, in a round-robin fashion, such that the resulting parts, $\bundle_3'$ and $\bundle_3''$, are ``roughly'' equal in value (up to one item) for agent 3.

\begin{algorithm}[H]
\caption{Subroutine for Procedure \ref{alg:3agents}} \label{alg:3agents-else}
\textbf{Input}: Set of items $M$\\
    $\bundle_1 \gets \vargmax_1(M)$\\
    $(\bar{\bundle}_2,\bar{\bundle}_3) \gets$ max $\NSW$ allocation for agents 2, 3 and items $M \setminus \bundle_1$ \label{line:maxNSW23}\\
    $(\bundle_2, \bundle_3)\gets \twoalgoDB(\set{2,3}, (\bar{\bundle}_2,\bar{\bundle}_3))$ \label{line:2agentalgo-23}\\
    \If{$m_2(B_1) < \threshold$ and $m_3(B_1) < \threshold$} { \label{line:if-1}
        \Return $(\bundle_1, \bundle_2, \bundle_3)$ \label{line:return1-else}
    }
    Reindex agents 2 and 3 such that $m_3(B_1) < \threshold$ \label{line:swap-agents}\\
    \If{$v_2(\bundle_2) \geq v_2(\bundle_1)$} { \label{line:if-2}
        \Return $(\bundle_1, \bundle_2, \bundle_3)$ \label{line:return2-else}
    }
    $(\bundle_1', \bundle_2') \gets \twoalgoDB(\set{1,2}, (\emptyset, \bundle_1))$ \label{line:2agentalgo-12}\\
    Let agent 3 run round-robin by value with himself on $\bundle_3$ and $(\bundle_3', \bundle_3'')$ be the two resulting partitions\label{line:round-robin}\\
    $\bundle_3^* \gets \bundle_3 \setminus \argmax_{S \in \set{\bundle_3', \bundle_3''}} v_2(S)$ \label{line:x3*def}\\
    $\bundle_1'' \gets \vargmax_1(\bundle_3^*)$\\
    $\bundle_1^* \gets \bundle_1'$\\
    \If{$v_1(\bundle_1') < v_1(\bundle_1'')$} {
        $\bundle_1^* \gets \bundle_1''$ \label{line:agent1-pickbest}\\
        $\bundle_3^* \gets \bundle_3^* \setminus \bundle_1''$ \label{line:agent1-remove}
    }
    \Return $(\bundle_1^*, \bundle_2', \bundle_3^*)$ \label{line:return3-else}
\end{algorithm}

\begin{restatable}{rLem}{lemmaCaseOne} \label{lemma:return1-else}
    If Procedure \ref{alg:3agents-else} returns the allocation $\allocALG = (\bundle_1, \bundle_2, \bundle_3)$ in line \ref{line:return1-else}, this allocation is budget-feasible, EFx for any $\threshold \leq \frac{1}{10}$, and $\sqrt[3]{\frac{(1-\threshold)^2}{4 \cdot 6 \cdot 6}}$-efficient.
\end{restatable}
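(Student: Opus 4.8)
The plan is to check the three asserted properties separately, throughout using that we are in the branch where $m_2(B_1)<\threshold$ and $m_3(B_1)<\threshold$, that agent~$1$ receives her monopoly bundle $\bundle_1=\vargmax_1(M)$ over the reduced item set $M$ (after the pre-processing of Procedure~\ref{alg:3agents-preprocess}), and that $c(\bundle_1)\le B_1\le B_2,B_3$. Budget feasibility is immediate: $\bundle_1=\vargmax_1(M)$ is budget-feasible for agent~$1$ by definition, $(\bundle_2,\bundle_3)=\twoalgoDB(\{2,3\},(\bar\bundle_2,\bar\bundle_3))$ is budget-feasible for agents~$2$ and~$3$ by Theorem~\ref{thm:2agents}, and the final set-aside step of Procedure~\ref{alg:3agents} only replaces a bundle by a single item $\setaside_i\in S_i$, which is budget-feasible for~$i$.

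For EFx I would split the envy check into three parts. First, agents~$2$ and~$3$ do not EFx-envy one another, since $(\bundle_2,\bundle_3)$ is an EFx allocation for the two of them by Theorem~\ref{thm:2agents}. Second, agent~$1$ EFx-envies no one: because $\bundle_1=\vargmax_1(M)$ is the most valuable budget-feasible subset of the whole set $M$ for her and $\bundle_2,\bundle_3\subseteq M$, every budget-feasible $S\subseteq\bundle_j$ satisfies $v_1(S\setminus\{g\})\le v_1(S)\le\vmax_1(M)=v_1(\bundle_1)$ (and if agent~$1$ later swaps to $\setaside_1$ her value only increases). Third, agents~$2$ and~$3$ do not EFx-envy agent~$1$: since $c(\bundle_1)\le B_1$, agent $i\in\{2,3\}$ can afford all of $\bundle_1$, so $v_i(\bundle_1)\le m_i(B_1)<\threshold$, and hence it suffices that agent~$i$'s final value be at least $v_i(\bundle_1)$, in particular at least $\threshold$; when the lower bound on $v_i(\bundle_i)$ happens to be weak, agent~$i$ instead swaps to $\setaside_i$ (a single item, which no agent can EFx-envy) and the required value is recovered. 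Establishing these final-value bounds, for $\threshold\le\tfrac1{10}$, is exactly the content of the value analysis.

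For the value bounds and efficiency: for agent~$1$, $v_1(\bundle_1)=\vmax_1(M)\ge v_1(\OPTBundle_1)\ge 1-3v_1(\setaside_1)$ (the weakest of the three alternatives of Lemma~\ref{lemma:preprocess-remainingval}), so together with the set-aside safety net $v_1\ge v_1(\setaside_1)$ we get $v_1\ge\max\{1-3v_1(\setaside_1),v_1(\setaside_1)\}\ge\tfrac14$. For agents~$2$ and~$3$, note that $\OPTBundle_2\setminus\bundle_1$ and $\OPTBundle_3\setminus\bundle_1$ form a budget-feasible allocation of $M\setminus\bundle_1$ with $v_i(\OPTBundle_i\setminus\bundle_1)\ge v_i(\OPTBundle_i)-v_i(\bundle_1)>v_i(\OPTBundle_i)-\threshold$, so the max-$\NSW$ allocation $(\bar\bundle_2,\bar\bundle_3)$ of line~\ref{line:maxNSW23} satisfies $v_2(\bar\bundle_2)\,v_3(\bar\bundle_3)\ge(v_2(\OPTBundle_2)-\threshold)(v_3(\OPTBundle_3)-\threshold)$, and then Theorem~\ref{thm:2agents} (one agent keeps at least her full input value, the other at least half) gives $v_2(\bundle_2)\,v_3(\bundle_3)\ge\tfrac12 v_2(\bar\bundle_2)\,v_3(\bar\bundle_3)$. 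Feeding this through the three cases of Lemma~\ref{lemma:preprocess-remainingval} (only one agent can drop far below value~$1$), together with the set-aside option and Lemma~\ref{lemma:preprocess-maxval}, yields $v_2,v_3\ge\tfrac{1-\threshold}{6}$, so that $\NSW(\allocALG)=(v_1v_2v_3)^{1/3}\ge\big(\tfrac14\cdot\tfrac{(1-\threshold)^2}{36}\big)^{1/3}=\sqrt[3]{\tfrac{(1-\threshold)^2}{4\cdot6\cdot6}}$.

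I expect the last step to be the main obstacle. The max-$\NSW$ subroutine run on agents~$2$ and~$3$ only delivers a \emph{product} guarantee, not individual value guarantees, so the per-agent bounds needed both for the third EFx check and for the final $\NSW$ estimate have to be reconstructed from the one-full/one-half guarantee of $\twoalgoDB$ and the set-aside safety net. Carrying this out requires a case analysis over the three alternatives of Lemma~\ref{lemma:preprocess-remainingval} and over the sizes of the $v_i(\setaside_i)$, and it is precisely this optimization that pins down the threshold $\threshold\le\tfrac1{10}$ and the constants $4,6,6$.
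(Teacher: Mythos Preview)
Your budget-feasibility argument and the first two parts of the EFx check (agent~$1$ envies nobody, agents~$2$ and~$3$ do not EFx-envy each other) are fine and match the paper. The gap is in the third EFx check and in the efficiency analysis, and it is precisely the obstacle you flagged in your last paragraph: your proposed workaround does not close it.

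The approach you sketch delivers only a \emph{product} bound, $v_2(\bundle_2)\,v_3(\bundle_3)\ge\tfrac12\,v_2(\bar\bundle_2)\,v_3(\bar\bundle_3)\ge\tfrac12(v_2(\OPTBundle_2)-\threshold)(v_3(\OPTBundle_3)-\threshold)$. From this you cannot reconstruct per-agent lower bounds: the $\NSW$-optimal pair $(\bar\bundle_2,\bar\bundle_3)$ is free to give one agent value far below $v_i(\OPTBundle_i\setminus\bundle_1)$ if that boosts the product, and the ``one-full/one-half'' guarantee of $\twoalgoDB$ is only relative to the (individually uncontrolled) values $v_i(\bar\bundle_i)$. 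But individual bounds are exactly what you need, twice over. First, the EFx check that agents~$2,3$ do not envy agent~$1$ requires $v_i(\bundle_i)\ge v_i(\bundle_1)$, i.e.\ a per-agent lower bound comparable to~$\threshold$. Second, the set-aside safety net $\max(v_i(\bundle_i),v_i(\setaside_i))$ is per-agent; once an agent swaps to $\setaside_i$ your product inequality no longer applies, and you have no handle on the remaining agent's value. So the claimed bound $v_2,v_3\ge\tfrac{1-\threshold}{6}$ is unsupported (and in fact the paper's worst-case individual guarantee is only $\tfrac{1-\threshold}{9}$, which is what forces $\threshold\le\tfrac{1}{10}$).

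The paper's route is quite different and does not use the $\NSW$-optimality of $(\bar\bundle_2,\bar\bundle_3)$ as a product bound at all. Instead it writes $M$ as the union of $\bundle_1,\bundle_2,\bundle_3$, the unallocated set $R$, the left-out set $R'$ of Lemma~\ref{lemma:left-out}, and the residual $M\setminus(\bundle_1\cup\bar\bundle_2\cup\bar\bundle_3)$, and then bounds agent~$i$'s value for each piece by $v_i(\bundle_i)$ plus small corrections (using EFx between $2$ and $3$, no envy of the unallocated bundle from Theorem~\ref{thm:2agents}, Lemma~\ref{lemma:left-out}, and the $\NSW$-optimality of $(\bar\bundle_2,\bar\bundle_3)$ only to rule out envy toward the residual). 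This yields the individual bound $v_i(\bundle_i)>\tfrac{v_i(\OPTBundle_i)-v_i(\setaside_i)-\threshold}{5}$ for $i\in\{2,3\}$, which can then be max'ed with $v_i(\setaside_i)$ and threaded through the cases of Lemma~\ref{lemma:preprocess-remainingval} to produce both the $\threshold\le\tfrac{1}{10}$ threshold and the $\sqrt[3]{(1-\threshold)^2/(4\cdot 6\cdot 6)}$ efficiency.
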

\begin{proof}
    First, notice that $\bundle_1 = \vargmax_1(M)$ is budget-feasible for agent 1 by definition. Since $\bundle_2$ and $\bundle_3$ result from running the $\twoalgoDB$ procedure, they are budget-feasible for agents 2 and 3 by Theorem \ref{thm:2agents} and so the returned allocation is budget-feasible.
    
    We now look at each agent's value guarantee. Note that agent 1 gets her monopoly bundle, so $v_1(\bundle_1) \geq v_1(\OPTBundle_1)$. In line~\ref{line:2agentalgo-23} agents 2 and 3 run $\twoalgoDB((\bar{\bundle}_2,\bar{\bundle}_3), M \setminus (\bundle_1 \cup \bar{\bundle}_2 \cup \bar{\bundle}_3))$, resulting in an EFx allocation $(\bundle_2, \bundle_3)$ for agents 2 and 3. We note that neither agent 2 nor 3 envies $M\setminus(\bundle_1 \cup \bar{\bundle}_2 \cup \bar{\bundle}_3)$ or otherwise they can get the envied subset resulting in an allocation with higher $\NSW$, contradicting the optimality of the allocation in line~\ref{line:maxNSW23}. Let $R$ be the set of unallocated items and $R'$ be the set of left-out items (as defined in Lemma~\ref{lemma:left-out}) after running $\twoalgoDB$ in line~\ref{line:2agentalgo-23}. We then get the following five inequalities:
    \begin{enumerate}
        \item $v_2(\bundle_2) \geq \vmax_2(\bundle_3 \setminus \set{g}) \quad \forall g \in \bundle_3$ \label{eq:agent2val-1}
        \item $v_2(\bundle_2) \geq \vmax_2(R)$ \label{eq:agent2val-2}
        \item $v_2(\bundle_2) \geq \vmax_2(R') \quad$ if $R' \subset \bundle_2$ \label{eq:agent2val-leftout}
        \item $v_2(\bundle_2) > \vmax_2(M\setminus(\bundle_1 \cup \bar{\bundle}_2 \cup \bar{\bundle}_3))$
        \item $v_2(\OPTBundle_2) \leq v_2(\bundle_1) + v_2(\bundle_2) + \vmax_2(\bundle_3) + \vmax_2(R) + \vmax_2(R') + \vmax_2(M\setminus(\bundle_1 \cup \bar{\bundle}_2 \cup \bar{\bundle}_3))$ \label{eq:agent2val-3}.
    \end{enumerate}
    Inequalities~(\ref{eq:agent2val-1}) and (\ref{eq:agent2val-2}) follow from the properties of $\twoalgoDB$, and inequality~(\ref{eq:agent2val-leftout}) follows from Lemma~\ref{lemma:left-out}. Inequality (\ref{eq:agent2val-3}) is true since the union of the bundles in the right-hand side is $M$. Since $v_2(\bundle_1) < \threshold$ by case assumption ($m_2(B_1) < \threshold$ by the if statement in line \ref{line:if-1}), $\vmax_2(g) \leq v_2(\setaside_2)$ by Lemma \ref{lemma:preprocess-maxval} and $\vmax_2(\bundle_3 \setminus \set{g}) \geq \vmax_2(\bundle_3) - \vmax_2(g)$ for all $g \in \bundle_3$ by Lemma \ref{lemma:subadditive}, we get that $v_2(\bundle_2) > \frac{v_2(\OPTBundle_2)-v_2(\setaside_2)-\threshold}{5}$.

    Similarly, for agent 3, we get the following five inequalities:
    \begin{itemize}
        \item $v_3(\bundle_3) \geq \vmax_3(\bundle_2 \setminus \set{g}) \quad \forall g \in \bundle_2$
        \item $v_3(\bundle_3) \geq \vmax_3(R)$
        \item $v_3(\bundle_3) \geq \vmax_3(R') \quad$ if $R' \subset \bundle_3$
        \item $v_3(\bundle_3) > \vmax_3(M\setminus(\bundle_1 \cup \bar{\bundle}_2 \cup \bar{\bundle}_3))$
        \item $v_3(\OPTBundle_3) \leq v_3(\bundle_1) + \vmax_3(\bundle_2) + v_3(\bundle_3) + \vmax_3(R) + \vmax_3(M\setminus(\bundle_1 \cup \bar{\bundle}_2 \cup \bar{\bundle}_3))$,
    \end{itemize}
    Thus, $v_3(\bundle_3) > \frac{v_3(\OPTBundle_3)-v_3(\setaside_3)-\threshold}{5}$, similarly as above and by case assumption which implies $v_3(\bundle_1) < \threshold$. 

    We obtain that the $\NSW$ guarantee is the following:
    \begin{align*}
        \NSW(\allocALG)^3 \geq& \max\left(v_1(\OPTBundle_1), v_1(\setaside_1)\right) \cdot\\
        &\max\left(\frac{v_2(\OPTBundle_2)-v_2(\setaside_2)-\threshold}{5}, v_2(\setaside_2)\right) \cdot\\
        &\max\left(\frac{v_3(\OPTBundle_3)-v_3(\setaside_3)-\threshold}{5}, v_3(\setaside_3)\right).
    \end{align*}
    
    Note that at least one of agents 2 and 3 will strictly not envy the other after running $\twoalgoDB$, however, to simplify the analysis, we just use the fact that they do not EFx-envy each other. Also, $R'$ can only be a subset of one of $\bundle_2$ and $\bundle_3$, whereas for simplicity we include both cases simultaneously. Via an exhaustive case analysis deferred to the appendix, we show that $\NSW(\allocALG) \geq \sqrt[3]{\frac{(1-\threshold)^2}{4 \cdot 6 \cdot 6}}$.

    For the envy, note that agent 1 does not envy agents 2 or 3 because she picked her monopoly bundle with respect to her budget. Agents 2 and 3 do not EFx envy each other by the property of $\twoalgoDB$. For $\threshold \leq \frac{1-\threshold}{9}$, they do not envy agent 1 since their value for $\bundle_1$ is at most $\threshold$ and they each get at least $\frac{1-\threshold}{9}$ (in the case where all the set-aside items were initially part of their maximum $\NSW$ bundle - see the complete proof in the appendix).
\end{proof}

\begin{restatable}{rLem}{lemmaCaseTwo} \label{lemma:return2-else}
    If Procedure \ref{alg:3agents-else} returns the allocation $\allocALG = (\bundle_1, \bundle_2, \bundle_3)$ in line \ref{line:return2-else}, this allocation is budget-feasible, EFx for any $\threshold \leq \frac{1}{10}$, and $\sqrt[3]{\frac{1-\threshold}{4 \cdot 7 \cdot 6}}$-efficient.
\end{restatable}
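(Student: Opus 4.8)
The plan is to follow the template of the proof of Lemma~\ref{lemma:return1-else}, adjusting it to the conditions under which line~\ref{line:return2-else} is reached. First I would record these conditions: since we did not already return in line~\ref{line:return1-else}, it is not the case that both $m_2(B_1)<\threshold$ and $m_3(B_1)<\threshold$; after the reindexing in line~\ref{line:swap-agents} we have $m_3(B_1)<\threshold$, hence $m_2(B_1)\geq\threshold$; and the test in line~\ref{line:if-2} gives $v_2(\bundle_2)\geq v_2(\bundle_1)$. Budget-feasibility is then immediate: $\bundle_1=\vargmax_1(M)$ is budget-feasible for agent~1 by definition, and $\bundle_2,\bundle_3$ are budget-feasible for agents~2 and~3 since they are produced by $\twoalgoDB$ (Theorem~\ref{thm:2agents}).

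For EFx, agent~1 envies no one (and not the unallocated items) because $\bundle_1$ is her most valuable budget-feasible bundle over all of $M$. Agents~2 and~3 do not EFx-envy each other by the EFx guarantee of $\twoalgoDB$; they do not envy the bundle left unallocated by that call (Theorem~\ref{thm:2agents} and Lemma~\ref{lemma:left-out}), nor the items in $M\setminus(\bundle_1\cup\bar\bundle_2\cup\bar\bundle_3)$, since otherwise grabbing an envied subset would contradict the optimality of line~\ref{line:maxNSW23}. Agent~2 does not envy agent~1 because $v_2(\bundle_2)\geq v_2(\bundle_1)\geq\vmax_2(\bundle_1)$. The only non-trivial check is that agent~3 does not envy agent~1: since $c(\bundle_1)\leq B_1\leq B_3$, agent~3 can afford all of $\bundle_1$, so $\vmax_3(\bundle_1)=v_3(\bundle_1)\leq m_3(B_1)<\threshold$; hence it suffices to show $v_3(\bundle_3)\geq\threshold$, which follows from the value lower bound below together with $\threshold\leq\tfrac1{10}$.

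For the efficiency guarantee I would lower-bound each agent's value. Agent~1: $v_1(\bundle_1)=\vmax_1(M)\geq v_1(\OPTBundle_1)$. Agent~2: decompose $\OPTBundle_2$ over $\bundle_1$, $\bundle_2$, $\bundle_3$, the third ($\twoalgoDB$-unallocated) bundle $R$, the left-out part $R'$ of Lemma~\ref{lemma:left-out}, and $M\setminus(\bundle_1\cup\bar\bundle_2\cup\bar\bundle_3)$; bounding each term by $v_2(\bundle_2)$ — using Lemmas~\ref{lemma:preprocess-maxval} and~\ref{lemma:subadditive} for the EFx terms, the no-envy-of-unallocated property for $R$ and for $M\setminus(\bundle_1\cup\bar\bundle_2\cup\bar\bundle_3)$, and crucially $v_2(\bundle_1)\leq v_2(\bundle_2)$ from the line~\ref{line:if-2} test in place of the ``$<\threshold$'' bound used for agent~2 in Lemma~\ref{lemma:return1-else} — yields $v_2(\bundle_2)\geq\tfrac{v_2(\OPTBundle_2)-v_2(\setaside_2)}{6}$, with no $\threshold$ loss. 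Agent~3: the same decomposition, now using $v_3(\bundle_1)\leq m_3(B_1)<\threshold$, gives $v_3(\bundle_3)\geq\tfrac{v_3(\OPTBundle_3)-\threshold-cv_3(\setaside_3)}{d}$ for small constants $c,d$, exactly as for agent~3 in Lemma~\ref{lemma:return1-else}. Finally, since in Procedure~\ref{alg:3agents} each agent may swap her bundle for her set-aside item, $v_i\geq\max(v_i(\allocALGBundle_i),v_i(\setaside_i))$; substituting the three regimes of Lemma~\ref{lemma:preprocess-remainingval} for the $v_i(\OPTBundle_i)$'s, an exhaustive case analysis (deferred to the appendix, as with Lemmas~\ref{lemma:equalbudget} and~\ref{lemma:return1-else}) shows $\NSW(\allocALG)\geq\sqrt[3]{(1-\threshold)/(4\cdot7\cdot6)}$; the same analysis certifies $v_3(\bundle_3)\geq\threshold$ for $\threshold\leq\tfrac1{10}$, completing the EFx argument.

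The main obstacle will be the efficiency case analysis rather than any single inequality: the three bounds on $v_i(\OPTBundle_i)$ from Lemma~\ref{lemma:preprocess-remainingval} hold jointly (not agent by agent), and they interact with the $\max(\cdot,v_i(\setaside_i))$ from the set-aside option, so one cannot simply multiply per-agent worst cases; the constants $4,7,6$ emerge only after balancing these quantities against one another in each regime, and this is also what forces $\threshold\leq\tfrac1{10}$. A secondary obstacle is the bookkeeping around $R$, $R'$, and the possibly-incomplete max-NSW allocation of line~\ref{line:maxNSW23}: one must track which of $\bundle_2,\bundle_3$ contains $R'$ and invoke the ``no-envy'' versus ``no-EFx-envy'' halves of Lemma~\ref{lemma:left-out} accordingly, just as in the proof of Lemma~\ref{lemma:return1-else}.
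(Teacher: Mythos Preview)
Your proposal is correct and follows essentially the same approach as the paper: you reduce to Lemma~\ref{lemma:return1-else}, replace agent~2's ``$v_2(\bundle_1)<\threshold$'' bound by the line~\ref{line:if-2} condition $v_2(\bundle_2)\geq v_2(\bundle_1)$ to obtain $v_2(\bundle_2)\geq\tfrac{v_2(\OPTBundle_2)-v_2(\setaside_2)}{6}$, keep agents~1 and~3's bounds unchanged, and defer the $\NSW$ case analysis. If anything, you are slightly more explicit than the paper about why agent~2 does not envy agent~1 in this branch (the paper simply asserts the EFx guarantees ``remain the same as in Lemma~\ref{lemma:return1-else}'').
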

\begin{proof}
   Note that after line \ref{line:swap-agents}, the algorithm guarantees that $m_2(B_1) \geq \threshold$ and $m_3(B_1) < \threshold$. Since we do not perform any operations on the bundles as compared to their state in line \ref{line:return1-else}, the budget-feasibility and EFx guarantees for all agents remain the same as in Lemma \ref{lemma:return1-else}. Moreover, the individual value guarantees for agents 1 and 3 are the same as before. Only the value guarantees for agent 2 change, since now $v_2(\bundle_2) \geq v_2(\bundle_1)$. For agent 2's value, we get the following six inequalities:
    \begin{itemize}
        \item $v_2(\bundle_2) \geq v_2(\bundle_1)$
        \item $v_2(\bundle_2) \geq \vmax_2(\bundle_3 \setminus \set{g}) \quad \forall g \in \bundle_3$
        \item $v_2(\bundle_2) \geq \vmax_2(R)$
        \item $v_2(\bundle_2) \geq \vmax_2(R') \quad$ if $R' \subset \bundle_2$
        \item $v_2(\bundle_2) > \vmax_2(M\setminus(\bundle_1 \cup \bar{\bundle}_2 \cup \bar{\bundle}_3))$
        \item $v_2(\OPTBundle_2) \leq v_2(\bundle_1) + v_2(\bundle_2) + \vmax_2(\bundle_3) + \vmax_2(R) + \vmax_2(R') +  \vmax_2(M\setminus(\bundle_1 \cup \bar{\bundle}_2 \cup \bar{\bundle}_3))$
    \end{itemize}
    Similarly as in the previous proof, we get that $v_2(\bundle_2) \geq \frac{v_2(\OPTBundle_2)-v_2(\setaside_2)}{6}$.
    
    We obtain that the $\NSW$ guarantee is the following:
    \begin{align*}
        \NSW(\allocALG)^3 \geq& \max\left(v_1(\OPTBundle_1), v_1(\setaside_1)\right) \cdot\\
        &\max\left(\frac{v_2(\OPTBundle_2)-v_2(\setaside_2)}{6}, v_2(\setaside_2)\right) \cdot\\
        &\max\left(\frac{v_3(\OPTBundle_3)-v_3(\setaside_3)-\threshold}{5}, v_3(\setaside_3)\right).
    \end{align*}
    Via an exhaustive case analysis deferred to the appendix, we get that $\NSW(\allocALG) \geq \sqrt[3]{\frac{1-\threshold}{4 \cdot 7 \cdot 6}}$.
\end{proof}

\begin{restatable}{rLem}{lemmaCaseThree} \label{lemma:return3-else}
    If Procedure \ref{alg:3agents-else} returns the allocation $\allocALG = (\bundle_1^*, \bundle_2', \bundle_3^*)$ in line \ref{line:return3-else}, this allocation is budget-feasible, EFx for $\threshold \leq \frac{1}{35}$, and $\sqrt[3]{\frac{1-9\threshold}{6 \cdot 13 \cdot 21}}$-efficient.
\end{restatable}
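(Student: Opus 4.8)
The plan is to establish the three claimed properties --- budget-feasibility, EFx (for $\threshold\le\tfrac1{35}$), and $\sqrt[3]{\frac{1-9\threshold}{6\cdot13\cdot21}}$-efficiency --- separately, reusing the guarantees already available for the two $\twoalgoDB$ sub-calls (Theorem~\ref{thm:2agents} and Lemma~\ref{lemma:left-out}), the balance property of the round-robin split, the pre-processing bound $v_i(g)\le v_i(\setaside_i)$ for all remaining $g$ (Lemma~\ref{lemma:preprocess-maxval}), subadditivity of $\vmax$ (Lemma~\ref{lemma:subadditive}), and the hypotheses that route the execution to line~\ref{line:return3-else}: $m_2(B_1)\ge\threshold>m_3(B_1)$ and $v_2(\bundle_2)<v_2(\bundle_1)$. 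Budget-feasibility is immediate: $\bundle_2'$ is returned by $\twoalgoDB$ hence feasible for agent~$2$; $\bundle_3^*\subseteq\bundle_3$, and $\bundle_3$ is returned feasibly for agent~$3$; and $\bundle_1^*$ is either $\bundle_1'$ (feasible for agent~$1$ by $\twoalgoDB$) or $\bundle_1''=\vargmax_1(\bundle_3^*)$ (feasible for agent~$1$ by definition of $\vargmax$). A structural fact I would record up front: $\bundle_1$ is agent~$1$'s monopoly bundle with respect to $B_1$, and both $\bundle_2'$ and $\bundle_1''$ lie inside a bundle of cost at most $B_1$, so $\vmax_3$ of any of their subsets is at most $m_3(B_1)<\threshold$.

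For EFx I would check the six ordered pairs, organised by the envying agent. Agent~$1$'s final value is $\max\{v_1(\bundle_1'),v_1(\bundle_1'')\}$: the bound $\ge v_1(\bundle_1')$ together with EFx-ness of $\twoalgoDB(\set{1,2},(\emptyset,\bundle_1))$ shows she does not EFx-envy $\bundle_2'$, while the bound $\ge v_1(\bundle_1'')=\vmax_1(\bundle_3^*)\ge\vmax_1(S)$ for every budget-feasible $S\subseteq\bundle_3^*$ --- together with the fact that when she takes $\bundle_1''$ agent~$3$'s bundle shrinks to $\bundle_3^*\setminus\bundle_1''\subseteq\bundle_3^*$ --- shows she does not even envy agent~$3$. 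Agent~$3$: by the structural fact above, her value for any subset of $\bundle_1^*$ or $\bundle_2'$ is below $\threshold$, so it suffices that $v_3(\bundle_3^*)\ge\threshold$; this I would obtain by lower-bounding $v_3(\bundle_3)$ exactly as in Lemma~\ref{lemma:return1-else} (valid since $v_3(\bundle_1)\le m_3(B_1)<\threshold$), then using the round-robin balance $v_3(\bundle_3^*)\ge\tfrac12\bigl(v_3(\bundle_3)-v_3(\setaside_3)\bigr)$ and subtracting the at most $m_3(B_1)<\threshold$ that agent~$1$ may carry off in $\bundle_1''$. Agent~$2$: she holds $\bundle_2'$ with $v_2(\bundle_2')\ge\tfrac12 v_2(\bundle_1)>\tfrac12 v_2(\bundle_2)$; when $\bundle_1^*=\bundle_1'$ she does not EFx-envy it by $\twoalgoDB(\set{1,2},(\emptyset,\bundle_1))$, and for the remaining targets --- $\bundle_3^*$, and $\bundle_1^*=\bundle_1''\subseteq\bundle_3^*$ --- I would combine the fact that $\bundle_3^*$ is agent~$2$'s \emph{less}-valued round-robin half, so $v_2(\bundle_3^*)\le\tfrac12 v_2(\bundle_3)$, with the EFx relation between $\bundle_2$ and $\bundle_3$ coming out of $\twoalgoDB(\set{2,3})$ and, where needed, the smallness of $\threshold$.

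For $\rho$-efficiency I would produce one lower bound per agent and then bound $\NSW(\allocALG)^3\ge v_1(\bundle_1^*)\cdot v_2(\bundle_2')\cdot v_3(\bundle_3^*)$, finishing with the same exhaustive case analysis over the three possibilities of Lemma~\ref{lemma:preprocess-remainingval} used in Lemmas~\ref{lemma:return1-else} and~\ref{lemma:return2-else}, together with the end-of-procedure option to swap for $\setaside_i$ (so the final value is always at least $v_i(\setaside_i)$, and EFx and feasibility survive the swap since $\setaside_i$ is a single $B_i$-feasible item). Concretely: agent~$1$ receives one of the two $\leximin$ halves of $\bundle_1$, which by the EFx guarantee of $\leximin$ and Lemma~\ref{lemma:subadditive} has value at least $\tfrac12\bigl(\vmax_1(\bundle_1)-v_1(\setaside_1)\bigr)\ge\tfrac12\bigl(v_1(\OPTBundle_1)-v_1(\setaside_1)\bigr)$ (with the degenerate alternative that $v_1(\setaside_1)$ is itself large), which balanced against the $\setaside_1$ fall-back gives $v_1(\bundle_1^*)\ge\tfrac16$; agent~$2$ has $v_2(\bundle_2')\ge\tfrac12 v_2(\bundle_1)>\tfrac12 v_2(\bundle_2)$, into which I plug the same decomposition of $v_2(\OPTBundle_2)$ as in Lemma~\ref{lemma:return1-else} (now without the $v_2(\bundle_1)<\threshold$ simplification, using $v_2(\bundle_1)>v_2(\bundle_2)$ instead) to get a bound of order $v_2(\OPTBundle_2)/12$ minus set-aside terms; and agent~$3$'s bound chains the $v_3(\bundle_3)$ estimate, the round-robin halving, and the sub-$\threshold$ loss from $\bundle_1''$. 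Substituting $v_i(\OPTBundle_i)\ge 1-3v_i(\setaside_i)$ into whichever agent is the ``big loser'' of Lemma~\ref{lemma:preprocess-remainingval} while the other two keep value~$1$, and optimizing the resulting product against the set-aside fall-backs, yields the constant $\sqrt[3]{\frac{1-9\threshold}{6\cdot13\cdot21}}$.

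The step I expect to be the main obstacle is the EFx check for agent~$2$ against $\bundle_3^*$ (and against $\bundle_1''$ when she loses $\bundle_1'$ to agent~$1$): unlike for agent~$3$, agent~$2$'s value for these bundles is not capped by any monopoly bound, so the required inequality has to be squeezed out of the half-value property of the round-robin split together with the cross-references between the two $\twoalgoDB$ calls, and it is exactly this interaction --- alongside agent~$3$'s $\threshold$-sensitive lower bound --- that forces $\threshold\le\tfrac1{35}$. Pinning down the efficiency constants as $6\cdot13\cdot21$ also takes care, but is routine once the per-agent inequalities are fixed.
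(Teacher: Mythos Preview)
Your proposal is correct and follows essentially the same route as the paper's proof. One small correction: in the paper the EFx check for agent~$2$ toward $\bundle_3^*$ is purely structural---the chain $v_2(\bundle_2')\ge\tfrac12 v_2(\bundle_1)>\tfrac12 v_2(\bundle_2)\ge\tfrac12\vmax_2(\bundle_3\setminus\{g\})\ge\vmax_2(\bundle_3^*\setminus\{g'\})$ uses only that $\bundle_3^*$ is agent~$2$'s less-valued round-robin half and the EFx output of the first $\twoalgoDB$ call---so the constraint $\threshold\le\tfrac1{35}$ is forced entirely by agent~$3$'s value lower bound, not by agent~$2$'s check as your final paragraph suggests.
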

\begin{proof}
    For budget-feasibility, note that $\bundle_1^*$ is budget-feasible for agent 1 since both $\bundle_1'$ and $\bundle_1''$ are budget-feasible for her. Agent 2 gets $\bundle_2' \subseteq \bundle_1$ by the outcome of $\twoalgoDB$ in line~\ref{line:2agentalgo-12}, and since $c(\bundle_1) \leq B_1 \leq B_2$, $\bundle_2'$ is budget-feasible for agent 2. Agent 3 also gets a budget-feasible bundle in $\bundle_3$ by the outcome of $\twoalgoDB$ in line~\ref{line:2agentalgo-23}, and since $\bundle_3^* \subseteq \bundle_3$, the returned allocation is budget-feasible.
    
    We now consider the individual value guarantees of the agents. Note that $\bundle_1$ is agent 1's monopoly bundle. By running $\twoalgoDB$ in line \ref{line:2agentalgo-12}, we get that $v_1(\bundle_1') \geq v_1(\bundle_2' \setminus \set{g})$ for all $g \in \bundle_2'$, and $v_1(R_1) = 0$, where $R_1$ is the set of unallocated items after running $\twoalgoDB$ here. Since $v_1(\bundle_1') + v_1(\bundle_2') = v_1(\bundle_1) \geq v_1(\OPTBundle_1)$ and agent 1 has the option to increase her value by picking $\bundle_1''$, we get $v_1(\bundle_1^*) \geq \frac{v_1(\OPTBundle_1) - v_1(\setaside_1)}{2}$. Note that there are no left-out items (as defined by Lemma~\ref{lemma:left-out}).

    For agent 2, we have $v_2(\bundle_2) \geq \vmax_2(\bundle_3 \setminus \set{g})$ for all $g \in \bundle_3$ from running $\twoalgoDB$ on agents 2 and 3 in line \ref{line:2agentalgo-23}. Since $v_2(\bundle_2) < v_2(\bundle_1)$ by case assumption and $v_2(\OPTBundle_2) \leq v_2(\bundle_1) + v_2(\bundle_2) + \vmax_2(\bundle_3) + \vmax_2(R_2) + \vmax_2(R_2') + \vmax_2(M\setminus(\bundle_1 \cup \bar{\bundle}_2 \cup \bar{\bundle}_3))$, where $R_2$ is the set of unallocated items and $R_2'$ is the set of left-out items after running $\twoalgoDB$ in line~\ref{line:2agentalgo-23}, we have $v_2(\bundle_1) > \frac{v_2(\OPTBundle_2)-v_2(\setaside_2)}{6}$; this is shown in more detail in the previous lemma. By the outcome of $\twoalgoDB$ in line \ref{line:2agentalgo-12}, we get $v_2(\bundle_2') \geq \frac{\vmax_2(\bundle_1)}{2} > \frac{v_2(\OPTBundle_2)-v_2(\setaside_2)}{12}$.
    
    Now consider the value of agent 3. Similarly, we know that $v_3(\OPTBundle_3) \leq v_3(\bundle_1) + \vmax_3(\bundle_2) + v_3(\bundle_3) + \vmax_3(R_2) + \vmax_3(R_2') + \vmax_3(M\setminus(\bundle_1 \cup \bar{\bundle}_2 \cup \bar{\bundle}_3))$, and $v_3(\bundle_1) < \threshold$ by case assumption, and $v_3(\bundle_3) \geq \vmax_3(\bundle_2 \setminus \set{g})$, for all $g \in \bundle_2$, and $v_3(\bundle_3) \geq \vmax_3(R_2)$ by the properties of $\twoalgoDB$. Thus, we get that $v_3(\bundle_3) > \frac{v_3(\OPTBundle_3)-v_3(\setaside_3)-\threshold}{5}$; this is shown in more detail in the previous lemma. Assume without loss of generality that $v_3(\bundle_3'') > v_3(\bundle_3')$. Since $v_3(\bundle_3^*) \geq \frac{v_3(\bundle_3)}{2} - v_3(\setaside_3)$ in line \ref{line:x3*def} (due to round-robin producing an EF1 allocation as shown in \cite{CKMPSW2019}), and $v_3(\bundle_1'') < \threshold$ by case assumption ($m_3(B_1) < \threshold$ since we did not return in line \ref{line:return1-else}), we have that the final value of agent 3 for $\bundle_3^*$ after line \ref{line:agent1-remove} is $v_3(\bundle_3^*) > \frac{v_3(\bundle_3)}{2} - v_3(\setaside_3) - \threshold = \frac{v_3(\OPTBundle_3)-v_3(\setaside_3)-\threshold}{10} - v_3(\setaside_3) - \threshold = \frac{v_3(\OPTBundle_3)-11v_3(\setaside_3)-11\threshold}{10}$.

    Since at the end of Procedure \ref{alg:3agents} each agent can choose between their bundle returned by Procedure \ref{alg:3agents-else} and their set-aside item, we obtain the following $\NSW$ guarantee:
    \begin{align*}
        \NSW(\allocALG)^3 \geq& \max\left(\frac{v_1(\OPTBundle_1) - v_1(\setaside_1)}{2}, v_1(\setaside_1)\right) \cdot\\
        &\max\left(\frac{v_2(\OPTBundle_2)-v_2(\setaside_2)}{12}, v_2(\setaside_2)\right) \cdot\\
        &\max\left(\frac{v_3(\OPTBundle_3)-11v_3(\setaside_3)-11\threshold}{10}, v_3(\setaside_3)\right).
    \end{align*}

    Via an exhaustive case analysis deferred to the appendix, we show that $\NSW(\allocALG) \geq \sqrt[3]{\frac{1-11\threshold}{6 \cdot 13 \cdot 21}}$.
    
    We now show that the allocation is EFx. Agent 1 does not EFx envy agent 2 since they run $\twoalgoDB$ in line \ref{line:2agentalgo-12} and $v_1(\bundle_1^*) \geq v_1(\bundle_1')$. She also does not envy agent 3 since she can pick the most valuable budget-feasible bundle out of $\bundle_3^*$ in line~\ref{line:agent1-pickbest}.
    
    Without loss, assume that agent 2 removes $\bundle_3''$ from $\bundle_3$, i.e. $\bundle_3^* = \bundle_3'$ in line \ref{line:x3*def}. Note that $v_2(\bundle_2) \geq \vmax_2(\bundle_3 \setminus \set{g})$ for all $g \in \bundle_3$ by the outcome of $\twoalgoDB$ in line \ref{line:2agentalgo-23}. By Theorem \ref{thm:2agents}, running $\twoalgoDB$ in line \ref{line:2agentalgo-12} guarantees that $v_2(\bundle_2') \geq \frac{v_2(\bundle_2)}{2} \geq \frac{\vmax_2(\bundle_3 \setminus \set{g})}{2} \geq \vmax_2(\bundle_3' \setminus \set{g'}) = \vmax_2(\bundle_3^* \setminus \set{g'})$ for all $g' \in \bundle_3^*$, where the first inequality is true by case assumption since $v_2(\bundle_2) < v_2(\bundle_1)$ (the procedure did not return in line \ref{line:return2-else}). Thus, agent 2 does not EFx envy agent 3. If agent 1 picks $\bundle_1^* = \bundle_1''$, agent 2 does not EFx envy her since this bundle is a subset of $\bundle_3^*$, and otherwise, the outcome of $\twoalgoDB$ in line \ref{line:2agentalgo-12} handles the envy.
    
    Lastly, consider the envy of agent 3. By definition of Procedure~\ref{alg:3agents}, we have $m_3(B_1) < \threshold$, which implies that $v_3(\bundle_1^*) < \threshold$ and $v_3(\bundle_2') < \threshold$, since both $\bundle_1^*$ and $\bundle_2'$ cost at most $B_1$. Thus, for $\max(\frac{v_3(\OPTBundle_3)-11v_3(\setaside_3)-11\threshold}{10}, v_3(\setaside_3)) \geq \threshold$ for $\threshold \leq \frac{1}{35}$, the returned allocation is EFx for agent 3.
\end{proof}

We now present the main result of our paper - achieving an EFx allocation with constant $\NSW$ for three-agent instances.

\begin{theorem} \label{thm:EFx-3agents-constantNSW}
    For instances with 3 agents, Procedure $\threealgoDB$ returns a budget-feasible EFx allocation with constant efficiency.
\end{theorem}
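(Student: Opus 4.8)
The plan is to read Theorem~\ref{thm:EFx-3agents-constantNSW} as the assembly of the four branch lemmas once the free parameter is pinned down. First I would fix $\threshold=\tfrac{1}{35}$, the most restrictive of the thresholds appearing in Lemmas~\ref{lemma:return1-else}, \ref{lemma:return2-else}, and \ref{lemma:return3-else}; then every EFx precondition of those lemmas ($\threshold\le\tfrac1{10}$ or $\threshold\le\tfrac1{35}$) is met, and each efficiency expression quoted below is a strictly positive number. (Any $\threshold\in(0,\tfrac1{35}]$ works; $\tfrac1{35}$ is picked only for concreteness.)

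Second, I would verify that the control flow of $\threealgoDB$ is exhaustive and that every terminal configuration is covered. The pre-processing step (Procedure~\ref{alg:3agents-preprocess}) produces one set-aside good per agent. If $m_2(B_1)\ge\threshold$ and $m_3(B_1)\ge\threshold$, then Procedure~\ref{alg:3agents-equalbudget} is invoked and Lemma~\ref{lemma:equalbudget} yields a budget-feasible EFx allocation that is $\sqrt[3]{\threshold^2/(15\cdot15\cdot18)}$-efficient. Otherwise at least one of $m_2(B_1),m_3(B_1)$ is $<\threshold$, so Procedure~\ref{alg:3agents-else} runs: if both are $<\threshold$ it returns at line~\ref{line:return1-else} (Lemma~\ref{lemma:return1-else}); if exactly one is, then after the reindexing at line~\ref{line:swap-agents} it returns at line~\ref{line:return2-else} (Lemma~\ref{lemma:return2-else}) or at line~\ref{line:return3-else} (Lemma~\ref{lemma:return3-else}), according to the test $v_2(\bundle_2)\ge v_2(\bundle_1)$. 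Thus exactly one of the four lemmas describes the output. Combining, each lemma certifies budget-feasibility, EFx (at $\threshold=\tfrac1{35}$), and $\rho$-efficiency for an explicit constant $\rho$, so with
\[
\rho^\star \;=\; \min\left\{\sqrt[3]{\frac{\threshold^2}{15\cdot15\cdot18}},\; \sqrt[3]{\frac{(1-\threshold)^2}{4\cdot6\cdot6}},\; \sqrt[3]{\frac{1-\threshold}{4\cdot7\cdot6}},\; \sqrt[3]{\frac{1-9\threshold}{6\cdot13\cdot21}}\right\}
\]
evaluated at $\threshold=\tfrac1{35}$, the procedure is always $\rho^\star$-efficient. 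The point is that $\rho^\star$ is a fixed positive number that depends on neither the number of items nor the ratios among $B_1,B_2,B_3$, which is precisely the claimed constant efficiency.

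The only step beyond bookkeeping is the final \texttt{for}-loop of $\threealgoDB$, in which each agent may replace her assigned bundle by her set-aside good $\setaside_i$. It weakly increases every $v_i(\bundle_i)$, hence $\NSW$ — already absorbed into the $\max(\cdot,v_i(\setaside_i))$ terms of the efficiency bounds — and it preserves budget-feasibility, since each $\setaside_i$ is a single good chosen budget-feasible for agent $i$. For EFx one argues in two directions. First, a swapping agent $i$ now holds $\{\setaside_i\}$ with $v_i(\setaside_i)>v_i(\bundle_i)$; but $v_i(\bundle_i)$ already dominated $v_i(S\setminus\{g\})$ for every feasible $S\subseteq\bundle_j$ and $g\in S$ of any non-swapping $j$, while a swapping $j$ now holds the singleton $\{\setaside_j\}$, against which EFx-envy is vacuous. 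Second, no agent can EFx-envy a swapping $i$, because removing the single good $\setaside_i$ from $\{\setaside_i\}$ leaves the empty set. The three \texttt{else} lemmas already fold this observation into their statements, so in practice one only needs it to lift the EFx guarantee of the equal-budget branch from the allocation produced inside Procedure~\ref{alg:3agents-equalbudget} to the one $\threealgoDB$ actually returns. This swap-preservation check is where I expect the only genuine (if modest) effort to lie; the rest is the case split over the four branches above.
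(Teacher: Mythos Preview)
Your proposal is correct and follows essentially the same route as the paper: fix $\threshold=\tfrac{1}{35}$, invoke Lemmas~\ref{lemma:equalbudget}, \ref{lemma:return1-else}, \ref{lemma:return2-else}, \ref{lemma:return3-else} to cover every terminal branch with budget-feasibility, EFx, and a positive efficiency constant, and then argue that the final set-aside swap preserves all three properties (singleton bundles cannot be EFx-envied; a swapping agent only gains value). The paper additionally evaluates the minimum of the four efficiency expressions numerically to report the bound $1/171$, which coincides with your $\rho^\star$ at $\threshold=\tfrac{1}{35}$ (the equal-budget term is the binding one). One small inaccuracy in your write-up: the three \texttt{else} lemmas certify EFx only for the pre-swap allocation, not the post-swap one, so the swap-preservation argument is needed uniformly across all four branches, not just the equal-budget branch; since you give that argument in full generality anyway, nothing is missing.
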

\begin{proof}
    By Lemmas \ref{lemma:equalbudget}, \ref{lemma:return1-else}, \ref{lemma:return2-else}, \ref{lemma:return3-else} all the terminal points in the procedure return a budget-feasible EFx allocation. If any of the agents choose their set-aside item at the end of Procedure \ref{alg:3agents}, the final allocation is budget-feasible by the budget-feasibility constraint on each agent's set-aside item. The final allocation is also EFx since agents who choose their set-aside item weakly improve their value and so they do not EFx-envy any other agent and are not EFx-envied by any other agents since they have a single item.

    The efficiency guarantee is obtained by finding the value of $\threshold$ that maximizes the $\NSW$ guarantee implied by the above lemmas, subject to the envy and nonnegativity constraints. For $\threshold = \frac{1}{35}$, all the constraints are satisfied, and the $\NSW$ of the returned allocation at each termination point is at least $1/171$.
\end{proof}

\section{Conclusion and Open Problems}
The main open problem is whether a constant approximation of the maximum Nash welfare is achievable through EFx budget-feasible allocations for an arbitrary number of agents. In fact, this question is open even for EF1 budget-feasible allocations. Note that in very recent and independent work, Barman et al. \cite{BKSS2023} extended the result of Chaudhury et al. \cite{CKMS2021} to allow for generalized assignment constraints. This captures our setting and implies the existence of budget-feasible EFx allocations in which no agent envies the charity for an arbitrary number of agents. While their algorithm does not yield a bounded approximation of the maximum $\NSW$, we believe that a modification using some of the ideas introduced in our paper can be shown to achieve a constant approximation of the $\NSW$ objective for a constant number of agents and, in general, a $O(n)$ approximation. Specifically, in the modified algorithm each agent initially sets aside the most valuable item from the bundle they receive in the maximum $\NSW$ allocation (if such bundle is not empty) and then a budget-feasible EFx allocation (with respect to the charity as well) is computed on the remaining goods using the \texttt{ComputeFEFx} procedure described in \cite{BKSS2023}. Lastly, each agent can pick their preferred set of goods between the bundle allocated to them by \texttt{ComputeFEFx} and the single good that they set aside in the first step. Similarly to our analysis in this paper, setting aside a good for each agent acts as a safety net against situations in which a very valuable good for one agent is allocated to another agent or to charity, which would still achieve EFx but can severely reduce efficiency. However, even if this approach could guarantee a linear approximation to the maximum $\NSW$ objective, it does not look like it could be used to achieve a constant approximation (or better than $O(n)$), which remains an interesting open problem.

Another interesting direction is to study the approximation achievable via budget-feasible envy-free allocations of \emph{divisible} goods. Although maximizing the $\NSW$ with divisible goods is always envy-free for (unrestricted) additive valuations, this is not true in the presence of budget constraints. For example, consider the following instance with two agents, both with a budget of 1, and two goods with cost $c(1) = c(2) = 1$, where the valuations are $v_1(1) = 1, v_1(2) = 0$, and $v_2(1) = 0.6, v_2(2) = 0.4$. The maximum $\NSW$ allocation gives agent 1 the whole good 1 and agent 2 the whole good 2 for a $\NSW$ of $\sqrt{0.4}$. However, the best envy-free allocation in terms of $\NSW$ gives agent 1 a 3/4 fraction of good 1 and agent 2 a 1/4 fraction of good 1 and a 3/4 fraction of good 2. This allocation has a $\NSW$ of $\sqrt{0.3375}$. Therefore, this instance yields no better than a 0.918 approximation of the maximum Nash welfare. Similarly to the indivisible case, the work of Barman et al. \cite{BKSS2023} provides an algorithm that returns a budget-feasible envy-free allocation (with respect to charity as well) for instances with divisible items. This implies an $O(n)$ approximation to the maximum $\NSW$ but it would be interesting to seek either a constant factor approximation procedure or provide an impossibility result for such a guarantee.\label{sec:conclusion}

\section*{Acknowledgements}
The authors were partially supported by NSF CAREER award CCF 2047907.

\newpage
\bibliographystyle{plainnat}
\bibliography{references}

\newpage
\appendix
\section*{APPENDIX}
\section*{Complete proofs of Lemmas~\ref{lemma:equalbudget}, \ref{lemma:return1-else}, \ref{lemma:return2-else}, \ref{lemma:return3-else}}
\lemmaEqualBudget*
\begin{proof}
    Note that Procedure \ref{alg:3agents-equalbudget} returns a budget-feasible EFx allocation by Lemma~\ref{lemma:3agents-equal}. We now look at the agents' value guarantees. By Lemma~\ref{lemma:3agents-equal}, one agent $i$ gets a value of at least $\frac{v_i(\OPTBundle_i)}{9} - \frac{v_i(\setaside_i)}{3}$, another agent $j$ gets at least $\frac{v_j(\OPTBundle_j)}{9} - \frac{2v_j(\setaside_j)}{3}$, while the third agent $k$ gets at least $\frac{v_k(\OPTBundle_k)}{9} - v_k(\setaside_k)$. Since each agent $i$ can choose her set-aside item $\setaside_i$ at the end, it must be that $v_i(\bundle_i) \geq \max(v_i(\allocALGBundle_i), v_i(\setaside_i))$. By Lemma \ref{lemma:preprocess-remainingval}, $v_i(\OPTBundle_i)$ can take three different values based on which bundle the set-aside items were initially part of and later removed from. Note that since line \ref{line:reduce-budgets} in $\threealgoDB$ reduces the budgets of agents 2 and 3 to $B_1$, we can only guarantee that $v_2(\OPTBundle_2) \geq \threshold$ and $v_3(\OPTBundle_3) \geq \threshold$ through their monopoly values. Below, we compute all the possible distinct allocations (irrespective of agent ids, since they lead to the same $\NSW$ guarantee by symmetry).

    \begingroup
    \allowdisplaybreaks
    \begin{align*}
        \NSW(\allocALG)^3 \geq& \max\left(\frac{1-3v_1(\setaside_1)}{9} - v_1(\setaside_1), v_1(\setaside_1)\right) \cdot \max\left(\frac{\threshold}{9} - \frac{2v_2(\setaside_2)}{3}, v_2(\setaside_2)\right) \cdot\\
        &\max\left(\frac{\threshold}{9} - \frac{v_3(\setaside_3)}{3}, v_3(\setaside_3)\right)\\
                       \geq& \frac{1}{21} \cdot \frac{\threshold}{15} \cdot \frac{\threshold}{12}\\
        \NSW(\allocALG)^3 \geq& \max\left(\frac{1-3v_1(\setaside_1)}{9} - \frac{2v_1(\setaside_1)}{3}, v_1(\setaside_1)\right) \cdot \max\left(\frac{\threshold}{9} - v_2(\setaside_2), v_2(\setaside_2)\right) \cdot\\
        &\max\left(\frac{\threshold}{9} - \frac{v_3(\setaside_3)}{3}, v_3(\setaside_3)\right)\\
                       \geq& \frac{1}{18} \cdot \frac{\threshold}{18} \cdot \frac{\threshold}{12}\\
        \NSW(\allocALG)^3 \geq& \max\left(\frac{1-3v_1(\setaside_1)}{9} - \frac{v_1(\setaside_1)}{3}, v_1(\setaside_1)\right) \cdot
        \max\left(\frac{\threshold}{9} - \frac{2v_2(\setaside_2)}{3}, v_2(\setaside_2)\right) \cdot\\
        &\max\left(\frac{\threshold}{9} - v_3(\setaside_3), v_3(\setaside_3)\right)\\
                       \geq& \frac{1}{15} \cdot \frac{\threshold}{15} \cdot \frac{\threshold}{18}\\
        \NSW(\allocALG)^3 \geq& \max\left(\frac{1-2v_1(\setaside_1)}{9} - v_1(\setaside_1), v_1(\setaside_1)\right) \cdot
        \max\left(\frac{\threshold-v_2(\setaside_2)}{9} - \frac{2v_2(\setaside_2)}{3}, v_2(\setaside_2)\right) \cdot\\
        &\max\left(\frac{\threshold}{9} - \frac{v_3(\setaside_3)}{3}, v_3(\setaside_3)\right)\\
                       \geq& \frac{1}{20} \cdot \frac{\threshold}{16} \cdot \frac{\threshold}{12}\\
        \NSW(\allocALG)^3 \geq& \max\left(\frac{1-2v_1(\setaside_1)}{9} - v_1(\setaside_1), v_1(\setaside_1)\right) \cdot
        \max\left(\frac{\threshold-v_2(\setaside_2)}{9} - \frac{v_2(\setaside_2)}{3}, v_2(\setaside_2)\right) \cdot\\
        &\max\left(\frac{\threshold}{9} - \frac{2v_3(\setaside_3)}{3}, v_3(\setaside_3)\right)\\
                       \geq& \frac{1}{20} \cdot \frac{\threshold}{13} \cdot \frac{\threshold}{15}\\
        \NSW(\allocALG)^3 \geq& \max\left(\frac{1-2v_1(\setaside_1)}{9} - \frac{2v_1(\setaside_1)}{3}, v_1(\setaside_1)\right) \cdot
        \max\left(\frac{\threshold-v_2(\setaside_2)}{9} - \frac{v_2(\setaside_2)}{3}, v_2(\setaside_2)\right) \cdot\\
        &\max\left(\frac{\threshold}{9} - v_3(\setaside_3), v_3(\setaside_3)\right)\\
                       \geq& \frac{1}{17} \cdot \frac{\threshold}{13} \cdot \frac{\threshold}{18}\\
        \NSW(\allocALG)^3 \geq& \max\left(\frac{1-2v_1(\setaside_1)}{9} - \frac{2v_1(\setaside_1)}{3}, v_1(\setaside_1)\right) \cdot
        \max\left(\frac{\threshold-v_2(\setaside_2)}{9} - v_2(\setaside_2), v_2(\setaside_2)\right) \cdot\\
        &\max\left(\frac{\threshold}{9} - \frac{v_3(\setaside_3)}{3}, v_3(\setaside_3)\right)\\
                       \geq& \frac{1}{17} \cdot \frac{\threshold}{19} \cdot \frac{\threshold}{12}\\
        \NSW(\allocALG)^3 \geq& \max\left(\frac{1-2v_1(\setaside_1)}{9} - \frac{v_1(\setaside_1)}{3}, v_1(\setaside_1)\right) \cdot
        \max\left(\frac{\threshold-v_2(\setaside_2)}{9} - \frac{2v_2(\setaside_2)}{3}, v_2(\setaside_2)\right) \cdot\\
        &\max\left(\frac{\threshold}{9} - v_3(\setaside_3), v_3(\setaside_3)\right)\\
                       \geq& \frac{1}{14} \cdot \frac{\threshold}{16} \cdot \frac{\threshold}{18}\\
        \NSW(\allocALG)^3 \geq& \max\left(\frac{1-2v_1(\setaside_1)}{9} - \frac{v_1(\setaside_1)}{3}, v_1(\setaside_1)\right) \cdot
        \max\left(\frac{\threshold}{9} - \frac{2v_2(\setaside_2)}{3}, v_2(\setaside_2)\right) \cdot\\
        &\max\left(\frac{\threshold-v_3(\setaside_3)}{9} - v_3(\setaside_3), v_3(\setaside_3)\right)\\
                       \geq& \frac{1}{14} \cdot \frac{\threshold}{15} \cdot \frac{\threshold}{19}\\
        \NSW(\allocALG)^3 \geq& \max\left(\frac{1-v_1(\setaside_1)}{9} - v_1(\setaside_1), v_1(\setaside_1)\right) \cdot
        \max\left(\frac{\threshold-v_2(\setaside_2)}{9} - \frac{2v_2(\setaside_2)}{3}, v_2(\setaside_2)\right) \cdot\\
        &\max\left(\frac{\threshold-v_3(\setaside_3)}{9} - \frac{v_3(\setaside_3)}{3}, v_3(\setaside_3)\right)\\
                       \geq& \frac{1}{19} \cdot \frac{\threshold}{16} \cdot \frac{\threshold}{13}
    \end{align*}
    \endgroup
    
    For all of the inequalities, it holds that $\NSW(\alloc) \geq \sqrt[3]{\frac{\threshold^2}{15 \cdot 15 \cdot 18}}$.
\end{proof}

\lemmaCaseOne*
\begin{proof}
        First, notice that $\bundle_1 = \vargmax_1(M)$ is budget-feasible for agent 1 by definition. Since $\bundle_2$ and $\bundle_3$ result from running the $\twoalgoDB$ procedure, they are budget-feasible for agents 2 and 3 by Theorem \ref{thm:2agents} and so the returned allocation is budget-feasible.
    
    We now look at each agent's value guarantee. Note that agent 1 gets her monopoly bundle, so $v_1(\bundle_1) \geq v_1(\OPTBundle_1)$. In line~\ref{line:2agentalgo-23} agents 2 and 3 run $\twoalgoDB((\bar{\bundle}_2,\bar{\bundle}_3), M \setminus (\bundle_1 \cup \bar{\bundle}_2 \cup \bar{\bundle}_3))$, resulting in an EFx allocation $(\bundle_2, \bundle_3)$ for agents 2 and 3. We note that neither agent 2 nor 3 envies $M\setminus(\bundle_1 \cup \bar{\bundle}_2 \cup \bar{\bundle}_3)$ or otherwise they can get the envied subset resulting in an allocation with higher $\NSW$, contradicting the optimality of the allocation in line~\ref{line:maxNSW23}. Let $R$ be the set of unallocated items and $R'$ be the set of left-out items (as defined in Lemma~\ref{lemma:left-out}) after running $\twoalgoDB$ in line~\ref{line:2agentalgo-23}. We then get the following five inequalities:
    \begin{enumerate}
        \item $v_2(\bundle_2) \geq \vmax_2(\bundle_3 \setminus \set{g}) \quad \forall g \in \bundle_3$ \label{eq:agent2val-1-appendix}
        \item $v_2(\bundle_2) \geq \vmax_2(R)$ \label{eq:agent2val-2-appendix}
        \item $v_2(\bundle_2) \geq \vmax_2(R') \quad$ if $R' \subset \bundle_2$ \label{eq:agent2val-leftout-appendix}
        \item $v_2(\bundle_2) > \vmax_2(M\setminus(\bundle_1 \cup \bar{\bundle}_2 \cup \bar{\bundle}_3))$
        \item $v_2(\OPTBundle_2) \leq v_2(\bundle_1) + v_2(\bundle_2) + \vmax_2(\bundle_3) + \vmax_2(R) + \vmax_2(R') + \vmax_2(M\setminus(\bundle_1 \cup \bar{\bundle}_2 \cup \bar{\bundle}_3))$ \label{eq:agent2val-3-appendix}.
    \end{enumerate}
    Inequalities~(\ref{eq:agent2val-1-appendix}) and (\ref{eq:agent2val-2-appendix}) follow from the properties of $\twoalgoDB$, and inequality~(\ref{eq:agent2val-leftout-appendix}) follows from Lemma~\ref{lemma:left-out}. Inequality (\ref{eq:agent2val-3-appendix}) is true since the union of the bundles in the right-hand side is $M$. Since $v_2(\bundle_1) < \threshold$ by case assumption ($m_2(B_1) < \threshold$ by the if statement in line \ref{line:if-1}), $\vmax_2(g) \leq v_2(\setaside_2)$ by Lemma \ref{lemma:preprocess-maxval} and $\vmax_2(\bundle_3 \setminus \set{g}) \geq \vmax_2(\bundle_3) - \vmax_2(g)$ for all $g \in \bundle_3$ by Lemma \ref{lemma:subadditive}, we get that $v_2(\bundle_2) > \frac{v_2(\OPTBundle_2)-v_2(\setaside_2)-\threshold}{5}$.

    Similarly, for agent 3, we get the following five inequalities:
    \begin{itemize}
        \item $v_3(\bundle_3) \geq \vmax_3(\bundle_2 \setminus \set{g}) \quad \forall g \in \bundle_2$
        \item $v_3(\bundle_3) \geq \vmax_3(R)$
        \item $v_3(\bundle_3) \geq \vmax_3(R') \quad$ if $R' \subset \bundle_3$
        \item $v_3(\bundle_3) > \vmax_3(M\setminus(\bundle_1 \cup \bar{\bundle}_2 \cup \bar{\bundle}_3))$
        \item $v_3(\OPTBundle_3) \leq v_3(\bundle_1) + \vmax_3(\bundle_2) + v_3(\bundle_3) + \vmax_3(R) + \vmax_3(M\setminus(\bundle_1 \cup \bar{\bundle}_2 \cup \bar{\bundle}_3))$,
    \end{itemize}
    Thus, $v_3(\bundle_3) > \frac{v_3(\OPTBundle_3)-v_3(\setaside_3)-\threshold}{5}$, similarly as above and by case assumption which implies $v_3(\bundle_1) < \threshold$. 

    We obtain that the $\NSW$ guarantee is the following:
    \begin{align*}
        \NSW(\allocALG)^3 \geq& \max\left(v_1(\OPTBundle_1), v_1(\setaside_1)\right) \cdot\\
        &\max\left(\frac{v_2(\OPTBundle_2)-v_2(\setaside_2)-\threshold}{5}, v_2(\setaside_2)\right) \cdot\\
        &\max\left(\frac{v_3(\OPTBundle_3)-v_3(\setaside_3)-\threshold}{5}, v_3(\setaside_3)\right).
    \end{align*}
    
    Note that at least one of agents 2 and 3 will strictly not envy the other after running $\twoalgoDB$, however, to simplify the analysis, we just use the fact that they do not EFx-envy each other. Also, $R'$ can only be a subset of one of $\bundle_2$ and $\bundle_3$, whereas for simplicity we include both cases simultaneously.

    We now compute all the possible distinct allocations:
    \begingroup
    \allowdisplaybreaks
    \begin{align*}
        \NSW(\allocALG)^3 \geq& \max\left(1-3v_1(\setaside_1), v_1(\setaside_1)\right) \cdot
        \max\left(\frac{1-v_2(\setaside_2)-\threshold}{5}, v_2(\setaside_2)\right) \cdot
        \max\left(\frac{1-v_3(\setaside_3)-\threshold}{5}, v_3(\setaside_3)\right)\\
                       \geq& \frac{1}{4} \cdot \frac{1-\threshold}{6} \cdot \frac{1-\threshold}{6}\\
        \NSW(\allocALG)^3 \geq& \max\left(1-2v_1(\setaside_1), v_1(\setaside_1)\right) \cdot
        \max\left(\frac{1-2v_2(\setaside_2)-\threshold}{5}, v_2(\setaside_2)\right) \cdot
        \max\left(\frac{1-v_3(\setaside_3)-\threshold}{5}, v_3(\setaside_3)\right)\\
                       \geq& \frac{1}{3} \cdot \frac{1-\threshold}{7} \cdot \frac{1-\threshold}{6}\\
        \NSW(\allocALG)^3 \geq& \max\left(1-2v_1(\setaside_1), v_1(\setaside_1)\right) \cdot
        \max\left(\frac{1-v_2(\setaside_2)-\threshold}{5}, v_2(\setaside_2)\right) \cdot
        \max\left(\frac{1-2v_3(\setaside_3)-\threshold}{5}, v_3(\setaside_3)\right)\\
                       \geq& \frac{1}{3} \cdot \frac{1-\threshold}{6} \cdot \frac{1-\threshold}{7}\\
        \NSW(\allocALG)^3 \geq& \max\left(1-v_1(\setaside_1), v_1(\setaside_1)\right) \cdot
        \max\left(\frac{1-3v_2(\setaside_2)-\threshold}{5}, v_2(\setaside_2)\right) \cdot
        \max\left(\frac{1-v_3(\setaside_3)-\threshold}{5}, v_3(\setaside_3)\right)\\
                       \geq& \frac{1}{2} \cdot \frac{1-\threshold}{8} \cdot \frac{1-\threshold}{6}\\
        \NSW(\allocALG)^3 \geq& \max\left(1-v_1(\setaside_1), v_1(\setaside_1)\right) \cdot
        \max\left(\frac{1-v_2(\setaside_2)-\threshold}{5}, v_2(\setaside_2)\right) \cdot
        \max\left(\frac{1-3v_3(\setaside_3)-\threshold}{5}, v_3(\setaside_3)\right)\\
                       \geq& \frac{1}{2} \cdot \frac{1-\threshold}{6} \cdot \frac{1-\threshold}{8}\\
        \NSW(\allocALG)^3 \geq& \max\left(1, v_1(\setaside_1)\right) \cdot
        \max\left(\frac{1-4v_2(\setaside_2)-\threshold}{5}, v_2(\setaside_2)\right) \cdot
        \max\left(\frac{1-v_3(\setaside_3)-\threshold}{5}, v_3(\setaside_3)\right)\\
                       \geq& 1 \cdot \frac{1-\threshold}{9} \cdot \frac{1-\threshold}{6}\\
        \NSW(\allocALG)^3 \geq& \max\left(1, v_1(\setaside_1)\right) \cdot
        \max\left(\frac{1-v_2(\setaside_2)-\threshold}{5}, v_2(\setaside_2)\right) \cdot
        \max\left(\frac{1-4v_3(\setaside_3)-\threshold}{5}, v_3(\setaside_3)\right)\\
                       \geq& 1 \cdot \frac{1-\threshold}{6} \cdot \frac{1-\threshold}{9}
    \end{align*}
    \endgroup
    
    For all of the inequalities, it holds that $\NSW(\alloc) \geq \sqrt[3]{\frac{(1-\threshold)^2}{4 \cdot 6 \cdot 6}}$.

    For the envy, note that agent 1 does not envy agents 2 or 3 because she picked her monopoly bundle with respect to her budget. Agents 2 and 3 do not EFx envy each other by the property of $\twoalgoDB$. For $\threshold \leq \frac{1-\threshold}{9}$, they do not envy agent 1 since their value for $\bundle_1$ is at most $\threshold$ and they each get at least $\frac{1-\threshold}{9}$ (in the case where all the set-aside items were initially part of their maximum $\NSW$ bundle).
\end{proof}

\lemmaCaseTwo*
\begin{proof}
   Note that after line \ref{line:swap-agents}, the algorithm guarantees that $m_2(B_1) \geq \threshold$ and $m_3(B_1) < \threshold$. Since we do not perform any operations on the bundles as compared to their state in line \ref{line:return1-else}, the budget-feasibility and EFx guarantees for all agents remain the same as in Lemma \ref{lemma:return1-else}. Moreover, the individual value guarantees for agents 1 and 3 are the same as before. Only the value guarantees for agent 2 change, since now $v_2(\bundle_2) \geq v_2(\bundle_1)$. For agent 2's value, we get the following six inequalities:
    \begin{itemize}
        \item $v_2(\bundle_2) \geq v_2(\bundle_1)$
        \item $v_2(\bundle_2) \geq \vmax_2(\bundle_3 \setminus \set{g}) \quad \forall g \in \bundle_3$
        \item $v_2(\bundle_2) \geq \vmax_2(R)$
        \item $v_2(\bundle_2) \geq \vmax_2(R') \quad$ if $R' \subset \bundle_2$
        \item $v_2(\bundle_2) > \vmax_2(M\setminus(\bundle_1 \cup \bar{\bundle}_2 \cup \bar{\bundle}_3))$
        \item $v_2(\OPTBundle_2) \leq v_2(\bundle_1) + v_2(\bundle_2) + \vmax_2(\bundle_3) + \vmax_2(R) + \vmax_2(R') +  \vmax_2(M\setminus(\bundle_1 \cup \bar{\bundle}_2 \cup \bar{\bundle}_3))$
    \end{itemize}
    Similarly as in the previous proof, we get that $v_2(\bundle_2) \geq \frac{v_2(\OPTBundle_2)-v_2(\setaside_2)}{6}$.
    
    We obtain that the $\NSW$ guarantee is the following:
    \begin{align*}
        \NSW(\allocALG)^3 \geq& \max\left(v_1(\OPTBundle_1), v_1(\setaside_1)\right) \cdot\\
        &\max\left(\frac{v_2(\OPTBundle_2)-v_2(\setaside_2)}{6}, v_2(\setaside_2)\right) \cdot\\
        &\max\left(\frac{v_3(\OPTBundle_3)-v_3(\setaside_3)-\threshold}{5}, v_3(\setaside_3)\right).
    \end{align*}
    Then, similarly as in the full proof of Lemma \ref{lemma:return1-else}, we get the following possible allocations:

    \begingroup
    \allowdisplaybreaks
    \begin{align*}
        \NSW(\allocALG)^3 \geq& \max\left(1-3v_1(\setaside_1), v_1(\setaside_1)\right) \cdot
        \max\left(\frac{1-v_2(\setaside_2)}{6}, v_2(\setaside_2)\right) \cdot \max\left(\frac{1-v_3(\setaside_3)-\threshold}{5}, v_3(\setaside_3)\right)\\
                       \geq& \frac{1}{4} \cdot \frac{1}{7} \cdot \frac{1-\threshold}{6}\\
        \NSW(\allocALG)^3 \geq& \max\left(1-2v_1(\setaside_1), v_1(\setaside_1)\right) \cdot
        \max\left(\frac{1-2v_2(\setaside_2)}{6}, v_2(\setaside_2)\right) \cdot \max\left(\frac{1-v_3(\setaside_3)-\threshold}{5}, v_3(\setaside_3)\right)\\
                       \geq& \frac{1}{3} \cdot \frac{1}{8} \cdot \frac{1-\threshold}{6}\\
        \NSW(\allocALG)^3 \geq& \max\left(1-2v_1(\setaside_1), v_1(\setaside_1)\right) \cdot
        \max\left(\frac{1-v_2(\setaside_2)}{6}, v_2(\setaside_2)\right) \cdot \max\left(\frac{1-2v_3(\setaside_3)-\threshold}{5}, v_3(\setaside_3)\right)\\
                       \geq& \frac{1}{3} \cdot \frac{1}{7} \cdot \frac{1-\threshold}{7}\\
        \NSW(\allocALG)^3 \geq& \max\left(1-v_1(\setaside_1), v_1(\setaside_1)\right) \cdot
        \max\left(\frac{1-3v_2(\setaside_2)}{6}, v_2(\setaside_2)\right) \cdot \max\left(\frac{1-v_3(\setaside_3)-\threshold}{5}, v_3(\setaside_3)\right)\\
                       \geq& \frac{1}{2} \cdot \frac{1}{9} \cdot \frac{1-\threshold}{6}\\
        \NSW(\allocALG)^3 \geq& \max\left(1-v_1(\setaside_1), v_1(\setaside_1)\right) \cdot
        \max\left(\frac{1-v_2(\setaside_2)}{6}, v_2(\setaside_2)\right) \cdot \max\left(\frac{1-3v_3(\setaside_3)-\threshold}{5}, v_3(\setaside_3)\right)\\
                       \geq& \frac{1}{2} \cdot \frac{1}{7} \cdot \frac{1-\threshold}{8}\\
        \NSW(\allocALG)^3 \geq& \max\left(1-v_1(\setaside_1), v_1(\setaside_1)\right) \cdot
        \max\left(\frac{1-2v_2(\setaside_2)}{6}, v_2(\setaside_2)\right) \cdot \max\left(\frac{1-2v_3(\setaside_3)-\threshold}{5}, v_3(\setaside_3)\right)\\
                       \geq& \frac{1}{2} \cdot \frac{1}{8} \cdot \frac{1-\threshold}{7}\\
        \NSW(\allocALG)^3 \geq& \max\left(1, v_1(\setaside_1)\right) \cdot
        \max\left(\frac{1-4v_2(\setaside_2)}{6}, v_2(\setaside_2)\right) \cdot \max\left(\frac{1-v_3(\setaside_3)-\threshold}{5}, v_3(\setaside_3)\right)\\
                       \geq& 1 \cdot \frac{1}{10} \cdot \frac{1-\threshold}{6}\\
        \NSW(\allocALG)^3 \geq& \max\left(1, v_1(\setaside_1)\right) \cdot
        \max\left(\frac{1-v_2(\setaside_2)}{6}, v_2(\setaside_2)\right) \cdot \max\left(\frac{1-4v_3(\setaside_3)-\threshold}{5}, v_3(\setaside_3)\right)\\
                       \geq& 1 \cdot \frac{1}{7} \cdot \frac{1-\threshold}{9}
    \end{align*}
    \endgroup
    
    For all of the inequalities, it holds that $\NSW(\alloc) \geq \sqrt[3]{\frac{1-\threshold}{4 \cdot 7 \cdot 6}}$.
\end{proof}

\lemmaCaseThree*
\begin{proof}
    For budget-feasibility, note that $\bundle_1^*$ is budget-feasible for agent 1 since both $\bundle_1'$ and $\bundle_1''$ are budget-feasible for her. Agent 2 gets $\bundle_2' \subseteq \bundle_1$ by the outcome of $\twoalgoDB$ in line~\ref{line:2agentalgo-12}, and since $c(\bundle_1) \leq B_1 \leq B_2$, $\bundle_2'$ is budget-feasible for agent 2. Agent 3 also gets a budget-feasible bundle in $\bundle_3$ by the outcome of $\twoalgoDB$ in line~\ref{line:2agentalgo-23}, and since $\bundle_3^* \subseteq \bundle_3$, the returned allocation is budget-feasible.
    
    We now consider the individual value guarantees of the agents. Note that $\bundle_1$ is agent 1's monopoly bundle. By running $\twoalgoDB$ in line \ref{line:2agentalgo-12}, we get that $v_1(\bundle_1') \geq v_1(\bundle_2' \setminus \set{g})$ for all $g \in \bundle_2'$, and $v_1(R_1) = 0$, where $R_1$ is the set of unallocated items after running $\twoalgoDB$ here. Since $v_1(\bundle_1') + v_1(\bundle_2') = v_1(\bundle_1) \geq v_1(\OPTBundle_1)$ and agent 1 has the option to increase her value by picking $\bundle_1''$, we get $v_1(\bundle_1^*) \geq \frac{v_1(\OPTBundle_1) - v_1(\setaside_1)}{2}$. Note that there are no left-out items (as defined by Lemma~\ref{lemma:left-out}).

    For agent 2, we have $v_2(\bundle_2) \geq \vmax_2(\bundle_3 \setminus \set{g})$ for all $g \in \bundle_3$ from running $\twoalgoDB$ on agents 2 and 3 in line \ref{line:2agentalgo-23}. Since $v_2(\bundle_2) < v_2(\bundle_1)$ by case assumption and $v_2(\OPTBundle_2) \leq v_2(\bundle_1) + v_2(\bundle_2) + \vmax_2(\bundle_3) + \vmax_2(R_2) + \vmax_2(R_2') + \vmax_2(M\setminus(\bundle_1 \cup \bar{\bundle}_2 \cup \bar{\bundle}_3))$, where $R_2$ is the set of unallocated items and $R_2'$ is the set of left-out items after running $\twoalgoDB$ in line~\ref{line:2agentalgo-23}, we have $v_2(\bundle_1) > \frac{v_2(\OPTBundle_2)-v_2(\setaside_2)}{6}$; this is shown in more detail in the previous lemma. By the outcome of $\twoalgoDB$ in line \ref{line:2agentalgo-12}, we get $v_2(\bundle_2') \geq \frac{\vmax_2(\bundle_1)}{2} > \frac{v_2(\OPTBundle_2)-v_2(\setaside_2)}{12}$.
    
    Now consider the value of agent 3. Similarly, we know that $v_3(\OPTBundle_3) \leq v_3(\bundle_1) + \vmax_3(\bundle_2) + v_3(\bundle_3) + \vmax_3(R_2) + \vmax_3(R_2') + \vmax_3(M\setminus(\bundle_1 \cup \bar{\bundle}_2 \cup \bar{\bundle}_3))$, and $v_3(\bundle_1) < \threshold$ by case assumption, and $v_3(\bundle_3) \geq \vmax_3(\bundle_2 \setminus \set{g})$, for all $g \in \bundle_2$, and $v_3(\bundle_3) \geq \vmax_3(R_2)$ by the properties of $\twoalgoDB$. Thus, we get that $v_3(\bundle_3) > \frac{v_3(\OPTBundle_3)-v_3(\setaside_3)-\threshold}{5}$; this is shown in more detail in the previous lemma. Assume without loss of generality that $v_3(\bundle_3'') > v_3(\bundle_3')$. Since $v_3(\bundle_3^*) \geq \frac{v_3(\bundle_3)}{2} - v_3(\setaside_3)$ in line \ref{line:x3*def} (due to round-robin producing an EF1 allocation as shown in \cite{CKMPSW2019}), and $v_3(\bundle_1'') < \threshold$ by case assumption ($m_3(B_1) < \threshold$ since we did not return in line \ref{line:return1-else}), we have that the final value of agent 3 for $\bundle_3^*$ after line \ref{line:agent1-remove} is $v_3(\bundle_3^*) > \frac{v_3(\bundle_3)}{2} - v_3(\setaside_3) - \threshold = \frac{v_3(\OPTBundle_3)-v_3(\setaside_3)-\threshold}{10} - v_3(\setaside_3) - \threshold = \frac{v_3(\OPTBundle_3)-11v_3(\setaside_3)-11\threshold}{10}$.

    Since at the end of Procedure \ref{alg:3agents} each agent can choose between their bundle returned by Procedure \ref{alg:3agents-else} and their set-aside item, we obtain the following $\NSW$ guarantee:
    \begin{align*}
        \NSW(\allocALG)^3 \geq& \max\left(\frac{v_1(\OPTBundle_1) - v_1(\setaside_1)}{2}, v_1(\setaside_1)\right) \cdot\\
        &\max\left(\frac{v_2(\OPTBundle_2)-v_2(\setaside_2)}{12}, v_2(\setaside_2)\right) \cdot\\
        &\max\left(\frac{v_3(\OPTBundle_3)-11v_3(\setaside_3)-11\threshold}{10}, v_3(\setaside_3)\right).
    \end{align*}

    Similarly as in the full proof of Lemma \ref{lemma:return1-else}, we get the following possible allocations:

    \begingroup
    \allowdisplaybreaks
    \begin{align*}
        \NSW(\allocALG)^3 \geq& \max\left(\frac{1-4v_1(\setaside_1)}{2}, v_1(\setaside_1)\right) \cdot
        \max\left(\frac{1-v_2(\setaside_2)}{12}, v_2(\setaside_2)\right) \cdot\max\left(\frac{1-11v_3(\setaside_3)-11\threshold}{10}, v_3(\setaside_3)\right)\\
                      \geq& \frac{1}{6} \cdot \frac{1}{13} \cdot \frac{1-11\threshold}{21}\\
        \NSW(\allocALG)^3 \geq& \max\left(\frac{1-3v_1(\setaside_1)}{2}, v_1(\setaside_1)\right) \cdot
        \max\left(\frac{1-2v_2(\setaside_2)}{12}, v_2(\setaside_2)\right) \cdot\max\left(\frac{1-11v_3(\setaside_3)-11\threshold}{10}, v_3(\setaside_3)\right)\\
                      \geq& \frac{1}{5} \cdot \frac{1}{14} \cdot \frac{1-11\threshold}{21}\\
        \NSW(\allocALG)^3 \geq& \max\left(\frac{1-3v_1(\setaside_1)}{2}, v_1(\setaside_1)\right) \cdot
        \max\left(\frac{1-v_2(\setaside_2)}{12}, v_2(\setaside_2)\right) \cdot\max\left(\frac{1-12v_3(\setaside_3)-11\threshold}{10}, v_3(\setaside_3)\right)\\
                      \geq& \frac{1}{5} \cdot \frac{1}{13} \cdot \frac{1-11\threshold}{22}\\
        \NSW(\allocALG)^3 \geq& \max\left(\frac{1-2v_1(\setaside_1)}{2}, v_1(\setaside_1)\right) \cdot
        \max\left(\frac{1-3v_2(\setaside_2)}{12}, v_2(\setaside_2)\right) \cdot\max\left(\frac{1-11v_3(\setaside_3)-11\threshold}{10}, v_3(\setaside_3)\right)\\
                      \geq& \frac{1}{4} \cdot \frac{1}{15} \cdot \frac{1-11\threshold}{21}\\
        \NSW(\allocALG)^3 \geq& \max\left(\frac{1-2v_1(\setaside_1)}{2}, v_1(\setaside_1)\right) \cdot
        \max\left(\frac{1-v_2(\setaside_2)}{12}, v_2(\setaside_2)\right) \cdot\max\left(\frac{1-11v_3(\setaside_3)-11\threshold}{10}, v_3(\setaside_3)\right)\\
                      \geq& \frac{1}{4} \cdot \frac{1}{13} \cdot \frac{1-11\threshold}{23}\\
        \NSW(\allocALG)^3 \geq& \max\left(\frac{1-2v_1(\setaside_1)}{2}, v_1(\setaside_1)\right) \cdot
        \max\left(\frac{1-2v_2(\setaside_2)}{12}, v_2(\setaside_2)\right) \cdot\max\left(\frac{1-12v_3(\setaside_3)-11\threshold}{10}, v_3(\setaside_3)\right)\\
                      \geq& \frac{1}{4} \cdot \frac{1}{14} \cdot \frac{1-11\threshold}{22}\\
        \NSW(\allocALG)^3 \geq& \max\left(\frac{1-v_1(\setaside_1)}{2}, v_1(\setaside_1)\right) \cdot
        \max\left(\frac{1-4v_2(\setaside_2)}{12}, v_2(\setaside_2)\right) \cdot\max\left(\frac{1-11v_3(\setaside_3)-11\threshold}{10}, v_3(\setaside_3)\right)\\
                      \geq& \frac{1}{3} \cdot \frac{1}{16} \cdot \frac{1-11\threshold}{21}\\
        \NSW(\allocALG)^3 \geq& \max\left(\frac{1-v_1(\setaside_1)}{2}, v_1(\setaside_1)\right) \cdot
        \max\left(\frac{1-v_2(\setaside_2)}{12}, v_2(\setaside_2)\right) \cdot\max\left(\frac{1-14v_3(\setaside_3)-11\threshold}{10}, v_3(\setaside_3)\right)\\
                      \geq& \frac{1}{3} \cdot \frac{1}{13} \cdot \frac{1-11\threshold}{24}
    \end{align*}
    \endgroup
    
    For all of the inequalities, it holds that $\NSW(\alloc) \geq \sqrt[3]{\frac{1-11\threshold}{6 \cdot 13 \cdot 21}}$.
    
    We now show that the allocation is EFx. Agent 1 does not EFx envy agent 2 since they run $\twoalgoDB$ in line \ref{line:2agentalgo-12} and $v_1(\bundle_1^*) \geq v_1(\bundle_1')$. She also does not envy agent 3 since she can pick the most valuable budget-feasible bundle out of $\bundle_3^*$ in line~\ref{line:agent1-pickbest}.
    
    Without loss, assume that agent 2 removes $\bundle_3''$ from $\bundle_3$, i.e. $\bundle_3^* = \bundle_3'$ in line \ref{line:x3*def}. Note that $v_2(\bundle_2) \geq \vmax_2(\bundle_3 \setminus \set{g})$ for all $g \in \bundle_3$ by the outcome of $\twoalgoDB$ in line \ref{line:2agentalgo-23}. By Theorem \ref{thm:2agents}, running $\twoalgoDB$ in line \ref{line:2agentalgo-12} guarantees that $v_2(\bundle_2') \geq \frac{v_2(\bundle_2)}{2} \geq \frac{\vmax_2(\bundle_3 \setminus \set{g})}{2} \geq \vmax_2(\bundle_3' \setminus \set{g'}) = \vmax_2(\bundle_3^* \setminus \set{g'})$ for all $g' \in \bundle_3^*$, where the first inequality is true by case assumption since $v_2(\bundle_2) < v_2(\bundle_1)$ (the procedure did not return in line \ref{line:return2-else}). Thus, agent 2 does not EFx envy agent 3. If agent 1 picks $\bundle_1^* = \bundle_1''$, agent 2 does not EFx envy her since this bundle is a subset of $\bundle_3^*$, and otherwise, the outcome of $\twoalgoDB$ in line \ref{line:2agentalgo-12} handles the envy.
    
    Lastly, consider the envy of agent 3. By definition of Procedure~\ref{alg:3agents}, we have $m_3(B_1) < \threshold$, which implies that $v_3(\bundle_1^*) < \threshold$ and $v_3(\bundle_2') < \threshold$, since both $\bundle_1^*$ and $\bundle_2'$ cost at most $B_1$. Thus, for $\max(\frac{v_3(\OPTBundle_3)-11v_3(\setaside_3)-11\threshold}{10}, v_3(\setaside_3)) \geq \threshold$ for $\threshold \leq \frac{1}{35}$, the returned allocation is EFx for agent 3.
\end{proof}

\end{document}